\newtheorem{definition}{Definition}[section]
\newtheorem{example}{Example}[section]
\newtheorem{lemma}{Lemma}[section]
\newtheorem{theorem}[lemma]{Theorem}
\newtheorem{corollary}[lemma]{Corollary}
\newtheorem{proposition}[lemma]{Proposition}
\newtheorem{algorithm}{Algorithm}[section]
\begin{document}
\title[Termination Prediction for General Logic Programs]
{Termination Prediction for General Logic Programs}
\author[Y.-D. Shen, D.De Schreye, D.Voets]
{YI-DONG SHEN \\
State Key Laboratory of Computer Science, Institute of Software,
Chinese Academy of Sciences, Beijing 100190, China\\
\email{ydshen@ios.ac.cn}
\and DANNY DE SCHREYE, DEAN VOETS\\
Department of Computer Science, Celestijnenlaan 200 A, B-3001 Heverlee, Belgium\\ 
\email{\{danny.deschreye, dean.voets\}@cs.kuleuven.ac.be}
}

\submitted{10 October 2007}
 \revised{11 September 2008, 3 April 2009}
 \accepted{12 May 2009}

\maketitle

\begin{abstract}
We present a heuristic framework
for attacking the undecidable termination problem of
logic programs, as an alternative to current termination/non-termination
proof approaches. We introduce an idea of termination prediction,
which predicts termination of a logic program in case that
neither a termination nor a non-termination proof is applicable. 
We establish a necessary and sufficient characterization of infinite
(generalized) SLDNF-derivations with arbitrary (concrete or moded) queries, and 
develop an algorithm that predicts termination of 
general logic programs with arbitrary non-floundering queries.
We have implemented a termination prediction tool and obtained 
quite satisfactory experimental results. Except for five programs
which break the experiment time limit, our prediction is $100\%$
correct for all 296 benchmark programs of the Termination Competition 2007,
of which eighteen programs cannot be proved by any of the existing
state-of-the-art analyzers like
AProVE07, NTI, Polytool and TALP. 
\end{abstract} 
\begin{keywords} 
Logic programming, termination analysis, loop checking, moded queries,
termination prediction.
\end{keywords}  

\section{Introduction} 
\label{intr}
Termination is a fundamental problem in logic programming 
with SLDNF-resolution as the query evaluation mechanism \cite{clark78,Ld87}, 
which has been extensively studied
in the literature (see, e.g., \citeN{DD93} for a survey and some recent papers 
\cite{Apt93,BCGGV05,DDV99,GC05,LS97,MN01,PM06,SGST06}). 
Since the termination problem is undecidable, existing algorithms/tools
either focus on computing sufficient termination conditions which once 
satisfied, lead to a positive conclusion {\em terminating} 
\cite{AZ96,BCRE02,DLSS01,GC05,GST06,LSS97,Mar96-1,MB05,MN01,OCM00,SGST06},
or on computing sufficient non-termination conditions
which lead to a negative conclusion {\em non-terminating} \cite{PM06,Pay06}. 
For convenience, we call the former computation a 
{\em termination proof}, and the latter a {\em non-termination proof}.
Due to the nature of undecidability, there must be situations in which neither
a termination proof nor a non-termination proof can apply;
i.e., no sufficient termination/non-termination conditions 
are satisfied so that the user would get no conclusion
(see the results of the Termination Competition 2007 which is available at
http://www.lri.fr/\verb+~+marche/termination-competition/2007).
We observe that in such a situation, it is particularly useful
to compute a heuristic conclusion indicating likely 
termination or likely non-termination, which guides the user
to continue to improve his program towards termination.
To the best of our knowledge, however,
there is no existing heuristic approach available.  
The goal of the current paper is then to develop such a heuristic framework.

We propose an idea of {\em termination prediction}, as depicted 
in Figure \ref{framework}. In the case that neither a termination
nor a non-termination proof is applicable, we appeal to a heuristic algorithm 
to predict possible termination or non-termination. 
The prediction applies to general logic programs with 
concrete or moded queries.

\begin{figure}[htb]
\begin{center}
\setlength{\unitlength}{3947sp}%
\begingroup\makeatletter\ifx\SetFigFont\undefined%
\gdef\SetFigFont#1#2#3#4#5{%
  \reset@font\fontsize{#1}{#2pt}%
  \fontfamily{#3}\fontseries{#4}\fontshape{#5}%
  \selectfont}%
\fi\endgroup%
\begin{picture}(5901,4728)(979,-4477)
\thinlines
{\color[rgb]{0,0,0}\put(991,-3436){\framebox(2610,900){}}
}%
{\color[rgb]{0,0,0}\put(5071,-1720){\oval(240,240)[bl]}
\put(5071,-1456){\oval(240,240)[tl]}
\put(6748,-1720){\oval(240,240)[br]}
\put(6748,-1456){\oval(240,240)[tr]}
\put(5071,-1840){\line( 1, 0){1677}}
\put(5071,-1336){\line( 1, 0){1677}}
\put(4951,-1720){\line( 0, 1){264}}
\put(6868,-1720){\line( 0, 1){264}}
}%
{\color[rgb]{0,0,0}\put(1471,-4345){\oval(240,240)[bl]}
\put(1471,-4081){\oval(240,240)[tl]}
\put(3148,-4345){\oval(240,240)[br]}
\put(3148,-4081){\oval(240,240)[tr]}
\put(1471,-4465){\line( 1, 0){1677}}
\put(1471,-3961){\line( 1, 0){1677}}
\put(1351,-4345){\line( 0, 1){264}}
\put(3268,-4345){\line( 0, 1){264}}
}%
{\color[rgb]{0,0,0}\put(5071,-370){\oval(240,240)[bl]}
\put(5071,-106){\oval(240,240)[tl]}
\put(6748,-370){\oval(240,240)[br]}
\put(6748,-106){\oval(240,240)[tr]}
\put(5071,-490){\line( 1, 0){1677}}
\put(5071, 14){\line( 1, 0){1677}}
\put(4951,-370){\line( 0, 1){264}}
\put(6868,-370){\line( 0, 1){264}}
}%
{\color[rgb]{0,0,0}\put(991,-661){\framebox(2610,900){}}
}%
{\color[rgb]{0,0,0}\put(991,-2011){\framebox(2610,900){}}
}%
{\color[rgb]{0,0,0}\put(5071,-3145){\oval(240,240)[bl]}
\put(5071,-2881){\oval(240,240)[tl]}
\put(6748,-3145){\oval(240,240)[br]}
\put(6748,-2881){\oval(240,240)[tr]}
\put(5071,-3265){\line( 1, 0){1677}}
\put(5071,-2761){\line( 1, 0){1677}}
\put(4951,-3145){\line( 0, 1){264}}
\put(6868,-3145){\line( 0, 1){264}}
}%
{\color[rgb]{0,0,0}\put(3601,-1561){\vector( 1, 0){1284}}
}%
{\color[rgb]{0,0,0}\put(2251,-2011){\vector( 0,-1){444}}
}%
{\color[rgb]{0,0,0}\put(3601,-2986){\vector( 1, 0){1284}}
}%
{\color[rgb]{0,0,0}\put(2251,-3436){\vector( 0,-1){444}}
}%
{\color[rgb]{0,0,0}\put(3601,-211){\vector( 1, 0){1284}}
}%
{\color[rgb]{0,0,0}\put(2251,-661){\vector( 0,-1){444}}
}%
\put(3976,-1486){\makebox(0,0)[lb]{\smash{{\SetFigFont{8}{9.6}{\rmdefault}{\mddefault}{\updefault}{\color[rgb]{0,0,0}Yes}%
}}}}
\put(2326,-2236){\makebox(0,0)[lb]{\smash{{\SetFigFont{8}{9.6}{\rmdefault}{\mddefault}{\updefault}{\color[rgb]{0,0,0}No}%
}}}}
\put(1426,-1486){\makebox(0,0)[lb]{\smash{{\SetFigFont{10}{12.0}{\rmdefault}{\mddefault}{\updefault}{\color[rgb]{0,0,0}{\bf Non-Termination Proof}}%
}}}}
\put(3976,-2911){\makebox(0,0)[lb]{\smash{{\SetFigFont{8}{9.6}{\rmdefault}{\mddefault}{\updefault}{\color[rgb]{0,0,0}Yes}%
}}}}
\put(2326,-3661){\makebox(0,0)[lb]{\smash{{\SetFigFont{8}{9.6}{\rmdefault}{\mddefault}{\updefault}{\color[rgb]{0,0,0}No}%
}}}}
\put(1501,-2911){\makebox(0,0)[lb]{\smash{{\SetFigFont{10}{12.0}{\rmdefault}{\mddefault}{\updefault}{\color[rgb]{0,0,0}{\bf Termination Prediction}}%
}}}}
\put(3976,-136){\makebox(0,0)[lb]{\smash{{\SetFigFont{8}{9.6}{\rmdefault}{\mddefault}{\updefault}{\color[rgb]{0,0,0}Yes}%
}}}}
\put(2326,-886){\makebox(0,0)[lb]{\smash{{\SetFigFont{8}{9.6}{\rmdefault}{\mddefault}{\updefault}{\color[rgb]{0,0,0}No}%
}}}}
\put(1501,-136){\makebox(0,0)[lb]{\smash{{\SetFigFont{10}{12.0}{\rmdefault}{\mddefault}{\updefault}{\color[rgb]{0,0,0}{\bf Termination Proof}}%
}}}}
\put(1576,-4261){\makebox(0,0)[lb]{\smash{{\SetFigFont{9}{10.8}{\rmdefault}{\mddefault}{\updefault}{\color[rgb]{0,0,0}$predicted$-$terminating$}%
}}}}
\put(1201,-436){\makebox(0,0)[lb]{\smash{{\SetFigFont{8}{9.6}{\rmdefault}{\mddefault}{\updefault}{\color[rgb]{0,0,0}Sufficient termination conditions satisfied?}%
}}}}
\put(1051,-1786){\makebox(0,0)[lb]{\smash{{\SetFigFont{8}{9.6}{\rmdefault}{\mddefault}{\updefault}{\color[rgb]{0,0,0}Sufficient non-termination conditions satisfied?}%
}}}}
\put(1126,-3211){\makebox(0,0)[lb]{\smash{{\SetFigFont{8}{9.6}{\rmdefault}{\mddefault}{\updefault}{\color[rgb]{0,0,0}Potential infinite SLDNF-derivations found?}%
}}}}
\put(5026,-3061){\makebox(0,0)[lb]{\smash{{\SetFigFont{9}{10.8}{\rmdefault}{\mddefault}{\updefault}{\color[rgb]{0,0,0}$predicted$-$non$-$terminating$}%
}}}}
\put(5401,-1636){\makebox(0,0)[lb]{\smash{{\SetFigFont{9}{10.8}{\rmdefault}{\mddefault}{\updefault}{\color[rgb]{0,0,0}$non$-$terminating$}%
}}}}
\put(5551,-286){\makebox(0,0)[lb]{\smash{{\SetFigFont{9}{10.8}{\rmdefault}{\mddefault}{\updefault}{\color[rgb]{0,0,0}$terminating$}%
}}}}
\end{picture}%
\end{center} 
\caption{A framework for handling the termination problem.}
\label{framework} 
\end{figure} 

We develop a framework for predicting termination of general logic
programs with arbitrary (i.e., concrete or moded) queries. 
The basic idea is that we establish a characterization of 
infinite (generalized) SLDNF-derivations with arbitrary queries. 
Then based on the characterization, we design 
a complete loop checking mechanism, 
which cuts all infinite SLDNF-derivations.
Given a logic program and a query, 
we evaluate the query by applying SLDNF-resolution while performing loop checking.
If the query evaluation proceeds without encountering potential infinite derivations,
we predict {\em terminating} for this query; 
otherwise we predict {\em non-terminating}.

The core of our termination prediction is a characterization of infinite
SLDNF-derivations with arbitrary queries. 
In \citeN{shen-tocl}, a characterization
is established for general logic programs with concrete queries. 
This is far from enough for termination prediction; a characterization of infinite
SLDNF-derivations for moded queries is required. 
Moded queries are the most commonly used query form in static termination analysis.
A moded query contains (abstract) atoms like $p({\cal I},T)$
where $T$ is a term (i.e., a constant, variable or function) and ${\cal I}$
is an input mode. An {\em input mode} stands for an arbitrary ground (i.e.\ variable-free)
term, so that to prove that a logic program terminates
for a moded query $p({\cal I},T)$ is to prove that 
the program terminates for any (concrete) query $p(t,T)$
where $t$ is a ground term. 

It is nontrivial to characterize infinite
SLDNF-derivations with moded queries. 
The first challenge we must address is how to formulate an SLDNF-derivation
for a moded query $Q_0$, as the standard SLDNF-resolution
is only for concrete queries \cite{clark78,Ld87}. 
We will introduce a framework called a {\em moded-query forest},
which consists of all (generalized) SLDNF-trees 
rooted at an instance of $Q_0$ (the instance is $Q_0$
with each input mode replaced by a ground term). 
An SLDNF-derivation for $Q_0$ is then defined over
the moded-query forest such that a logic program $P$
terminates for $Q_0$ if and only if the moded-query 
forest contains no infinite SLDNF-derivations.

A moded-query forest may have an infinite number of SLDNF-trees, 
so it is infeasible for us to predict
termination of a logic program by traversing the moded-query forest.
To handle this challenge, we will introduce a novel compact approximation for a 
moded-query forest, called a {\em moded generalized SLDNF-tree}. The key idea is to 
treat an input mode as a special meta-variable in the way that during query evaluation,
it can be substituted by a constant or function, but cannot be substituted
by an ordinary variable. 
As a result, SLDNF-derivations for a moded query can be 
constructed in the same way as the ones for a concrete query.
A characterization of infinite SLDNF-derivations for moded queries is then established
in terms of some key properties of a moded generalized SLDNF-tree.

We have implemented a termination prediction tool and obtained 
quite satisfactory experimental results. Except for five programs
which break the experiment time limit, our prediction is $100\%$
correct for all 296 benchmark programs of the Termination Competition 2007,
of which eighteen programs cannot be proved by any of the existing
state-of-the-art analyzers like AProVE07, NTI, Polytool and TALP. 

The paper is organized as follows. Section \ref{pre} 
reviews some basic concepts including generalized SLDNF-trees.
Sections \ref{sec2} and \ref{sec3} present a characterization of 
infinite SLDNF-derivations for concrete and moded queries,
respectively. Section \ref{sec4} introduces a new loop checking mechanism,
and based on it develops an algorithm that predicts termination of 
general logic programs with arbitrary queries.
The termination prediction method is illustrated with
representative examples including ones borrowed from the Termination Competition 2007. 
Section \ref{implementation} describes the implementation
of our termination prediction algorithm and presents experimental results
over the programs of the Termination Competition 2007.
Section \ref{sec5} mentions related work, and Section \ref{sec6} concludes.

\section{Preliminaries}
\label{pre}

We assume the reader is familiar with standard terminology of 
logic programs, in particular with SLDNF-resolution, as described in \citeN{Ld87}.
Variables begin with a capital letter $X,Y,Z,U,V$ or $I$, and predicate, function 
and constant symbols with a lower case letter. 
A term is a constant, a variable, or a function
of the form $f(T_1, ..., T_m)$ where $f$ is a function symbol 
and each $T_i$ is a term. For simplicity, 
we use $\overline{T}$ to denote a tuple of terms $T_1, ..., T_m$.
An atom is of the form 
$p(\overline{T})$ where $p$ is a predicate symbol. 
Let $A$ be an atom/term. 
The size of $A$, denoted $|A|$, is the 
number of occurrences of function symbols, variables and constants in $A$.
Two atoms are called {\em variants} if they are the same
up to variable renaming. 
A literal is an atom $A$ or the negation $\neg A$ of $A$.
 
A (general) logic program $P$ is a finite set
of clauses of the form $A\leftarrow L_1,..., L_n$,
where $A$ is an atom and each $L_i$ is a literal.
Throughout the paper, we consider only Herbrand models.
The Herbrand universe and Herbrand base of $P$ are denoted by
$HU(P)$ and $HB(P)$, respectively.
 
A goal $G_i$ is a headless clause
$\leftarrow L_1,..., L_n$ where each literal $L_j$ is called a subgoal.
The goal, $G_0 =\leftarrow Q_0$, for a query $Q_0$ is called
a top goal. Without loss of generality, we assume
that $Q_0$ consists only of one atom. 
$Q_0$ is a {\em moded query} if some arguments of $Q_0$ 
are input modes (in this case, $Q_0$ is called an {\em abstract} atom); 
otherwise, it is a {\em concrete query}. 
An input mode always begins with a letter $\cal I$.  

Let $P$ be a logic program and $G_0$ a top goal. 
$G_0$ is evaluated by building a {\em generalized SLDNF-tree} $GT_{G_0}$ 
as defined in \citeN{shen-tocl}, 
in which each node is represented by $N_i:G_i$ where $N_i$ is the name 
of the node and $G_i$ is a goal attached to the node.
We do not reproduce the definition of a generalized SLDNF-tree.
Roughly speaking, $GT_{G_0}$ is the set of standard 
SLDNF-trees for $P\cup \{G_0\}$ augmented  
with an ancestor-descendant relation on their subgoals.
Let $L_i$ and $L_j$ be the selected subgoals 
at two nodes $N_i$ and $N_j$, respectively.
$L_i$ is an {\em ancestor} of $L_j$,
denoted $L_i\prec_{anc} L_j$, if the proof of $L_i$ 
goes via the proof of $L_j$.  
Throughout the paper, we choose to use the best-known 
{\em depth-first, left-most} control strategy, as is used in Prolog,
to select nodes/goals and subgoals (it can be adapted
to any other fixed control strategies).  
So by the {\em selected subgoal} in each node $N_i:\leftarrow L_1,..., L_n$,
we refer to the left-most subgoal $L_1$.

Recall that in SLDNF-resolution, 
let $L_i=\neg A$ be a ground negative subgoal selected at $N_i$, 
then (by the negation-as-failure rule \cite{clark78}) a subsidiary 
child SLDNF-tree $T_{N_{i+1}:\leftarrow A}$ 
rooted at $N_{i+1}:\leftarrow A$ will be built to solve $A$.
In a generalized SLDNF-tree $GT_{G_0}$, 
such parent and child SLDNF-trees are connected from $N_i$ to $N_{i+1}$
via a dotted edge ``$\cdot\cdot\cdot\triangleright$" 
(called a {\em negation arc}), and $A$ at $N_{i+1}$
inherits all ancestors of $L_i$ at $N_i$. 
Therefore, a path of a generalized SLDNF-tree may come  
across several SLDNF-trees through dotted edges.
Any such a path starting at the root node $N_0:G_0$ 
of $GT_{G_0}$ is called a {\em generalized  
SLDNF-derivation}. 

We do not consider {\em floundering} queries; i.e.,
we assume that no non-ground negative subgoals are
selected at any node of a generalized SLDNF-tree (see \citeN{shen-tocl}).

Another feature of a generalized SLDNF-tree $GT_{G_0}$ 
is that each subsidiary child SLDNF-tree $T_{N_{i+1}:\leftarrow A}$ in $GT_{G_0}$ 
terminates (i.e. stops expanding its nodes) at the first success leaf.
The intuition behind this is that it is absolutely 
unnecessary to exhaust the remaining branches  
because they would never generate any new answers for $A$ (since $A$ is ground). 
In fact, Prolog executes the same pruning by using a {\em cut} 
operator to skip the remaining branches 
once the first success leaf is generated 
(e.g. see SICStus Prolog at http://www.sics.se
/sicstus/docs/latest4/pdf/sicstus.pdf).
To illustrate, consider the following logic program and top goal: 
\begin{tabbing} 
$\qquad$ \= $P_0:$ $\quad$ \= $p\leftarrow \neg q$. \`$C_{p_1}$\\ 
\>\> $q$.       \`$C_{q_1}$ \\ 
\>\> $q \leftarrow q$.            \`$C_{q_2}$ \\ 
\>          $G_0:$ \> $\leftarrow p.$ 
\end{tabbing} 
The generalized SLDNF-tree 
$GT_{G_0}$ for $P_0\cup \{G_0\}$ is depicted in 
Figure \ref{fig00}. Note that the subsidiary child SLDNF-tree 
$T_{N_2:\leftarrow q}$ terminates at the first success leaf $N_3$,
leaving $N_4$ not further expanded. As a result, all generalized SLDNF-derivations
in $GT_{G_0}$ are finite.
\begin{figure}[h] 
\centering 
\setlength{\unitlength}{3947sp}%
\begingroup\makeatletter\ifx\SetFigFont\undefined%
\gdef\SetFigFont#1#2#3#4#5{%
  \reset@font\fontsize{#1}{#2pt}%
  \fontfamily{#3}\fontseries{#4}\fontshape{#5}%
  \selectfont}%
\fi\endgroup%
\begin{picture}(2653,1692)(1951,-982)
\thinlines
{\color[rgb]{0,0,0}\multiput(4201, 14)(4.68750,-7.03125){33}{\makebox(1.6667,11.6667){\SetFigFont{5}{6}{\rmdefault}{\mddefault}{\updefault}.}}
\put(4351,-211){\vector( 2,-3){0}}
}%
{\color[rgb]{0,0,0}\put(3601,-211){\vector(-2,-3){0}}
\multiput(3751, 14)(-37.50000,-56.25000){4}{\makebox(1.6667,11.6667){\SetFigFont{5}{6}{\rmdefault}{\mddefault}{\updefault}.}}
}%
{\color[rgb]{0,0,0}\multiput(3226,-511)(-4.68750,-7.03125){33}{\makebox(1.6667,11.6667){\SetFigFont{5}{6}{\rmdefault}{\mddefault}{\updefault}.}}
\put(3076,-736){\vector(-2,-3){0}}
}%
{\color[rgb]{0,0,0}\multiput(3526,-511)(4.68750,-7.03125){33}{\makebox(1.6667,11.6667){\SetFigFont{5}{6}{\rmdefault}{\mddefault}{\updefault}.}}
\put(3676,-736){\vector( 2,-3){0}}
}%
{\color[rgb]{0,0,0}\put(2626,-970){\dashbox{60}(1368,684){}}
}%
{\color[rgb]{0,0,0}\put(3976,539){\vector( 0,-1){300}}
}%
\put(4051,389){\makebox(0,0)[lb]{\smash{{\SetFigFont{8}{9.6}{\rmdefault}{\mddefault}{\updefault}{\color[rgb]{0,0,0}$C_{p_1}$}%
}}}}
\put(3601,614){\makebox(0,0)[lb]{\smash{{\SetFigFont{9}{10.8}{\rmdefault}{\mddefault}{\updefault}{\color[rgb]{0,0,0}$N_0$:  $\leftarrow p$}%
}}}}
\put(3601, 89){\makebox(0,0)[lb]{\smash{{\SetFigFont{9}{10.8}{\rmdefault}{\mddefault}{\updefault}{\color[rgb]{0,0,0}$N_1$:  $\leftarrow \neg q$ }%
}}}}
\put(3151,-436){\makebox(0,0)[lb]{\smash{{\SetFigFont{9}{10.8}{\rmdefault}{\mddefault}{\updefault}{\color[rgb]{0,0,0}$N_2$:  $\leftarrow q$}%
}}}}
\put(2851,-586){\makebox(0,0)[lb]{\smash{{\SetFigFont{8}{9.6}{\rmdefault}{\mddefault}{\updefault}{\color[rgb]{0,0,0}$C_{q_1}$}%
}}}}
\put(3601,-586){\makebox(0,0)[lb]{\smash{{\SetFigFont{8}{9.6}{\rmdefault}{\mddefault}{\updefault}{\color[rgb]{0,0,0}$C_{q_2}$}%
}}}}
\put(2701,-886){\makebox(0,0)[lb]{\smash{{\SetFigFont{9}{10.8}{\rmdefault}{\mddefault}{\updefault}{\color[rgb]{0,0,0}$N_3$:  $\Box_t$}%
}}}}
\put(3376,-886){\makebox(0,0)[lb]{\smash{{\SetFigFont{9}{10.8}{\rmdefault}{\mddefault}{\updefault}{\color[rgb]{0,0,0}$N_4$:  $\leftarrow q$}%
}}}}
\put(4276,-436){\makebox(0,0)[lb]{\smash{{\SetFigFont{9}{10.8}{\rmdefault}{\mddefault}{\updefault}{\color[rgb]{0,0,0}$N_5:\Box_f$}%
}}}}
\put(1951,-736){\makebox(0,0)[lb]{\smash{{\SetFigFont{10}{12.0}{\rmdefault}{\mddefault}{\updefault}{\color[rgb]{0,0,0}$T_{N_2: \leftarrow q}$}%
}}}}
\end{picture}%
\caption{The generalized SLDNF-tree $GT_{G_0}$ of $P_0$.}\label{fig00} 
\end{figure} 

For simplicity, in the following sections by a derivation or SLDNF-derivation
we refer to a generalized SLDNF-derivation. 
Moreover, for any node $N_i:G_i$ we use 
$L_i^1$ to refer to the selected subgoal in $G_i$.

A derivation step is denoted by $N_i:G_i\Rightarrow_{C,\theta_i} N_{i+1}:G_{i+1}$,
meaning that applying a clause $C$ to $G_i$ produces $N_{i+1}:G_{i+1}$,
where $G_{i+1}$ is the resolvent of $C$ and $G_i$ on $L_i^1$
with the mgu (most general unifier) $\theta_i$.
Here, for a substitution of two variables, $X$ in $L_i^1$ and $Y$ in (the
head of) $C$, we always use $X$ to substitute for $Y$. 
When no confusion would occur, we may omit the mgu $\theta_i$
when writing a derivation step.

\section{A Characterization of Infinite SLDNF-Derivations for Concrete Queries} 
\label{sec2}
In this section, we review the characterization of infinite 
derivations with concrete queries presented in \citeN{shen-tocl}. 

\begin{definition}  
\label{symbol-seq}
Let $T$ be a term or an atom and $S$ be  
a string that consists of all predicate symbols, function  
symbols, constants 
and variables in $T$, which is obtained 
by reading these symbols sequentially from left to  
right. The {\em symbol string} of $T$, denoted 
$S_T$, is the string $S$ with every variable  
replaced by ${\cal X}$. 
\end{definition} 

For instance, let $T_1=a$ and $T_2=f(X,g(X,f(a,Y)))$. 
Then $S_{T_1}=a$ and $S_{T_2}=f{\cal X}g{\cal X}fa{\cal X}$. 
 
\begin{definition}
\label{sub-seq} 
Let $S_{T_1}$ and $S_{T_2}$ be two symbol strings. 
$S_{T_1}$ is a {\em projection} of $S_{T_2}$, denoted  
$S_{T_1}\subseteq_{proj}S_{T_2}$,  
if $S_{T_1}$ is obtained from $S_{T_2}$ by
removing zero or more elements. 
\end{definition} 
 
\begin{definition}  
\label{gvar} 
Let $A_1$ and $A_2$ be two atoms (positive subgoals) with the same predicate symbol. 
$A_1$ is said to {\em loop into} $A_2$, 
denoted $A_1\leadsto_{loop}A_2$, if  
$S_{A_1}\subseteq_{proj}S_{A_2}$. 
Let $N_i:G_i$ and $N_j:G_j$ be two nodes in a  
derivation with $L_i^1\prec_{anc}L_j^1$ and 
$L_i^1 \leadsto_{loop} L_j^1$. 
Then $G_j$ is called a {\em loop goal} of $G_i$. 
\end{definition} 
 
Observe that if $A_1 \leadsto_{loop} A_2$ then $|A_1|\leq |A_2|$, and that 
if $G_3$ is a loop goal of $G_2$ that is a loop goal 
of $G_1$ then $G_3$ is a loop goal of $G_1$. 
Since a logic program has only
a finite number of clauses, an infinite derivation results 
from repeatedly applying the same set of clauses, which leads to either
infinite repetition of selected variant subgoals or  
infinite repetition of selected subgoals with recursive 
increase in term size. By recursive increase of term size 
of a subgoal $A$ from a subgoal $B$ we mean that $A$ is $B$ 
with a few function/constant/variable symbols added  
and possibly with some variables changed to 
different variables. Such crucial dynamic 
characteristics of an infinite derivation  
are captured by loop goals. The following result is proved in \citeN{shen-tocl}.

\begin{theorem} 
\label{th1} 
Let $G_0=\leftarrow Q_0$ be a top goal with $Q_0$ a concrete query.
Any infinite derivation $D$ in $GT_{G_0}$ contains an infinite sequence of
goals $G_0, ..., G_{g_1}, ..., G_{g_2}, ...$  
such that for any $j \geq 1$, $G_{g_{j+1}}$ is a loop goal of $G_{g_j}$. 
\end{theorem}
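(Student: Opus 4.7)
The plan is to combine König's lemma with a well-quasi-order argument (Higman's lemma) applied to the symbol strings of selected subgoals. First I would establish that the selected subgoals of $D$, organized under $\prec_{anc}$, form a tree that is finitely branching but has infinitely many nodes. Finite branching follows because every selected subgoal $L$ has at most as many $\prec_{anc}$-children as the body length of the clause used to resolve it: each positive body atom contributes one child directly, and each negative body atom $\neg A$ contributes the positive atom $A$ at the root of the subsidiary SLDNF-tree opened by the negation arc (since, by definition, $A$ inherits all ancestors of $\neg A$). The tree has infinitely many nodes because the infinite derivation $D$ eventually selects infinitely many positive subgoals. König's lemma then supplies an infinite $\prec_{anc}$-chain $A_1 \prec_{anc} A_2 \prec_{anc} A_3 \prec_{anc} \cdots$ of positive subgoals, selected at nodes $G_{h_1}, G_{h_2}, G_{h_3}, \ldots$ of $D$.

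Next I would refine this chain by two pigeonhole-style passes. Since $P$ contains only finitely many predicate symbols, some predicate $p$ is shared by infinitely many $A_i$, and restricting to that subchain I may assume every $A_i$ has predicate symbol $p$. Each symbol string $S_{A_i}$ is then a word over the finite alphabet consisting of the predicate, function and constant symbols of $P$ together with the single placeholder $\mathcal{X}$ that replaces every variable. By Higman's lemma, $\subseteq_{proj}$ is a well-quasi-order on words over any finite alphabet, and hence every infinite sequence of such words contains an infinite subsequence that is nondecreasing under $\subseteq_{proj}$. Extracting such a subsequence $S_{A_{i_1}} \subseteq_{proj} S_{A_{i_2}} \subseteq_{proj} \cdots$ gives, by Definitions \ref{sub-seq} and \ref{gvar}, $A_{i_j} \leadsto_{loop} A_{i_{j+1}}$ for every $j$, while transitivity of $\prec_{anc}$ preserves $A_{i_j} \prec_{anc} A_{i_{j+1}}$ when passing to the subsequence. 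Letting $G_{g_j}$ be the goal of $D$ at which $A_{i_j}$ is selected then produces the required infinite sequence of loop goals, and $G_0$ may simply be prepended as in the statement since the loop-goal condition is imposed only from $j \geq 1$.

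The step I expect to be the main obstacle is making the $\prec_{anc}$-tree picture fully precise in the presence of nested negation arcs. A generalized SLDNF-derivation can thread through arbitrarily many subsidiary SLDNF-trees, and the convention that the root atom $A$ of such a subsidiary tree is a \emph{sibling} of its triggering literal $\neg A$ under $\prec_{anc}$ (rather than a descendant) has to be handled carefully to ensure that each node genuinely has finite out-degree, and that $D$ really does produce infinitely many positive selected subgoals rather than being dominated by negation bookkeeping. The non-floundering assumption, which forces every selected $\neg A$ to be ground so that its subsidiary tree has a single root atom, is what keeps this accounting clean. Once that combinatorial setup is in place, the König-plus-Higman combination is routine.
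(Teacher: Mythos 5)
Your proposal is correct and matches, in essence, the argument behind the proof that this paper defers to \citeN{shen-tocl}: extract an infinite $\prec_{anc}$-chain of selected subgoals from the finitely branching ancestor tree (with subsidiary-tree roots standing in, as siblings, for their triggering ground negative literals, and with infinitely many positive selections guaranteed because each selected ground negative literal is immediately followed by the positive root of its subsidiary tree), then thin the chain by pigeonhole on the finitely many predicate symbols and finally to a $\subseteq_{proj}$-nondecreasing subsequence, which is precisely a well-quasi-order argument on symbol strings over a finite alphabet. The only substantive difference is presentational: you invoke Higman's lemma and the standard wqo infinite-ascending-subsequence extraction by name, where the cited source establishes the needed string-embedding property of symbol strings by a direct combinatorial argument.
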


Put another way, Theorem \ref{th1} states 
that any infinite derivation $D$ in $GT_{G_0}$ is of the form  
\[N_0:G_0\Rightarrow_{C_0} ...\  N_{g_1}:G_{g_1}\Rightarrow_{C_1}...\ 
N_{g_2}:G_{g_2}\Rightarrow_{C_2} ...\  N_{g_3}:G_{g_3}\Rightarrow_{C_3} ...\]  
where for any $j \geq 1$, $G_{g_{j+1}}$ is a loop goal of $G_{g_j}$. 
This provides a necessary and sufficient characterization
of an infinite generalized SLDNF-derivation with a concrete
query. 

\begin{example}
\label{eg0-0}
Consider the following logic program:
\begin{tabbing}
$\qquad$ \= $P_1:$ $\quad$ \= $p(a).$ \`$C_{p_1}$\\
\> \>         $p(f(X))\leftarrow p(X)$. \`$C_{p_2}$
\end{tabbing}
The generalized SLDNF-tree $GT_{\leftarrow p(X)}$ 
for a concrete query $p(X)$ is shown in Figure \ref{fig0-0},
where for simplicity the symbol $\leftarrow$ in each goal is omitted.
Note that $GT_{\leftarrow p(X)}$ has an infinite derivation
\[N_0:p(X)\Rightarrow_{C_{p_2}} N_2:p(X_2)\Rightarrow_{C_{p_2}} N_4:p(X_4)\Rightarrow_{C_{p_2}} ...\]
where for any $j \geq 0$, $G_{2(j+1)}$ is a loop goal of $G_{2j}$.
\end{example}
\begin{figure}[htb]
\begin{center}
\setlength{\unitlength}{3947sp}%
\begingroup\makeatletter\ifx\SetFigFont\undefined%
\gdef\SetFigFont#1#2#3#4#5{%
  \reset@font\fontsize{#1}{#2pt}%
  \fontfamily{#3}\fontseries{#4}\fontshape{#5}%
  \selectfont}%
\fi\endgroup%
\begin{picture}(1200,1446)(2176,-811)
\thinlines
{\color[rgb]{0,0,0}\put(3301,464){\vector( 0,-1){300}}
}%
{\color[rgb]{0,0,0}\put(3301,-61){\vector( 0,-1){300}}
}%
{\color[rgb]{0,0,0}\put(2921,-61){\vector(-2,-1){336}}
}%
{\color[rgb]{0,0,0}\put(2921,464){\vector(-2,-1){336}}
}%
\put(3226,-811){\makebox(0,0)[lb]{\smash{\SetFigFont{12}{14.4}{\rmdefault}{\mddefault}{\updefault}{\color[rgb]{0,0,0}$\vdots$ }%
}}}
\put(2551,-61){\makebox(0,0)[lb]{\smash{\SetFigFont{8}{9.6}{\rmdefault}{\mddefault}{\updefault}{\color[rgb]{0,0,0}$C_{p_1}$}%
}}}
\put(2476,464){\makebox(0,0)[lb]{\smash{\SetFigFont{8}{9.6}{\rmdefault}{\mddefault}{\updefault}{\color[rgb]{0,0,0}$C_{p_1}$}%
}}}
\put(2176,164){\makebox(0,0)[lb]{\smash{\SetFigFont{9}{10.8}{\rmdefault}{\mddefault}{\updefault}{\color[rgb]{0,0,0}$N_1$:  $\Box_t$ }%
}}}
\put(2176,-361){\makebox(0,0)[lb]{\smash{\SetFigFont{9}{10.8}{\rmdefault}{\mddefault}{\updefault}{\color[rgb]{0,0,0}$N_3$:  $\Box_t$ }%
}}}
\put(3376,-211){\makebox(0,0)[lb]{\smash{\SetFigFont{8}{9.6}{\rmdefault}{\mddefault}{\updefault}{\color[rgb]{0,0,0}$\theta_4 = \{X_2/f(X_4)\}$}%
}}}
\put(2851,-511){\makebox(0,0)[lb]{\smash{\SetFigFont{9}{10.8}{\rmdefault}{\mddefault}{\updefault}{\color[rgb]{0,0,0}$N_4$:  $p(X_4)$ }%
}}}
\put(3001,-211){\makebox(0,0)[lb]{\smash{\SetFigFont{8}{9.6}{\rmdefault}{\mddefault}{\updefault}{\color[rgb]{0,0,0}$C_{p_2}$}%
}}}
\put(3376,314){\makebox(0,0)[lb]{\smash{\SetFigFont{8}{9.6}{\rmdefault}{\mddefault}{\updefault}{\color[rgb]{0,0,0}$\theta_2 = \{X/f(X_2)\}$}%
}}}
\put(3001,314){\makebox(0,0)[lb]{\smash{\SetFigFont{8}{9.6}{\rmdefault}{\mddefault}{\updefault}{\color[rgb]{0,0,0}$C_{p_2}$}%
}}}
\put(2851, 14){\makebox(0,0)[lb]{\smash{\SetFigFont{9}{10.8}{\rmdefault}{\mddefault}{\updefault}{\color[rgb]{0,0,0}$N_2$:  $p(X_2)$ }%
}}}
\put(2851,539){\makebox(0,0)[lb]{\smash{\SetFigFont{9}{10.8}{\rmdefault}{\mddefault}{\updefault}{\color[rgb]{0,0,0}$N_0$:  $p(X)$}%
}}}
\end{picture}
\end{center}
\caption{The generalized SLDNF-tree $GT_{\leftarrow p(X)}$ of $P_1$
for a concrete query $p(X)$.}\label{fig0-0} 
\end{figure} 

\section{A Characterization of Infinite SLDNF-Derivations for Moded Queries}
\label{sec3} 

We first define generalized SLDNF-derivations for moded queries
by introducing a framework called moded-query forests. 

\begin{definition} 
\label{mod-tree} 
Let $P$ be a logic program and $Q_0=p({\cal I}_1, ..., {\cal I}_m, T_1, ..., T_n)$
a moded query. The {\em moded-query forest} of $P$ for $Q_0$,
denoted $MF_{Q_0}$, consists of all generalized SLDNF-trees for
$P\cup \{G_0\}$, where $G_0 = \leftarrow p(t_1, ..., t_m, T_1, ..., T_n)$
with each $t_i$ being a ground term from $HU(P)$. 
A ({\em generalized SLDNF-}) {\em derivation for the moded query} $Q_0$
is a derivation in any generalized SLDNF-tree of $MF_{Q_0}$.  
\end{definition}

Therefore, a logic program $P$ terminates for a moded query $Q_0$ 
if and only if there is no infinite derivation for $Q_0$ 
if and only if $MF_{Q_0}$ has no infinite derivation.

\begin{example}
\label{eg0}
Consider the logic program $P_1$ again. 
We have $HU(P_1) = \{a, f(a), f(f(a)), ...\}$. 
Let $p({\cal I})$ be a moded query. 
The moded-query forest $MF_{p({\cal I})}$ consists of generalized SLDNF-trees
$GT_{\leftarrow p(a)}$, $GT_{\leftarrow p(f(a))}$, etc., as shown in Figure \ref{fig0}.
Note that $MF_{p({\cal I})}$ has an infinite number of generalized SLDNF-trees.
However, any individual tree, $GT_{G_0}$ with 
$G_0=\leftarrow p(\underbrace{f(f(...f}_{n \ items}(a)...)))$ ($n\geq 0$), is finite.
$MF_{p({\cal I})}$ contains no infinite derivation,
thus $P_1$ terminates for $p({\cal I})$. 
\end{example}
\begin{figure}[htb]
\begin{center}
\setlength{\unitlength}{3947sp}%
\begingroup\makeatletter\ifx\SetFigFont\undefined%
\gdef\SetFigFont#1#2#3#4#5{%
  \reset@font\fontsize{#1}{#2pt}%
  \fontfamily{#3}\fontseries{#4}\fontshape{#5}%
  \selectfont}%
\fi\endgroup%
\begin{picture}(4650,1215)(151,-511)
\thinlines
{\color[rgb]{0,0,0}\put(1501,464){\vector( 0,-1){300}}
}%
{\color[rgb]{0,0,0}\put(3826,464){\vector( 0,-1){300}}
}%
{\color[rgb]{0,0,0}\put(3826,-61){\vector( 0,-1){300}}
}%
\put(3601,314){\makebox(0,0)[lb]{\smash{\SetFigFont{8}{9.6}{\rmdefault}{\mddefault}{\updefault}{\color[rgb]{0,0,0}$C_{p_2}$}%
}}}
\put(3601,-211){\makebox(0,0)[lb]{\smash{\SetFigFont{8}{9.6}{\rmdefault}{\mddefault}{\updefault}{\color[rgb]{0,0,0}$C_{p_1}$}%
}}}
\put(3526,-511){\makebox(0,0)[lb]{\smash{\SetFigFont{9}{10.8}{\rmdefault}{\mddefault}{\updefault}{\color[rgb]{0,0,0}$N_3$:  $\Box_t$ }%
}}}
\put(3376, 14){\makebox(0,0)[lb]{\smash{\SetFigFont{9}{10.8}{\rmdefault}{\mddefault}{\updefault}{\color[rgb]{0,0,0}$N_2$:  $p(a)$ }%
}}}
\put(3376,539){\makebox(0,0)[lb]{\smash{\SetFigFont{9}{10.8}{\rmdefault}{\mddefault}{\updefault}{\color[rgb]{0,0,0}$N_0$:  $p(f(a))$ }%
}}}
\put(1126, 14){\makebox(0,0)[lb]{\smash{\SetFigFont{9}{10.8}{\rmdefault}{\mddefault}{\updefault}{\color[rgb]{0,0,0}$N_1$:  $\Box_t$ }%
}}}
\put(1051,539){\makebox(0,0)[lb]{\smash{\SetFigFont{9}{10.8}{\rmdefault}{\mddefault}{\updefault}{\color[rgb]{0,0,0}$N_0$:  $p(a)$ }%
}}}
\put(1201,314){\makebox(0,0)[lb]{\smash{\SetFigFont{8}{9.6}{\rmdefault}{\mddefault}{\updefault}{\color[rgb]{0,0,0}$C_{p_1}$}%
}}}
\put(4801,539){\makebox(0,0)[lb]{\smash{\SetFigFont{12}{14.4}{\rmdefault}{\mddefault}{\updefault}{\color[rgb]{0,0,0}$\cdots$ }%
}}}
\put(2476,539){\makebox(0,0)[lb]{\smash{\SetFigFont{9}{10.8}{\rmdefault}{\mddefault}{\updefault}{\color[rgb]{0,0,0}$GT_{\leftarrow p(f(a))}:$}%
}}}
\put(151,539){\makebox(0,0)[lb]{\smash{\SetFigFont{9}{10.8}{\rmdefault}{\mddefault}{\updefault}{\color[rgb]{0,0,0}$GT_{\leftarrow p(a)}:$}%
}}}
\end{picture}
\end{center}
\caption{The moded-query forest $MF_{p({\cal I})}$ of $P_1$ for a moded query $p({\cal I})$.}\label{fig0} 
\end{figure} 

In a moded-query forest, all input modes are instantiated
into ground terms in $HU(P)$. When $HU(P)$ is infinite,
the moded-query forest would contain infinitely many
generalized SLDNF-trees. This means that it is infeasible to build 
a moded-query forest to represent the derivations for a moded query.
An alternative yet ideal way is to directly 
apply SLDNF-resolution to evaluate input modes
and build a compact generalized SLDNF-tree for a moded query.
Unfortunately, SLDNF-resolution 
accepts only terms as arguments of a top goal; 
an input mode $\cal I$ is not directly evaluable.

Since an input mode stands for an arbitrary ground term, i.e.
it can be any term from $HU(P)$, during query evaluation 
it can be instantiated to any term except variable
(note that a ground term cannot be substituted by a variable).
This suggests that we may approximate the effect of an input mode ${\cal I}$ 
by treating it as a special (meta-) variable $I$ in the way that in SLDNF-derivations, 
$I$ can be substituted by a constant or function, but cannot be substituted
by an ordinary variable. Therefore, when doing unification of 
a special variable $I$ and a variable $X$, we always substitute $I$ for $X$.  

\begin{definition} 
\label{mod-der} 
Let $P$ be a logic program and $Q_0=p({\cal I}_1, ..., {\cal I}_m, T_1, ..., T_n)$
a moded query. The {\em moded generalized SLDNF-tree} of $P$ for
$Q_0$, denoted $MT_{Q_0}$, is defined 
to be the generalized SLDNF-tree $GT_{G_0}$ for
$P\cup \{G_0\}$, where $G_0 = \leftarrow p(I_1, ..., I_m, T_1, ..., T_n)$
with each $I_i$ being a distinct special variable not occurring in any $T_j$.
The special variables $I_1, ..., I_m$ for the input modes
${\cal I}_1, ..., {\cal I}_m$ are called {\em input mode variables} (or {\em input variables}).
\end{definition}

In a moded generalized SLDNF-tree, an input variable $I$ 
may be substituted by either a constant $t$ or a function $f(\overline{T})$. 
It will not be substituted by any non-input variable.
If $I$ is substituted by $f(\overline{T})$, all variables 
in $\overline{T}$ are also called input variables (thus are treated as special variables). 

In this paper, we do not consider {\em floundering moded} queries; i.e.,
we assume that no negative subgoals containing either ordinary or input variables are
selected at any node of a moded generalized SLDNF-tree.

\begin{definition}
Let $P$ be a logic program,
$Q_0=p({\cal I}_1, ..., {\cal I}_m, T_1, ..., T_n)$ a moded query,
and $G_0 = \leftarrow p(I_1, ..., I_m, T_1, ..., T_n)$.
Let $D$ be a derivation in the moded generalized 
SLDNF-tree $MT_{Q_0}$. A {\em moded instance}
of $D$ is a derivation obtained from $D$ by first 
instantiating all input variables at the root node $N_0:G_0$ 
with an mgu $\theta = \{I_1/t_1, ..., I_m/t_m\}$, where each $t_i\in HU(P)$, 
then passing the instantiation $\theta$ down to the other nodes of $D$.
\end{definition}

\begin{example}
\label{eg1}
Consider the logic program $P_1$ again.
Let $Q_0= p({\cal I})$ be a moded query and $G_0 = \leftarrow p(I)$.
The moded generalized SLDNF-tree $MT_{Q_0}$ is $GT_{G_0}$ as depicted in Figure \ref{fig1},
where all input variables are underlined.
Since $I$ is an input variable, $X_2$ is an input variable, too
(due to the mgu $\theta_2$). For the same reason, 
all $X_{2i}$ are input variables ($i>0$). 

Consider the following infinite derivation $D$ in $MT_{Q_0}$: 
\begin{tabbing}  
$\qquad$ $N_0:p(I)\Rightarrow_{C_{p_2}} N_2:p(X_2) \Rightarrow_{C_{p_2}} N_4:p(X_4) \Rightarrow_{C_{p_2}} \cdots$    
\end{tabbing}
By instantiating the input variable $I$ at $N_0$
with different ground terms from $HU(P_1)$ and passing 
the instantiation $\theta$ down to the other nodes of $D$,
we can obtain different moded instances from $D$.
For example, instantiating $I$ to $a$ (i.e. $\theta = \{I/a\}$) yields the moded instance
\begin{tabbing} 
$\qquad$ $N_0:p(a)$
\end{tabbing}
Instantiating $I$ to $f(a)$ (i.e. $\theta = \{I/f(a)\}$) yields the moded instance
\begin{tabbing} 
$\qquad$ $N_0:p(f(a))\Rightarrow_{C_{p_2}} N_2:p(a)$
\end{tabbing}
And, instantiating $I$ to $f(f(a))$ (i.e. $\theta = \{I/f(f(a))\}$) yields the moded instance
\begin{tabbing} 
$\qquad$ $N_0:p(f(f(a)))\Rightarrow_{C_{p_2}} N_2:p(f(a)) \Rightarrow_{C_{p_2}} N_4:p(a)$
\end{tabbing}
\end{example}
\begin{figure}[htb]
\begin{center}
\setlength{\unitlength}{3947sp}%
\begingroup\makeatletter\ifx\SetFigFont\undefined%
\gdef\SetFigFont#1#2#3#4#5{%
  \reset@font\fontsize{#1}{#2pt}%
  \fontfamily{#3}\fontseries{#4}\fontshape{#5}%
  \selectfont}%
\fi\endgroup%
\begin{picture}(1200,1446)(2176,-811)
\thinlines
{\color[rgb]{0,0,0}\put(3301,464){\vector( 0,-1){300}}
}%
{\color[rgb]{0,0,0}\put(3301,-61){\vector( 0,-1){300}}
}%
{\color[rgb]{0,0,0}\put(2921,-61){\vector(-2,-1){336}}
}%
{\color[rgb]{0,0,0}\put(2921,464){\vector(-2,-1){336}}
}%
\put(3226,-811){\makebox(0,0)[lb]{\smash{\SetFigFont{12}{14.4}{\rmdefault}{\mddefault}{\updefault}{\color[rgb]{0,0,0}$\vdots$ }%
}}}
\put(2551,-61){\makebox(0,0)[lb]{\smash{\SetFigFont{8}{9.6}{\rmdefault}{\mddefault}{\updefault}{\color[rgb]{0,0,0}$C_{p_1}$}%
}}}
\put(2476,464){\makebox(0,0)[lb]{\smash{\SetFigFont{8}{9.6}{\rmdefault}{\mddefault}{\updefault}{\color[rgb]{0,0,0}$C_{p_1}$}%
}}}
\put(2176,164){\makebox(0,0)[lb]{\smash{\SetFigFont{9}{10.8}{\rmdefault}{\mddefault}{\updefault}{\color[rgb]{0,0,0}$N_1$:  $\Box_t$ }%
}}}
\put(2176,-361){\makebox(0,0)[lb]{\smash{\SetFigFont{9}{10.8}{\rmdefault}{\mddefault}{\updefault}{\color[rgb]{0,0,0}$N_3$:  $\Box_t$ }%
}}}
\put(3376,-211){\makebox(0,0)[lb]{\smash{\SetFigFont{8}{9.6}{\rmdefault}{\mddefault}{\updefault}{\color[rgb]{0,0,0}$\theta_4 = \{\underline{X_2}/f(X_4)\}$}%
}}}
\put(2851,-511){\makebox(0,0)[lb]{\smash{\SetFigFont{9}{10.8}{\rmdefault}{\mddefault}{\updefault}{\color[rgb]{0,0,0}$N_4$:  $p(\underline{X_4})$ }%
}}}
\put(3001,-211){\makebox(0,0)[lb]{\smash{\SetFigFont{8}{9.6}{\rmdefault}{\mddefault}{\updefault}{\color[rgb]{0,0,0}$C_{p_2}$}%
}}}
\put(3376,314){\makebox(0,0)[lb]{\smash{\SetFigFont{8}{9.6}{\rmdefault}{\mddefault}{\updefault}{\color[rgb]{0,0,0}$\theta_2 = \{\underline{I}/f(X_2)\}$}%
}}}
\put(3001,314){\makebox(0,0)[lb]{\smash{\SetFigFont{8}{9.6}{\rmdefault}{\mddefault}{\updefault}{\color[rgb]{0,0,0}$C_{p_2}$}%
}}}
\put(2851, 14){\makebox(0,0)[lb]{\smash{\SetFigFont{9}{10.8}{\rmdefault}{\mddefault}{\updefault}{\color[rgb]{0,0,0}$N_2$:  $p(\underline{X_2})$ }%
}}}
\put(2851,539){\makebox(0,0)[lb]{\smash{\SetFigFont{9}{10.8}{\rmdefault}{\mddefault}{\updefault}{\color[rgb]{0,0,0}$N_0$:  $p(\underline{I})$}%
}}}
\end{picture}
\end{center}
\caption{The moded generalized SLDNF-tree $MT_{p({\cal I})}$ 
of $P_1$ for a moded query $p({\cal I})$.}\label{fig1} 
\end{figure}

Observe that a moded instance of a derivation $D$ in $MT_{Q_0}$ is a derivation 
in $GT_{G_0\theta}$, where $G_0\theta = \leftarrow p(t_1, ..., t_m, T_1, ..., T_n)$ with
each $t_i$ being a ground term from $HU(P)$. 
By Definition \ref{mod-tree}, $GT_{G_0\theta}$ is in the moded-query forest $MF_{Q_0}$.
This means that any moded instance of a derivation in $MT_{Q_0}$ is a derivation for $Q_0$
in $MF_{Q_0}$. For instance, all moded instances illustrated in Example \ref{eg1} are derivations 
in the moded-query forest $MF_{Q_0}$ of Figure~\ref{fig0}.  

\begin{theorem}
\label{th2}
Let $MF_{Q_0}$ and $MT_{Q_0}$ be the moded-query forest and the
moded generalized SLDNF-tree of $P$ for $Q_0$, respectively.
If $MF_{Q_0}$ has an infinite derivation $D'$,
$MT_{Q_0}$ has an infinite derivation $D$ with 
$D'$ as a moded instance. 
\end{theorem}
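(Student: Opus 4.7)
The plan is a lifting argument: starting from $D'$ in $MF_{Q_0}$, I build $D$ in $MT_{Q_0}$ step-by-step so that $D'$ arises from $D$ by instantiating input variables. By Definition~\ref{mod-tree}, $D'$ lies in some $GT_{G_0'}$ with $G_0' = \leftarrow p(t_1,\dots,t_m,T_1,\dots,T_n)$ and $t_i\in HU(P)$. Setting $\theta=\{I_1/t_1,\dots,I_m/t_m\}$ gives $G_0\theta=G_0'$, where $G_0$ is the root of $MT_{Q_0}$. I would construct $D$ by induction on the depth of $D'$, maintaining the invariant that after $k$ steps $D$ and $D'$ use the same sequence of input clauses, the ancestor--descendant relation on selected subgoals agrees, and there is a substitution $\theta_k$ (extending $\theta$) that binds only input variables to ground terms of $HU(P)$ and satisfies $G_k\theta_k=G_k'$.

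For a positive-resolution step, suppose $D'$ applies a clause $C:H\leftarrow B_1,\dots,B_s$ to the selected atom $A'$ of $G_k'$ with mgu $\mu'$. Let $A$ be the selected atom of $G_k$; by the invariant, $A\theta_k=A'$. Since $A\theta_k$ unifies with $H$ via $\mu'$, the standard unification lifting lemma produces an mgu $\mu$ of $A$ and $H$ together with a substitution $\theta_{k+1}$ satisfying $\mu\,\theta_{k+1}=\theta_k\,\mu'$ (up to renaming of the fresh variables of $C$). The only point requiring care is that $\mu$ must respect the input-variable discipline of Definition~\ref{mod-der}: every variable--variable binding between an input variable and an ordinary one is oriented toward the input variable, and $\theta_{k+1}$ binds only input variables (including the new ones introduced when an input variable is mapped to a function term) to ground terms. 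This holds because $\theta_k$ never binds an ordinary variable, so the only freedom in the mgu concerns variable--variable bindings involving input variables, which the convention fixes in the required direction. In $D$ I then take $N_k:G_k\Rightarrow_{C,\mu}N_{k+1}:G_{k+1}$, preserving the invariant.

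For a negation step, the non-floundering assumption stated immediately after Definition~\ref{mod-der} guarantees that the selected negative subgoal $\neg A$ in $N_k$ contains no ordinary or input variables, so $A$ is ground, $A=A'$, and the subsidiary child SLDNF-tree rooted at $\leftarrow A$ is identical in $D$ and $D'$; the step is copied verbatim with $\theta_{k+1}=\theta_k$. Iterating yields an infinite $D$ in $MT_{Q_0}$, and composing $\theta$ with the mgus of $D$ and propagating down reconstructs $D'$ exactly, which is therefore a moded instance of $D$. The main obstacle is the positive-step bookkeeping: showing that the lifted mgu can always be chosen to respect the input-variable convention, so that the constructed $D$ really is a derivation in $MT_{Q_0}$ rather than in some unconstrained tree. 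Once this is handled, the negation step is immediate and the induction is routine.
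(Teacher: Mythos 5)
Your proposal is correct and takes essentially the same route as the paper's own proof, which is a (much terser) lifting argument: the paper notes that the root of $D'$ is an instance of the root of $MT_{Q_0}$ via $\theta=\{I_1/t_1,\dots,I_m/t_m\}$, asserts that each resolution step lifts because whenever $G_i'$ unifies with a clause so does the more general $G_i$, and dispatches selected negative literals by the non-floundering assumption, exactly as you do. Your step-by-step invariant and the check that the lifted mgu respects the input-variable orientation of Definition~\ref{mod-der} merely make explicit the bookkeeping the paper leaves implicit.
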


\begin{proof}
Let $Q_0=p({\cal I}_1, ..., {\cal I}_m, T_1, ..., T_n)$. 
Then, the root node of $D'$ is $N_0:\leftarrow p(t_1, ..., t_m,$ $T_1, ..., T_n)$
with each $t_i\in HU(P)$, and the root node of $MT_{Q_0}$ is
$N_0 : \leftarrow p(I_1, ..., I_m,$ $T_1, ..., T_n)$
with each $I_i$ being an input variable not occurring in any $T_j$.
Note that the former is an instance of the latter with the mgu 
$\theta = \{I_1/t_1, ..., I_m/t_m\}$. 
Let $D'$ be of the form
\begin{tabbing}  
$\qquad$ $N_0:\leftarrow p(t_1, ..., t_m, T_1, ..., T_n)
\Rightarrow_{C_0} N_1:G_1' \cdots \Rightarrow_{C_i} N_{i+1}:G_{i+1}' \cdots$    
\end{tabbing}
$MT_{Q_0}$ must have a derivation $D$ of the form
\begin{tabbing}  
$\qquad$ $N_0:\leftarrow p(I_1, ..., I_m, T_1, ..., T_n)
\Rightarrow_{C_0} N_1:G_1 \cdots \Rightarrow_{C_i} N_{i+1}:G_{i+1} \cdots$    
\end{tabbing}
such that each $G_i' = G_i\theta$, since for any $i\geq 0$
and any clause $C_i$ in $P$, if $G_i'$ can unify with $C_i$,
so can $G_i$ with $C_i$. Note that when the selected subgoal at some $G_i'$
is a negative ground literal, by the assumption that $Q_0$ is non-floundering,
we have the same selected literal at $G_i$.
We then have the proof.  
\end{proof}

Our goal is to establish a characterization of 
infinite derivations for a moded query such that 
the converse of Theorem \ref{th2} is true under some conditions.

Consider the infinite derivation in Figure \ref{fig1} again. The input variable $I$
is substituted by $f(X_2)$; $X_2$ is then substituted by $f(X_4)$, \ldots 
This produces an infinite chain of substitutions for $I$ 
of the form $I/f(X_2), X_2/f(X_4),$ \ldots The following lemma shows
that infinite derivations containing such an infinite chain of substitutions 
have no infinite moded instances.
 
\begin{lemma} 
\label{lem1}
If a derivation $D$ in a moded generalized SLDNF-tree 
$MT_{Q_0}$ is infinite but none of its moded instances
is infinite, then there is an input variable $I$ such that $D$ contains
an infinite chain of substitutions for $I$ of the form
\begin{eqnarray}
\label{ins-chain}
I/f_1(...,Y_1,...), ..., Y_1/f_2(...,Y_2,...), ...,Y_{i-1}/f_i(...,Y_i,...), ...
\end{eqnarray}
(some $f_i$s would be the same). 
\end{lemma}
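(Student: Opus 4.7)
My plan is to prove the contrapositive: assume $D$ is infinite and that no input variable appearing in $D$ admits an infinite chain of substitutions of the form (\ref{ins-chain}). I will construct a ground substitution $\theta$ of the root input variables such that the moded instance $D\theta$ is infinite, contradicting the hypothesis that no moded instance of $D$ is infinite.

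First, I associate to $D$ a \emph{substitution forest} $F$: its nodes are the input variables ever occurring in $D$, its roots are $I_1,\ldots,I_m$, and there is an edge from $V$ to $Y$ precisely when some mgu $\theta_i$ in $D$ substitutes $V$ by a term of the form $f(\ldots,Y,\ldots)$ (so $Y$ becomes a newly introduced input variable). The forest has finitely many roots and is finitely branching, because each substituting term is finite. The contrapositive hypothesis asserts that no branch of $F$ is infinite, so K\"onig's lemma implies $F$ is finite. Consequently only finitely many steps of $D$ apply an mgu that touches an input variable, and I may choose an index $k$ such that $\theta_i$ substitutes no input variable for every $i\ge k$.

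Next I construct $\theta$. Let $\sigma=\theta_0\theta_1\cdots\theta_{k-1}$ and let $W$ be the finite set of input variables occurring in any of $I_1\sigma,\ldots,I_m\sigma$; fix any $a\in HU(P)$ and set $\eta=\{V/a : V\in W\}$. Define
\[
\theta=\{\,I_j/I_j\sigma\eta : 1\le j\le m\,\},
\]
a ground substitution of the root input variables. I verify that $D\theta$ is infinite by matching each step of $D$ with a corresponding step in $D\theta$ using the same clause $C_i$. For $0\le i<k$, iteration of the defining equality $L_i^1\theta_i=H_i\theta_i$ under composition with $\theta_{i+1}\cdots\theta_{k-1}$ and then $\eta$ yields $L_i^1\sigma\eta=H_i\sigma\eta$, so the selected subgoal at $N_i$ in $D\theta$ is a ground instance of the head $H_i$ and the unification succeeds; a straightforward induction shows that the resulting goal is the $\sigma\eta$-instance of $G_{i+1}$. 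For $i\ge k$, since $\theta_i$ acts only on non-input variables, it is variable-disjoint from $\theta$, so the two commute and step $i$ of $D\theta$ mirrors step $i$ of $D$ exactly, using the same clause and mgu. The non-floundering assumption guarantees that every selected negative subgoal is ground, so subsidiary child SLDNF-trees in $D\theta$ coincide with those in $D$ and preserve the alignment across negation arcs. Hence $D\theta$ has the same (infinite) length as $D$, the desired contradiction.

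The main obstacle is the substitution bookkeeping for the initial segment $i<k$: one must verify that $\theta$, built as the ground extension of $\sigma$ determined by $\eta$, is compatible with the mgus $\theta_0,\ldots,\theta_{k-1}$ so that each unification along the first $k$ steps of $D\theta$ succeeds and produces the intended $\sigma\eta$-instance of $G_{i+1}$. This uses the standard idempotency of SLD mgus and the fact that clause-head variables are renamed fresh; once this compatibility is in hand, the K\"onig's-lemma step and the tail argument for $i\ge k$ are routine.
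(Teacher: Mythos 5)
Your proof is correct, and while it shares the paper's overall contrapositive strategy---showing that the absence of an infinite chain of form (\ref{ins-chain}) forces an infinite moded instance---the technical route is genuinely different. The paper classifies the substitution chain of each input variable into four types (never substituted; substituted by a ground term; recursively substituted by finitely many functions; recursively substituted by infinitely many functions) and handles the third type by a recursive, innermost-first replacement: starting from the last non-ground function in a chain it grounds variables bottom-up until the chain collapses to the second type, and it then concludes that some chain must be of the fourth type. You instead organize all input-variable substitutions into a substitution forest and invoke K\"onig's lemma: finitely many roots, finite branching, and (by hypothesis) no infinite branch force the forest to be finite, which yields a uniform cut index $k$ beyond which no mgu binds an input variable; a single composed grounding $\sigma\eta$ then produces the infinite moded instance via a two-phase simulation. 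Your version buys rigor at two points the paper compresses: it argues globally, not chain-by-chain, that the grounding process terminates (the paper never explicitly rules out pathologies arising from treating infinitely many interleaved chains one at a time, whereas K\"onig's lemma shows there are only finitely many input variables altogether), and it makes explicit the step-by-step simulation of $D$ by $D\theta$, which the paper condenses into the phrase ``does not affect the infinite extension of $D$.'' The paper's case analysis, in exchange, is more elementary and stays closer to the definition of a chain.

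One slip to repair: your sentence ``only finitely many steps of $D$ apply an mgu that touches an input variable'' is false as literally stated, because input variables may occur in the \emph{ranges} of infinitely many mgus---in the paper's Figure \ref{fig2}(b), every $\theta_i$ contains a binding $U_i/I$. What your argument needs, and what K\"onig's lemma actually gives, is only that input variables occur in the \emph{domains} of finitely many $\theta_i$; that weaker statement suffices to choose $k$. Correspondingly, for $i\ge k$ the instance step does not use ``the same clause and mgu'': when $\theta_i$ maps an ordinary variable $X$ to an input variable $V$, the mgu in $D\theta$ is $\theta_i$ with range input variables replaced by their ground images (e.g., $X/a$ in place of $X/V$), so the two substitutions are not variable-disjoint as you claim. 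Since no input variable lies in the domain of $\theta_i$ for $i\ge k$, these unifications still succeed and produce the correspondingly grounded goals, so the repair is purely local and the proof stands.
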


\begin{proof}
We distinguish four types of substitution chains for an input variable $I$ in $D$:
\begin{enumerate}
\item
\label{sub1}
$X_1/I, ..., X_m/I$ or $X_1/I, ..., X_i/I,$ \ldots That is, $I$ is
never substituted by any terms.

\item
\label{sub2}
$X_1/I, ..., X_m/I, I/t$ where $t$ is a ground term. That is, $I$ is substituted by
a ground term.

\item
\label{sub3}
$X_1/I, ...,X_m/I, I/f_1(...,Y_1,...), ..., Y_1/f_2(...,Y_2,...), ..., Y_{n-1}/f_n(...,Y_n,...), ... \ $ 
where
$f_n($ $...,Y_n,...)$ is the last non-ground function in the substitution chain for $I$ in $D$.
In this case, $I$ is recursively substituted by a finite number of functions.

\item
\label{sub4}
$X_1/I, ..., X_m/I, I/f_1(...,Y_1,...), ..., Y_1/f_2(...,Y_2,...), ..., Y_{i-1}/f_i(...,Y_i,...),$ \ldots
In this case, $I$ is recursively substituted by an infinite number of functions. 
\end{enumerate}
For type \ref{sub1}, $D$ retains its infinite extension for whatever ground term we replace $I$ with.
For type \ref{sub2}, $D$ retains its infinite extension when we use $t$ to replace $I$.
To sum up, for any input variable $I$ whose substitution chain is of 
type \ref{sub1} or of type \ref{sub2}, there is a ground term $t$ such that 
replacing $I$ with $t$ does not affect the infinite extension of $D$.
In this case, replacing $I$ in $D$ with $t$ leads to an infinite derivation less general than $D$. 
 
For type \ref{sub3}, note that all variables appearing in the $f_i(.)$s are
input variables. Since $f_n(...,Y_n,...)$ is the last non-ground 
function in the substitution chain for $I$ in $D$,
the substitution chain for every variable $Y_n$ in $f_n(...,Y_n,...)$ 
is either of type \ref{sub1} or of type \ref{sub2}.
Therefore, we can replace each $Y_n$
with an appropriate ground term $t_n$ without affecting the infinite extension of $D$.
After this replacement, $D$ becomes $D_n$ and $f_n(...,Y_n,...)$ 
becomes a ground term $f_n(...,t_n,...)$.
Now $f_{n-1}(...,Y_{n-1},...)$ is the last non-ground
function in the substitution chain for $I$ in $D_n$.
Repeating the above replacement recursively, we will obtain an infinite derivation
$D_1$, which is $D$ with all variables in the $f_i(.)$s replaced with a ground term.
Assume $f_1(...,Y_1,...)$ becomes a ground term $t$ in $D_1$.
Then the substitution chain for $I$ in $D_1$ is of type \ref{sub2}. 
So replacing $I$ with $t$ in $D_1$ leads to an infinite derivation $D_0$.

The above constructive proof shows that if the substitution chains for all input variables in $D$
are of type \ref{sub1}, \ref{sub2} or \ref{sub3}, then $D$ must have an infinite moded instance.
Since $D$ has no infinite moded instance, there must exist an input variable $I$ 
whose substitution chain in $D$ is of type \ref{sub4}. That is,
$I$ is recursively substituted by an infinite number of functions.
Note that some $f_i$s would be the same because a logic program has only a finite number
of function symbols. This concludes the proof.
\end{proof}

We are ready to introduce the following principal result. 

\begin{theorem} 
\label{th-main} 
Let $MF_{Q_0}$ and $MT_{Q_0}$ be the moded-query forest and the
moded generalized SLDNF-tree of $P$ for $Q_0$, respectively.
$MF_{Q_0}$ has an infinite derivation 
if and only if $MT_{Q_0}$ has an infinite derivation $D$ of the form 
\begin{eqnarray} 
\label{eq2}
N_0:G_0\Rightarrow_{C_0} ...\  N_{g_1}:G_{g_1}\Rightarrow_{C_1}...\ 
N_{g_2}:G_{g_2}\Rightarrow_{C_2} ...\ N_{g_3}:G_{g_3}\Rightarrow_{C_3} ... 
\end{eqnarray} 
where (i) for any $j \geq 1$, $G_{g_{j+1}}$ is a loop goal of $G_{g_j}$,
and (ii) for no input variable $I$, $D$ contains 
an infinite chain of substitutions for $I$ of form (\ref{ins-chain}).
\end{theorem}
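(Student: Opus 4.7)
The strategy is to prove the biconditional by combining Theorem~\ref{th2}, Theorem~\ref{th1} and Lemma~\ref{lem1}, while tracking carefully how input variables interact with moded instantiations.

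For the \emph{only-if} direction, suppose $MF_{Q_0}$ has an infinite derivation $D'$. Theorem~\ref{th2} lifts $D'$ to an infinite derivation $D$ in $MT_{Q_0}$ having $D'$ as a moded instance. Condition (i) follows by viewing $MT_{Q_0}$ as a generalized SLDNF-tree $GT_{G_0}$ for the goal $G_0 = \leftarrow p(I_1,\ldots,I_m,T_1,\ldots,T_n)$: the input variables play syntactically the same role as ordinary variables and are abstracted to $\mathcal{X}$ when forming symbol strings (Definition~\ref{symbol-seq}), so Theorem~\ref{th1} applies to $D$ verbatim and supplies the loop-goal subsequence. Condition (ii) is obtained by contradiction: if some input variable $I$ admitted a chain of substitutions of form~(\ref{ins-chain}) in $D$, then fixing $I$ at the root to any $t \in HU(P)$ would force, step by step along the chain, $t = f_1(\ldots,s_1,\ldots)$, then $s_1 = f_2(\ldots,s_2,\ldots)$, and so on indefinitely. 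Since every term in $HU(P)$ has finite nesting depth, one of these instantiated unifications must fail at a finite step, so the moded instance $D'$ is truncated and finite --- contradicting its infiniteness.

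For the \emph{if} direction, suppose $MT_{Q_0}$ has an infinite derivation $D$ satisfying (i) and (ii). Condition (ii) is precisely the negation of the conclusion of Lemma~\ref{lem1}, so the contrapositive of that lemma, applied to the infinite $D$, yields a moded instance $D'$ of $D$ that is itself infinite. As observed in the paragraph following Example~\ref{eg1}, every moded instance of a derivation in $MT_{Q_0}$ is a derivation in $MF_{Q_0}$; hence $D'$ is the infinite derivation sought in $MF_{Q_0}$.

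The main obstacle is the (ii)-part of the forward direction: once the first substitution $I/f_1(\ldots,Y_1,\ldots)$ is performed, $I$ itself has been eliminated, and the chain continues via the newly generated input variables $Y_1, Y_2, \ldots$, each of which constrains an ever-deeper nested component of the original ground term $t$. The argument is purely syntactic, but the bookkeeping --- tracing which subterm of $t$ plays the role of each $Y_i$ --- is what guarantees the inevitable failure of the instantiated unification. Condition (i) is in fact automatic for infinite $D$ via Theorem~\ref{th1} and is not needed for the reverse direction; it is recorded in the statement to make the characterization algorithmically exploitable by the loop-check developed in the following section.
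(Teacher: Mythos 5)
Your proposal is correct and takes essentially the same route as the paper's own proof: the forward direction lifts $D'$ to $D$ via Theorem~\ref{th2}, gets condition (i) from Theorem~\ref{th1}, and refutes an infinite chain of form~(\ref{ins-chain}) by observing that any ground instantiation of $I$ bounds the instantiated chain's length (the paper bounds it by $|t|$, your finite-nesting-depth argument is the same point), while the reverse direction is exactly the contrapositive of Lemma~\ref{lem1} (the paper phrases it as a contradiction) combined with the observation after Example~\ref{eg1} that moded instances of derivations in $MT_{Q_0}$ are derivations in $MF_{Q_0}$. Your closing remark that condition (i) is automatic for infinite $D$ and unused in the reverse direction is also accurate.
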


\begin{proof}
($\Longrightarrow$) Assume $MF_{Q_0}$ has an infinite derivation $D'$.
By Theorem \ref{th2}, $GT_{G_0}$ has an infinite derivation $D$
with $D'$ as a moded instance. By Theorem \ref{th1}, $D$ is of form (\ref{eq2})
and satisfies condition (i). 
 
Assume, on the contrary, that $D$ does not satisfy condition (ii).
That is, for some input variable $I$, $D$ contains 
an infinite chain of substitutions for $I$ of the form
\[I/f_1(...,Y_1,...), ..., Y_1/f_2(...,Y_2,...), ..., Y_{i-1}/f_i(...,Y_i,...), ...\] 
Note that for whatever ground term $t$ we assign to $I$, this chain can be instantiated
at most as long in length as the following one: 
\[t/f_1(...,t_1,...), ..., t_1/f_2(...,t_2,...), ..., t_k/f_{k+1}(...,Y_{k+1},...)\] 
where $k = |t|$, $t_i$s are ground terms and $|t_k| = 1$. This means that replacing $I$ with any ground term $t$
leads to a finite moded instance of $D$. Therefore, $D$ has no infinite moded instance in $MF_{Q_0}$,
a contradiction.

($\Longleftarrow$) Assume, on the contrary, that $MF_{Q_0}$ has no infinite 
derivation. By Lemma \ref{lem1}, we reach a contradiction to condition (ii).  
\end{proof}

Theorem \ref{th-main} provides a necessary and sufficient characterization
of an infinite generalized SLDNF-derivation for a moded query.
Note that it coincides with Theorem \ref{th1} when $Q_0$ is a concrete
query, where $MF_{Q_0} = MT_{Q_0}$ and condition (ii) is always true.
 
The following corollary is immediate to this theorem.

\begin{corollary}
\label{cor-main}
A logic program $P$ terminates for a moded query $Q_0$ 
if and only if the moded generalized SLDNF-tree 
$MT_{Q_0}$ has no infinite derivation of form (\ref{eq2}) 
satisfying conditions (i) and (ii) of Theorem \ref{th-main}.
\end{corollary}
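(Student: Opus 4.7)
The plan is to observe that the corollary simply chains together two biconditionals, so no new machinery is needed. The first biconditional is the characterization of termination given immediately after Definition~\ref{mod-tree}: by construction, $P$ terminates for the moded query $Q_0$ if and only if no derivation for $Q_0$ is infinite, which in turn is equivalent to saying that the moded-query forest $MF_{Q_0}$ contains no infinite derivation. The second biconditional is Theorem~\ref{th-main} itself, which asserts that $MF_{Q_0}$ has an infinite derivation if and only if $MT_{Q_0}$ has an infinite derivation of the form~(\ref{eq2}) satisfying conditions (i) and (ii).

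Concretely, I would proceed in two short steps. First, I would restate the termination criterion: $P$ terminates for $Q_0$ iff $MF_{Q_0}$ has no infinite derivation. Second, I would take the contrapositive of Theorem~\ref{th-main}, namely that $MF_{Q_0}$ has no infinite derivation iff $MT_{Q_0}$ has no infinite derivation of form~(\ref{eq2}) meeting (i) and (ii). Composing these two equivalences immediately yields the statement of the corollary.

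There is essentially no obstacle here, since all the work has already been done in proving Theorem~\ref{th-main} (which in turn leans on Theorem~\ref{th2}, Theorem~\ref{th1}, and Lemma~\ref{lem1}). The only point worth being careful about is the non-floundering assumption on $Q_0$, which is needed so that Theorem~\ref{th2} applies when negative ground subgoals are selected; this assumption is already in force throughout Section~\ref{sec3}, so the chain of equivalences goes through without qualification. Hence the proof amounts to nothing more than a one-line citation of the definition of termination together with Theorem~\ref{th-main}.
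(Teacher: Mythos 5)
Your proposal is correct and matches the paper exactly: the paper offers no separate argument, stating only that the corollary ``is immediate to this theorem,'' i.e., it follows by chaining the termination criterion stated after Definition~\ref{mod-tree} ($P$ terminates for $Q_0$ iff $MF_{Q_0}$ has no infinite derivation) with the biconditional of Theorem~\ref{th-main}, which is precisely your two-step composition. Your added remark about the standing non-floundering assumption is consistent with the paper and does not change the argument.
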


We use simple yet typical examples to illustrate 
the proposed characterization of infinite
SLDNF-derivations with moded queries.

\begin{example}
\label{eg0-1}
Consider the moded generalized SLDNF-tree $MT_{Q_0}$ in Figure \ref{fig1}.
It has only one infinite derivation, which satisfies condition (i)
of Theorem \ref{th-main} where for each $j\geq 0$, 
$N_{g_j}$ in Theorem \ref{th-main} corresponds to $N_{2j}$ in Figure \ref{fig1}.
However, the chain of substitutions for $I$ in this derivation violates condition (ii).
This means that $MF_{Q_0}$ contains no infinite derivations; 
therefore, there is no infinite derivation for 
the moded query $p({\cal I})$. As a result, $P_1$ terminates for $p({\cal I})$. 
\end{example}

\begin{example} 
\label{eg2}  
Consider the {\em append} program:
\begin{tabbing} 
$\qquad$ \= $P_2:$ $\quad$ \= $append([],X,X)$. \`$C_{a_1}$\\ 
\> \> $append([X|Y],U,[X|Z])\leftarrow append(Y,U,Z)$.       \`$C_{a_2}$  
\end{tabbing} 
Let us choose the three simplest moded queries:  
\begin{tabbing} 
$\qquad\qquad$ \= $Q_0^1= append({\cal I}, V_2, V_3)$,\\ 
\>                $Q_0^2= append(V_1, {\cal I}, V_3)$,\\ 
\>                $Q_0^3= append(V_1, V_2, {\cal I})$.
\end{tabbing} 
Since applying clause $C_{a_1}$ produces only leaf nodes,
for simplicity we ignore it when depicting moded generalized
SLDNF-trees. The three moded generalized
SLDNF-trees $MT_{Q_0^1}$, $MT_{Q_0^2}$ and $MT_{Q_0^3}$
are shown in Figures \ref{fig2} (a), (b) and (c), respectively. Note that all the derivations
are infinite and satisfy condition (i) of Theorem \ref{th-main}, where for each $j\geq 0$, 
$N_{g_j}$ in Theorem \ref{th-main} corresponds to $N_j$ in Figure \ref{fig2}. 
Apparently, the chains of substitutions for $I$ in 
the derivations of $MT_{Q_0^1}$ and $MT_{Q_0^3}$ violate condition (ii) of Theorem \ref{th-main}.
$MF_{Q_0^1}$ and $MF_{Q_0^3}$ contain no infinite derivation and thus 
there exists no infinite derivation for the moded queries $Q_0^1$ and $Q_0^3$.
Therefore, $P_2$ terminates for $Q_0^1$ and $Q_0^3$.
However, the derivation in $MT_{Q_0^2}$ satisfies condition (ii), thus 
there exist infinite derivations for the moded query $Q_0^2$.  
$P_2$ does not terminate for $Q_0^2$.
\end{example} 
\begin{figure}[htb]
\begin{tabbing}
\setlength{\unitlength}{3947sp}%
\begingroup\makeatletter\ifx\SetFigFont\undefined%
\gdef\SetFigFont#1#2#3#4#5{%
  \reset@font\fontsize{#1}{#2pt}%
  \fontfamily{#3}\fontseries{#4}\fontshape{#5}%
  \selectfont}%
\fi\endgroup%
\begin{picture}(4721,1561)(826,-851)
\put(1126,-811){\makebox(0,0)[lb]{\smash{{\SetFigFont{10}{12.0}{\rmdefault}{\mddefault}{\updefault}{\color[rgb]{0,0,0}(a)}%
}}}}
\put(3226,-811){\makebox(0,0)[lb]{\smash{{\SetFigFont{10}{12.0}{\rmdefault}{\mddefault}{\updefault}{\color[rgb]{0,0,0}(b)}%
}}}}
\put(5326,-811){\makebox(0,0)[lb]{\smash{{\SetFigFont{10}{12.0}{\rmdefault}{\mddefault}{\updefault}{\color[rgb]{0,0,0}(c)}%
}}}}
\thinlines
{\color[rgb]{0,0,0}\put(1201,539){\vector( 0,-1){300}}
}%
{\color[rgb]{0,0,0}\put(1201, 14){\vector( 0,-1){300}}
}%
\put(826, 89){\makebox(0,0)[lb]{\smash{{\SetFigFont{9}{10.8}{\rmdefault}{\mddefault}{\updefault}{\color[rgb]{0,0,0}$N_1$:  $append(\underline{Y},V_2,Z)$ }%
}}}}
\put(826,-436){\makebox(0,0)[lb]{\smash{{\SetFigFont{9}{10.8}{\rmdefault}{\mddefault}{\updefault}{\color[rgb]{0,0,0}$N_2$:  $append(\underline{Y_1},V_2,Z_1)$ }%
}}}}
\put(826,614){\makebox(0,0)[lb]{\smash{{\SetFigFont{9}{10.8}{\rmdefault}{\mddefault}{\updefault}{\color[rgb]{0,0,0}$N_0$:  $append(\underline{I}, V_2, V_3)$}%
}}}}
\put(901,389){\makebox(0,0)[lb]{\smash{{\SetFigFont{7}{8.4}{\rmdefault}{\mddefault}{\updefault}{\color[rgb]{0,0,0}$C_{a_2}$}%
}}}}
\put(901,-136){\makebox(0,0)[lb]{\smash{{\SetFigFont{7}{8.4}{\rmdefault}{\mddefault}{\updefault}{\color[rgb]{0,0,0}$C_{a_2}$}%
}}}}
\put(1276,-61){\makebox(0,0)[lb]{\smash{{\SetFigFont{7}{8.4}{\rmdefault}{\mddefault}{\updefault}{\color[rgb]{0,0,0}$\theta_1 = \{\underline{Y}/[X_1|Y_1],$}%
}}}}
\put(1276,464){\makebox(0,0)[lb]{\smash{{\SetFigFont{7}{8.4}{\rmdefault}{\mddefault}{\updefault}{\color[rgb]{0,0,0}$\theta_0 = \{\underline{I}/[X|Y],$}%
}}}}
\put(1276,-211){\makebox(0,0)[lb]{\smash{{\SetFigFont{7}{8.4}{\rmdefault}{\mddefault}{\updefault}{\color[rgb]{0,0,0}$U_1/V_2,Z/[X_1|Z_1]\}$}%
}}}}
\put(1276,314){\makebox(0,0)[lb]{\smash{{\SetFigFont{7}{8.4}{\rmdefault}{\mddefault}{\updefault}{\color[rgb]{0,0,0}$U/V_2,V_3/[X|Z]\}$}%
}}}}
{\color[rgb]{0,0,0}\put(3301,539){\vector( 0,-1){300}}
}%
{\color[rgb]{0,0,0}\put(3301, 14){\vector( 0,-1){300}}
}%
\put(2926, 89){\makebox(0,0)[lb]{\smash{{\SetFigFont{9}{10.8}{\rmdefault}{\mddefault}{\updefault}{\color[rgb]{0,0,0}$N_1$:  $append(Y,\underline{I},Z)$ }%
}}}}
\put(2926,-436){\makebox(0,0)[lb]{\smash{{\SetFigFont{9}{10.8}{\rmdefault}{\mddefault}{\updefault}{\color[rgb]{0,0,0}$N_2$:  $append(Y_1,\underline{I},Z_1)$ }%
}}}}
\put(2926,614){\makebox(0,0)[lb]{\smash{{\SetFigFont{9}{10.8}{\rmdefault}{\mddefault}{\updefault}{\color[rgb]{0,0,0}$N_0$:  $append(V_1, \underline{I}, V_3)$}%
}}}}
\put(3001,389){\makebox(0,0)[lb]{\smash{{\SetFigFont{7}{8.4}{\rmdefault}{\mddefault}{\updefault}{\color[rgb]{0,0,0}$C_{a_2}$}%
}}}}
\put(3001,-136){\makebox(0,0)[lb]{\smash{{\SetFigFont{7}{8.4}{\rmdefault}{\mddefault}{\updefault}{\color[rgb]{0,0,0}$C_{a_2}$}%
}}}}
\put(3376,-61){\makebox(0,0)[lb]{\smash{{\SetFigFont{7}{8.4}{\rmdefault}{\mddefault}{\updefault}{\color[rgb]{0,0,0}$\theta_1 = \{Y/[X_1|Y_1],$}%
}}}}
\put(3376,464){\makebox(0,0)[lb]{\smash{{\SetFigFont{7}{8.4}{\rmdefault}{\mddefault}{\updefault}{\color[rgb]{0,0,0}$\theta_0 = \{V_1/[X|Y],$}%
}}}}
\put(3376,-211){\makebox(0,0)[lb]{\smash{{\SetFigFont{7}{8.4}{\rmdefault}{\mddefault}{\updefault}{\color[rgb]{0,0,0}$U_1/\underline{I},Z/[X_1|Z_1]\}$}%
}}}}
\put(3376,314){\makebox(0,0)[lb]{\smash{{\SetFigFont{7}{8.4}{\rmdefault}{\mddefault}{\updefault}{\color[rgb]{0,0,0}$U/\underline{I},V_3/[X|Z]\}$}%
}}}}
{\color[rgb]{0,0,0}\put(5401,539){\vector( 0,-1){300}}
}%
{\color[rgb]{0,0,0}\put(5401, 14){\vector( 0,-1){300}}
}%
\put(5026, 89){\makebox(0,0)[lb]{\smash{{\SetFigFont{9}{10.8}{\rmdefault}{\mddefault}{\updefault}{\color[rgb]{0,0,0}$N_1$:  $append(Y,V_2,\underline{Z})$ }%
}}}}
\put(5026,-436){\makebox(0,0)[lb]{\smash{{\SetFigFont{9}{10.8}{\rmdefault}{\mddefault}{\updefault}{\color[rgb]{0,0,0}$N_2$:  $append(Y_1,V_2,\underline{Z_1})$ }%
}}}}
\put(5026,614){\makebox(0,0)[lb]{\smash{{\SetFigFont{9}{10.8}{\rmdefault}{\mddefault}{\updefault}{\color[rgb]{0,0,0}$N_0$:  $append(V_1, V_2, \underline{I})$}%
}}}}
\put(5101,389){\makebox(0,0)[lb]{\smash{{\SetFigFont{7}{8.4}{\rmdefault}{\mddefault}{\updefault}{\color[rgb]{0,0,0}$C_{a_2}$}%
}}}}
\put(5101,-136){\makebox(0,0)[lb]{\smash{{\SetFigFont{7}{8.4}{\rmdefault}{\mddefault}{\updefault}{\color[rgb]{0,0,0}$C_{a_2}$}%
}}}}
\put(5476,-61){\makebox(0,0)[lb]{\smash{{\SetFigFont{7}{8.4}{\rmdefault}{\mddefault}{\updefault}{\color[rgb]{0,0,0}$\theta_1 = \{Y/[X_1|Y_1],$}%
}}}}
\put(5476,464){\makebox(0,0)[lb]{\smash{{\SetFigFont{7}{8.4}{\rmdefault}{\mddefault}{\updefault}{\color[rgb]{0,0,0}$\theta_0 = \{V_1/[X|Y],$}%
}}}}
\put(5476,-211){\makebox(0,0)[lb]{\smash{{\SetFigFont{7}{8.4}{\rmdefault}{\mddefault}{\updefault}{\color[rgb]{0,0,0}$U_1/V_2,\underline{Z}/[X_1|Z_1]\}$}%
}}}}
\put(5476,314){\makebox(0,0)[lb]{\smash{{\SetFigFont{7}{8.4}{\rmdefault}{\mddefault}{\updefault}{\color[rgb]{0,0,0}$U/V_2,\underline{I}/[X|Z]\}$}%
}}}}
\end{picture}%
\end{tabbing} 
\caption{(a) $MT_{Q_0^1}$, (b) $MT_{Q_0^2}$, and (c) $MT_{Q_0^3}$.}\label{fig2} 
\end{figure} 

Let $pred(P)$ be the set of predicate symbols in $P$.
Define
\begin{tabbing}
$\qquad\qquad MQ(P)$ \= $ = \{p(\overline{T}) \mid p$ is an
             $n$-ary predicate symbol in $pred(P)$,\\
\>           and $\overline{T}$ consists of $m>0$ input modes and $n-m$ variables$\}$.
\end{tabbing}
Note that $MQ(P)$ contains all most general moded queries
of $P$ in the sense that any moded query of $P$ is an instance of
some query in $MQ(P)$. 
Since $pred(P)$ is finite, $MQ(P)$ is finite.
Therefore, it is immediate that $P$ terminates for all moded queries 
if and only if it terminates for each moded query in $MQ(P)$.

\begin{theorem} 
\label{th-inc}
Let $Q_1 = p(\overline{T_1})$ and $Q_2 = p(\overline{T_2})$ be two moded queries in $MQ(P)$,
where all input modes of $Q_1$ occur in $Q_2$.  
If there is no infinite derivation for $Q_1$, 
there is no infinite derivation for $Q_2$. 
\end{theorem}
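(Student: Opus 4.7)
The plan is to argue by contrapositive: assume $Q_2$ admits an infinite derivation and exhibit one for $Q_1$. By Definition~\ref{mod-der}, an infinite derivation for $Q_2$ lives inside some generalized SLDNF-tree $GT_{G_0^{(2)}}$ of $MF_{Q_2}$, where $G_0^{(2)}=\leftarrow p(\overline{t_2})$ is obtained from $\overline{T_2}$ by replacing every input mode of $Q_2$ with a ground term from $HU(P)$. Call this infinite derivation $D_2$.

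By hypothesis, every input-mode position of $Q_1$ is also an input-mode position of $Q_2$. I would then construct a root $G_0^{(1)}=\leftarrow p(\overline{t_1})$ belonging to $MF_{Q_1}$ by copying into $\overline{t_1}$ the ground term that $\overline{t_2}$ carries at each input-mode position of $Q_1$, and keeping the original (ordinary) variables of $Q_1$ at all other positions. Then $G_0^{(2)} = G_0^{(1)}\sigma$, where $\sigma$ sends each retained variable of $Q_1$ to the ground term $\overline{t_2}$ places in the same position.

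The core step is a lifting argument in the style of the standard SLDNF lifting lemma, and in fact entirely parallel to the construction used in the proof of Theorem~\ref{th2}. Inductively along $D_2$, at each positive step the clause that unifies with the selected subgoal in $GT_{G_0^{(2)}}$ unifies (with an at-least-as-general mgu) with the more general selected subgoal in $GT_{G_0^{(1)}}$, producing a lifted resolvent whose instance under the running substitution coincides with the $D_2$ resolvent; at each negative step the non-floundering assumption on $Q_1$ guarantees that the lifted selected literal is already ground, so the same subsidiary child SLDNF-tree triggered by the negation-as-failure rule in $D_2$ is triggered in the lifted derivation, with the same success/failure outcome. Infinitely many steps of $D_2$ therefore yield infinitely many lifted steps, producing an infinite derivation inside $GT_{G_0^{(1)}}\subseteq MF_{Q_1}$, which contradicts the assumption that $Q_1$ has no infinite derivation.

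The main obstacle is the rigorous treatment of the negative steps under lifting: one must verify that each subsidiary tree $T_{N_{i+1}:\leftarrow A}$ encountered in $D_2$ corresponds to a subsidiary tree in the lifted derivation with the \emph{same} ground subgoal, so that success/failure and the ancestor-descendant relations of generalized SLDNF-trees transfer unchanged. The non-floundering assumption on $Q_1$ is exactly what is needed here, since it rules out the possibility that the variables of $Q_1$ kept inside $G_0^{(1)}$ could leave a selected negative literal non-ground at the moment of selection in the lifted derivation. Once this alignment is in place, the positive-step bookkeeping (clauses used, mgus, selected subgoals) is routine and mirrors the argument already carried out in the proof of Theorem~\ref{th2}.
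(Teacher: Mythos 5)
Your proposal is correct and is essentially the paper's own argument presented in contrapositive form: the paper's proof rests on exactly the fact you establish by lifting --- that for non-floundering queries the absence of an infinite derivation for a concrete query implies its absence for every instance --- which it asserts directly and then chains through the ground-input instances of $Q_1$ and $Q_2$, matching your construction of $G_0^{(1)}$ and $G_0^{(2)}$. The only difference is presentational: the paper leaves the lifting step (including the treatment of ground negative literals) implicit, whereas you spell it out along the lines of the proof of Theorem~\ref{th2}.
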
 

\begin{proof}
Note that we consider only non-floundering queries
by assuming that no negative subgoals containing either ordinary or input variables are
selected at any node of a moded generalized SLDNF-tree.
Then, for any concrete query $Q$,
that there is no infinite derivation for $Q$ 
implies there is no infinite derivation for any instance of $Q$.

For ease of presentation, let
$Q_1 = p({\cal I}_1, ..., {\cal I}_l, X_{l+1},...,X_n)$ 
and $Q_2 = p({\cal I}_1, ..., {\cal I}_m,$ $X_{m+1},...,X_n)$ with $l<m$. 

Assume that there is no infinite derivation for $Q_1$.
Then, there is no infinite derivation for any query 
$Q = p(t_1, ..., t_l, X_{l+1},...,X_n)$, where each $t_i$
is a ground term from $HU(P)$. Then, there is no infinite derivation for any query 
$Q' = p(t_1, ..., t_l, s_{l+1}, ..., s_m, X_{m+1},...,$ $X_n)$, where each $t_i$
is a ground term from $HU(P)$ and each $s_i$ an instance of $X_i$. 
Since all $X_i$s are variables, there is no infinite derivation for any query 
$Q'' = p(t_1, ..., t_l, t_{l+1}, ..., t_m,$ $X_{m+1},...,X_n)$, where each $t_i$
is a ground term from $HU(P)$. That is,
there is no infinite derivation for $Q_2$. 
\end{proof}

Applying this theorem, we can conclude that $P_2$ in Example \ref{eg2}
terminates for all moded queries in $MQ(P_2)$ except $Q_0^2$.

\section{An Algorithm for Predicting Termination of Logic Programs}
\label{sec4}

We develop an algorithm for predicting termination of logic programs
based on the necessary and sufficient characterization
of an infinite generalized SLDNF-derivation (Theorem \ref{th-main}
and Corollary \ref{cor-main}). 
We begin by introducing a loop checking mechanism. 

A loop checking mechanism, or more formally a {\em loop check} \cite{BAK91},
defines conditions for us to cut a possibly infinite derivation at some node.
By cutting a derivation at a node $N$ we mean removing all descendants of $N$. 
Informally, a loop check is said to be {\em weakly sound} if for any
generalized SLDNF-tree $GT_{G_0}$, $GT_{G_0}$ having a success derivation before cut
implies it has a success derivation after cut; it is said to be {\em complete}
if it cuts all infinite derivations in $GT_{G_0}$. 
An ideal loop check cuts all infinite derivations while retaining success
derivations. Unfortunately, as shown by \citeN{BAK91},
there exists no loop check that is both weakly sound and complete. In this paper,
we focus on complete loop checks, because we want to apply them to
predict termination of logic programs.

\begin{definition}
\label{mq-check} 
Given a repetition number $r\geq 2$, {\em LP-check} is defined as follows:
Any derivation $D$ in a generalized SLDNF-tree is cut at 
a node $N_{g_r}$ if $D$ has a prefix of the form
\begin{eqnarray} 
\label{eq3}
N_0:G_0\Rightarrow_{C_0} ...\ N_{g_1}:G_{g_1}\Rightarrow_{C_k} ...\   
N_{g_2}:G_{g_2}\Rightarrow_{C_k} ...\ N_{g_r}:G_{g_r}\Rightarrow_{C_k} 
\end{eqnarray} 
such that 
(a) for any $j < r$, $G_{g_{j+1}}$ is a loop goal of $G_{g_j}$, 
and (b) for all $j \leq r$, the clause $C_k$ applied to $G_{g_j}$ is the same.
$C_k$ is then called a {\em looping clause}.  
\end{definition} 

LP-check predicts infinite derivations from 
prefixes of derivations based on the characterization of 
Theorem \ref{th1} (or condition (i) of Theorem \ref{th-main}). 
The repetition number $r$ specifies the minimum
number of loop goals appearing in the prefixes. 
It appears not appropriate to choose $r<2$, as that may lead to
many finite derivations being wrongly cut.
Although there is no mathematical mechanism available 
for choosing this repetition number (since the termination
problem is undecidable), 
in many situations it suffices 
to choose $r=3$ for a correct prediction of infinite derivations.
For instance, choosing $r=3$ we are able to obtain correct predictions
for all benchmark programs of the Termination Competition 2007
(see Section \ref{implementation}). 

LP-check applies to any generalized SLDNF-trees including
moded generalized SLDNF-trees.

\begin{theorem} 
\label{check-comp} 
LP-check is a complete loop check.
\end{theorem}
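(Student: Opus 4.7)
The plan is to combine the structural characterization of infinite derivations (Theorem \ref{th1}, or equivalently condition (i) of Theorem \ref{th-main}) with a pigeonhole argument on the finite set of clauses of $P$, to exhibit, inside every infinite derivation, a prefix of the exact form (\ref{eq3}) that LP-check is designed to detect.

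First, I would fix an arbitrary infinite derivation $D$ in a generalized SLDNF-tree (this covers moded generalized SLDNF-trees as well, since those are defined as generalized SLDNF-trees of an instantiated top goal). By Theorem \ref{th1}, $D$ contains an infinite sequence of selected goals
\[
G_0,\dots, G_{h_1},\dots, G_{h_2},\dots, G_{h_3},\dots
\]
such that for every $j\geq 1$, $G_{h_{j+1}}$ is a loop goal of $G_{h_j}$. Let $D_j$ denote the derivation step $N_{h_j}:G_{h_j}\Rightarrow_{C^{(j)}} N_{h_j+1}:G_{h_j+1}$ performed at node $N_{h_j}$, and let $C^{(j)}$ be the program clause used at that step.

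Next, I would apply the pigeonhole principle: since $P$ has only finitely many clauses, there exists some clause $C_k\in P$ that appears as $C^{(j)}$ for infinitely many indices $j$. Extract the subsequence $j_1<j_2<j_3<\cdots$ of those indices, and set $g_i := h_{j_i}$. Because the loop-goal relation is transitive (as remarked after Definition \ref{gvar}), the subsequence $G_{g_1}, G_{g_2}, G_{g_3},\dots$ still has the property that $G_{g_{i+1}}$ is a loop goal of $G_{g_i}$ for every $i\geq 1$, and by construction the clause applied at each $N_{g_i}$ is the same clause $C_k$. Taking the first $r$ terms of this subsequence yields a prefix of $D$ of exactly the form (\ref{eq3}) required by Definition \ref{mq-check}, so LP-check cuts $D$ at $N_{g_r}$.

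Since $D$ was an arbitrary infinite derivation, every infinite derivation in the tree is cut, which is precisely completeness. The only mildly delicate step is the pigeonhole argument together with the observation that loop-into is transitive, so that thinning the sequence down to those steps that use a single fixed clause $C_k$ does not destroy the loop-goal property required in clause (a) of Definition \ref{mq-check}; everything else is a direct appeal to Theorem \ref{th1}.
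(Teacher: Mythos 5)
Your proof is correct and follows essentially the same route as the paper's: invoke Theorem \ref{th1} to get an infinite chain of loop goals, then use the finiteness of the clause set (pigeonhole) to extract an infinite subsequence at which one fixed clause $C_k$ is applied, whose first $r$ members yield a prefix of form (\ref{eq3}). Your version is in fact slightly more careful than the paper's, since you explicitly justify that thinning to a subsequence preserves condition (a) via the transitivity of the loop-goal relation, a step the paper uses tacitly (relying on the remark after Definition \ref{gvar}).
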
 

\begin{proof} 
Let $D$ be an infinite derivation in $GT_{G_0}$.
By Theorem \ref{th1}, $D$ is of the form
\[N_0:G_0\Rightarrow_{C_0} ...\ N_{f_1}:G_{f_1}\Rightarrow_{C_1}   
...\ N_{f_2}:G_{f_2}\Rightarrow_{C_2} ...\] 
such that for any $i \geq 1$, $G_{f_{i+1}}$ is a loop goal of $G_{f_i}$.
Since a logic program has only a finite number of clauses,
there must be a (looping) clause $C_k$ being repeatedly applied at infinitely many nodes 
$N_{g_1}:G_{g_1}, N_{g_2}:G_{g_2}, \cdots$ 
where for each $j \geq 1$, $g_j\in \{f_1, f_2, ...\}$.
Then for any $r>0$, $D$ has a partial derivation of form (\ref{eq3}). So $D$ will be cut
at node $N_{g_r}:G_{g_r}$. This shows
that any infinite derivation can be cut by LP-check. 
That is, LP-check is a complete loop check. 
\end{proof}

\begin{example}
\label{eg1-1}
Let us choose $r=3$ and consider the infinite derivation $D$ in 
Figure \ref{fig1}. $p(X_4)$ at $N_4$ is a loop goal of $p(X_2)$ at $N_2$ 
that is a loop goal of $p(I)$ at $N_0$. Moreover, the same clause
$C_{p_2}$ is applied at the three nodes. $D$  
satisfies the conditions of LP-check and is cut at node $N_4$.
\end{example}

Recall that to prove that a logic program $P$ terminates
for a moded query $Q_0 = p({\cal I}_1, ...,$ ${\cal I}_m,T_1, ..., T_n)$ 
is to prove that 
$P$ terminates for any query $p(t_1, ..., t_m,$ $T_2, ..., T_n)$,
where each $t_i$ is a ground term. This can be reformulated in
terms of a moded-query forest; that is, $P$ terminates for $Q_0$ if and only if
$MF_{Q_0}$ has no infinite derivation. Then, Corollary \ref{cor-main}
shows that $P$ terminates for $Q_0$ if and only if
the moded generalized SLDNF-tree $MT_{Q_0}$ 
has no infinite derivation $D$ of form (\ref{eq2}) satisfying
the two conditions (i) and (ii) of Theorem~\ref{th-main}. 
Although this characterization cannot be 
directly used for automated termination test because it requires generating
infinite derivations in $MT_{Q_0}$, it can be used
along with LP-check to predict termination, as LP-check is able to guess
if a partial derivation would extend to an infinite one.
Before describing our prediction algorithm 
with this idea, we introduce one more condition 
following Definition \ref{mq-check}.

\begin{definition}
\label{term-dec} 
Let $D$ be a derivation with a prefix of form (\ref{eq3}).
The prefix of $D$ is said to have the {\em term-size decrease} property
if for any $i$ with $0<i<r$, there is a substitution $X/f(...Y...)$ 
between $N_{g_i}$ and $N_{g_{i+1}}$, where $X$ is an input 
variable and $Y$ (an ordinary or input variable) appears in 
the selected subgoal of $G_{g_{i+1}}$. 
\end{definition} 

\begin{theorem} 
\label{th-term-dec} 
Let $D$ be a derivation such that for all $r\geq 2$ 
$D$ has a prefix of form (\ref{eq3}), which
has the term-size decrease property.
$D$ contains
an infinite chain of substitutions of form (\ref{ins-chain})
for some input variable $I$ at the root node of $D$.
\end{theorem}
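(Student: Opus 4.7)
My plan is to extract an infinite, finitely-branching tree of input-variable substitutions in $D$ rooted at $N_0$, and then invoke K\"onig's lemma to produce the required chain. First I would argue that the hypothesis yields infinitely many distinct ``witness substitutions'' in $D$: for each $r \geq 2$ the prefix of form (\ref{eq3}) supplies, via the term-size decrease property, a substitution $X/f(\ldots Y \ldots)$ in each of the disjoint segments between consecutive $N_{g_i}$'s ($0 < i < r$), and letting $r$ grow without bound yields an infinite sequence of substitutions $X_k / f_k(\ldots, Y_k, \ldots)$ occurring at ever-later steps of $D$. Each $X_k$ is an input variable, each right-hand side is a proper function term (the notation $f(\ldots Y \ldots)$ with a variable inside excludes both constants and bare variables), and because a variable is substituted at most once along a derivation the $X_k$'s are pairwise distinct. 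A small point I would want to pin down here is that the sequences $g_1, g_2, \ldots$ chosen for different values of $r$ can be aligned into a single increasing sequence along $D$; this is a straightforward extraction argument, and in fact we only need the \emph{set} of witness substitutions to be infinite, not the $g_i$'s to be canonical.

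Next I would build the \emph{substitution forest} $F$ associated with $D$: nodes are the input variables appearing in $D$, the roots are the original input variables $I_1, \ldots, I_m$ of the top goal $G_0$, and for each function substitution $X / g(\ldots, Z, \ldots)$ performed in $D$ on an input variable $X$ I add an edge from $X$ to every input variable $Z$ occurring in the right-hand side of the substitution. Because every non-root input variable is introduced exactly once, as a fresh subvariable of a function substituent of some earlier input variable, the parent relation is well-defined and $F$ is a genuine forest; because a function term has only finitely many argument positions, $F$ is finitely branching. The witness substitutions of the previous paragraph then identify infinitely many nodes of $F$.

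Since there are only $m$ roots but infinitely many witness nodes, by the pigeonhole principle at least one root $I_j$ has infinitely many descendants in $F$. K\"onig's lemma applied to the infinite finitely-branching subtree of $F$ rooted at $I_j$ yields an infinite path $I_j = Z_0, Z_1, Z_2, \ldots$. By construction each edge $Z_k \to Z_{k+1}$ is witnessed by a substitution $Z_k / h_{k+1}(\ldots, Z_{k+1}, \ldots)$ in $D$, and these substitutions necessarily occur in the order of the path because the child variable $Z_{k+1}$ comes into existence only at the moment its parent $Z_k$ is substituted. Setting $I = I_j$, this is exactly the infinite chain
\[
I / h_1(\ldots, Z_1, \ldots),\ Z_1 / h_2(\ldots, Z_2, \ldots),\ \ldots,\ Z_{k-1} / h_k(\ldots, Z_k, \ldots),\ \ldots
\]
of form (\ref{ins-chain}).

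The step I expect to be the main obstacle is the bookkeeping around variable-to-variable renamings between two input variables, which can arise when unification forces two input variables to be identified. Such renamings do not create fresh nodes of $F$, but they do collapse distinct syntactic names into one, and one must fix a canonical representative per equivalence class and verify that the witness substitutions of the first step remain well-defined modulo this identification. Once that convention is in place, the extraction of witnesses, the construction of $F$, pigeonhole, and K\"onig's lemma are all routine, and the chain of form (\ref{ins-chain}) falls out.
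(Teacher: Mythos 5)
Your proof is correct, and it takes a genuinely different route from the paper's, although the combinatorial core is the same. The paper argues by contradiction with an explicit count: it first derives, exactly as you do, that $D$ contains infinitely many substitutions $X/f(\ldots)$ with $X$ an input variable; it then assumes no chain of form (\ref{ins-chain}) is infinite, lets $M$ be the longest chain length over the root input variables, observes that each such substitution introduces at most $N$ new input variables, and bounds the total number of these substitutions by $K\cdot(N^0+N^1+\cdots+N^M)$, where $K$ is the number of input variables at the root --- contradicting the infinitude. Your direct construction --- the finitely branching substitution forest, pigeonhole over the $m$ roots, then K\"onig's lemma --- is the positive formulation of the same tree argument: the paper's bound is precisely the size of a $K$-rooted, $N$-branching forest of depth $M$. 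What your version buys is rigor at one implicit step: from ``no infinite chain'' the paper immediately asserts that a longest chain length $M$ exists, but finiteness of every chain does not by itself bound the chain lengths; that inference is exactly K\"onig's lemma for the finitely branching forest, which you invoke explicitly where the paper uses it tacitly. You are also more careful on two points the paper passes over: the non-alignment of the prefixes for different $r$ (your observation that only the infinitude of the \emph{set} of witness substitutions matters, secured by the $r-1$ disjoint segments per prefix, is the right fix), and the identification of two input variables during unification, which the paper settles by convention (an input variable is only ever substituted by a constant or function) rather than by argument. What the paper's version buys in exchange is brevity and a concrete quantitative bound that never names K\"onig's lemma.
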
 

\begin{proof}
Due to the term-size decrease property of the prefix of $D$ which holds
for all $r\geq 2$, 
$D$ contains an infinite number of substitutions of the form $X/f(...)$,
where $X$ is an input variable.
Assume, on the contrary, that $D$ does not contain such an 
infinite chain of form (\ref{ins-chain}). 
Let $M$ be the longest length of substitutions of
form (\ref{ins-chain}) for each input variable $I$
at the root node of $D$. Note that each input variable can be substituted 
only by a constant or function. For each substitution $X/f(...)$ with $X$ 
an input variable, assume $f(...)$ contains at most $N$ variables
(i.e., it introduces at most $N$ new input variables). 
Then, $D$ contains at most $K*(N^0 + N^1 + ... + N^M)$ substitutions of the form $X/f(...)$,
where $K$ is the number of input variables at the root node of $D$
and $X$ is an input variable. This contradicts the condition that
$D$ contains an infinite number of such substitutions. 
\end{proof}

LP-check and the term-size decrease property approximate conditions (i) and (ii)
of Theorem \ref{th-main}, respectively. So, we can guess an infinite extension (\ref{eq2}) 
from a prefix (\ref{eq3}) by combining the two mechanisms,
as described in the following algorithm.

\begin{algorithm}[{\em Predicting termination of a logic program}] 
\label{alg1}
\begin{tabbing}
Input: \= A logic program $P$, a (concrete or moded) query $Q_0$,
and a repetition number $r\geq 2$ \\
\> ($r = 3$ is recommended).\\
Output: {\em terminating}, {\em predicted-terminating}, or
{\em predicted-non-terminating}. \\
Method: Apply the following procedure.\\
{\em procedure} TPoLP($P, Q_0, r$)\\
\{
\end{tabbing}   
\begin{enumerate} 
\item
Initially, set $L = 0$. 
Construct the moded generalized SLDNF-tree $MT_{Q_0}$ of $P$ for $Q_0$ in the way that whenever a 
prefix $D_x$ of the form
\[N_0:G_0\Rightarrow_{C_0} ...\ N_{g_1}:G_{g_1}\Rightarrow_{C_k} ...\   
N_{g_2}:G_{g_2}\Rightarrow_{C_k} ...\ N_{g_r}:G_{g_r}\Rightarrow_{C_k}\]
is produced which satisfies conditions (a) and (b) of LP-check,
if $D_x$ does not have the term-size decrease property then goto \ref{no}; else set $L = 1$
and extend $D_x$ from the node $N_{g_r}$ with the looping clause $C_k$ skipped.

\item
Return {\em terminating} if $L = 0$; otherwise, return {\em predicted-terminating}.  

\item
\label{no}
Return {\em predicted-non-terminating}. 
\end{enumerate}
\}  
\end{algorithm} 

Starting from the root node $N_0:G_0$, we generate 
derivations of a moded generalized SLDNF-tree $MT_{Q_0}$ 
step by step. If a prefix $D_x$ of form (\ref{eq3})
is generated which satisfies conditions (a) and (b) of LP-check, 
then by Theorem \ref{th1} $D_x$ is very
likely to extend infinitely in $MT_{Q_0}$ (via the looping clause $C_k$). 
By Theorem \ref{th2}, however, the extension of $D_x$ may not have infinite moded instances
in $MF_{Q_0}$. So in this case, we further check if $D_x$ has the term-size decrease property.
If not, by Theorem \ref{th-main} $D_x$ is very likely to 
have moded instances that extend infinitely 
in $MF_{Q_0}$. Algorithm \ref{alg1} then predicts non-terminating for $Q_0$
by returning an answer {\em predicted-non-terminating}.
If $D_x$ has the term-size decrease property, however, 
we continue to extend $D_x$ from $N_{g_r}$ by skipping the clause $C_k$
(i.e., the derivation via $C_k$ is cut at $N_{g_r}$ by LP-check). 

When the answer is not 
{\em predicted-non-terminating}, we distinguish between two cases:
(1) $L = 0$. This shows that no derivation was cut by LP-check 
during the construction of $MT_{Q_0}$. Algorithm \ref{alg1}
concludes terminating for $Q_0$ by 
returning an answer {\em terminating}.
(2) $L = 1$. This means that some derivations were cut by LP-check, all of which
have the term-size decrease property. Algorithm \ref{alg1} then 
predicts terminating for $Q_0$ by
returning an answer {\em predicted-terminating}.

Note that for a concrete query $Q_0$, no derivation has the term-size decrease property.
Therefore, Algorithm \ref{alg1} returns {\em predicted-non-terminating} for $Q_0$ 
once a prefix of a derivation satisfying the conditions of LP-check is generated.

We prove the termination property of Algorithm \ref{alg1}. 

\begin{proposition}
For any logic program $P$, concrete/moded query $Q_0$, and
repetition number $r$, the procedure TPoLP($P, Q_0, r$) terminates.
\end{proposition}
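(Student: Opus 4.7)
The plan is to argue by contradiction, combining finite branching with K\"onig's lemma and the completeness of LP-check (Theorem~\ref{check-comp}). First I would note that the subtree of $MT_{Q_0}$ effectively built by TPoLP is finitely branching: at each positive node the number of children is bounded by the number of clauses in $P$, and at each ground negative subgoal a single subsidiary SLDNF-tree is spawned via a negation arc. Hence if TPoLP did not terminate, the tree it is constructing would be infinite, and K\"onig's lemma would yield an infinite generalized SLDNF-derivation $D$ that has survived every cut performed by the algorithm.

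Next I would apply the argument in the proof of Theorem~\ref{check-comp} to $D$ to locate a prefix $D_x$ of form~(\ref{eq3}) satisfying conditions (a) and (b) of LP-check, with the same looping clause $C_k$ applied at each of $N_{g_1},\ldots,N_{g_r}$. At the moment the construction reaches $N_{g_r}$ and tries to apply $C_k$, Step~1 of TPoLP detects $D_x$ and branches on the term-size decrease property. If $D_x$ lacks this property, the procedure jumps to Step~3 and returns \emph{predicted-non-terminating}, contradicting non-termination. If $D_x$ has the property, the procedure refuses to add the $C_k$-child of $N_{g_r}$, so $D$ cannot continue past $N_{g_r}$ via $C_k$ in the constructed tree, contradicting the step $N_{g_r}:G_{g_r}\Rightarrow_{C_k}$ present in $D$. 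In either case we obtain a contradiction, so no infinite branch exists and the construction completes in finitely many steps, after which Step~2 returns either \emph{terminating} or \emph{predicted-terminating}.

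The main subtlety I expect is the bookkeeping of \emph{when} LP-check actually fires along a branch. The argument above implicitly assumes that as soon as a prefix of form~(\ref{eq3}) with identical looping clause has been formed, TPoLP inspects it at that very step, so that the would-be $C_k$-descendant of $N_{g_r}$ is never placed in the tree. This is built into Step~1 but has to be stated explicitly to rule out the scenario in which the $C_k$-child is added first and the cut is only considered afterwards (which would allow $D$ to sneak past). Once this is pinned down, the rest is a direct application of K\"onig's lemma and the characterization of infinite derivations from Theorem~\ref{th1} used inside the proof of Theorem~\ref{check-comp}; no measure-theoretic or well-founded-order argument is needed, since the completeness of LP-check does all the heavy lifting.
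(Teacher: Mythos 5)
Your proof is correct and takes essentially the same route as the paper's: the paper's three-line argument appeals directly to the completeness of LP-check (Theorem~\ref{check-comp}) to conclude that $MT_{Q_0}$ after cutting is finite, leaving implicit both the finite-branching/K\"onig's-lemma step and the timing of when a cut fires, which you make explicit. Your added care about the bookkeeping in Step~1 (that the $C_k$-child of $N_{g_r}$ is never explored, or the procedure returns at Step~3) is a faithful elaboration of the same argument, not a different one.
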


\begin{proof} 
The procedure TPoLP constructs $MT_{Q_0}$
while applying LP-check to cut possible infinite derivations.
Since LP-check is a complete loop check, it cuts all
infinite derivations at some depth. This means that $MT_{Q_0}$ after
cut by LP-check is finite. So, TPoLP($P, Q_0, r$) terminates.
\end{proof}

Algorithm \ref{alg1} yields a heuristic answer, 
{\em predicted-terminating} or {\em predicted-non-terminating},
or an exact answer {\em terminating}, as shown by the 
following theorem.

\begin{theorem}
\label{th-terminating}
A logic program $P$ terminates for a query $Q_0$ if Algorithm \ref{alg1} returns {\em terminating}.
\end{theorem}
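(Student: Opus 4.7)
The plan is to show that a return of \emph{terminating} is only possible when the tree actually constructed by TPoLP coincides with the full moded generalized SLDNF-tree $MT_{Q_0}$, and then to invoke Theorem \ref{th-main} (equivalently Corollary \ref{cor-main}) in contrapositive form.

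First I would trace the three exit points of Algorithm \ref{alg1}. The only place where the construction in Step 1 departs from a plain SLDNF-style expansion of $MT_{Q_0}$ is when a prefix $D_x$ of form (\ref{eq3}) satisfying conditions (a) and (b) of LP-check appears. In that event, the algorithm either jumps to Step 3, returning \emph{predicted-non-terminating}, or it sets $L = 1$, skips the looping clause $C_k$ at $N_{g_r}$, and continues; in the latter case Step 2 can return only \emph{predicted-terminating}, since $L$ is never reset to $0$. Hence a return of \emph{terminating} certifies that no such prefix $D_x$ was ever produced, which means in particular that no pruning via LP-check took place: the tree actually built by TPoLP is literally $MT_{Q_0}$. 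By the preceding proposition, TPoLP halts, so this $MT_{Q_0}$ is finite.

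The concluding step is purely logical. A finite $MT_{Q_0}$ has no infinite derivations whatsoever, and in particular none of form (\ref{eq2}) satisfying both condition (i) and condition (ii) of Theorem \ref{th-main}. Applying the $(\Longrightarrow)$ direction of Theorem \ref{th-main} in contrapositive, $MF_{Q_0}$ has no infinite derivation, which by Definition \ref{mod-tree} is precisely the statement that $P$ terminates for $Q_0$. The only point warranting genuine care is the bookkeeping in the previous paragraph: one must check that \emph{both} LP-check-triggered branches rule out a \emph{terminating} output, so that $L = 0$ at the end of Step 1 really does correspond to a completely unpruned construction. Once this is in place, the rest of the argument is a one-line invocation of the characterization results from the previous section, and the case of a concrete $Q_0$ requires no separate treatment because $MT_{Q_0} = GT_{G_0}$ and condition (ii) becomes vacuous.
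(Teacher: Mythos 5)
Your proposal is correct and follows essentially the same route as the paper's own proof: a \emph{terminating} answer means no prefix satisfying LP-check was produced, hence no cuts, hence the fully built $MT_{Q_0}$ is finite (by termination of TPoLP), and Corollary \ref{cor-main} (equivalently the contrapositive of the $\Longrightarrow$ direction of Theorem \ref{th-main}) then yields termination of $P$ for $Q_0$. Your version merely spells out the bookkeeping on the flag $L$ and the algorithm's exit points, which the paper leaves implicit.
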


\begin{proof}
If Algorithm \ref{alg1} returns {\em terminating}, 
no derivations were cut by LP-check, so
the moded generalized SLDNF-tree $MT_{Q_0}$ for $Q_0$ is finite.
By Corollary \ref{cor-main}, the logic program $P$ terminates for the query $Q_0$. 
\end{proof}

In the following examples, we choose a repetition number $r = 3$.

\begin{example}
\label{alg-eg1}
Consider Figure \ref{fig1}. 
Since the prefix $D_x$ between $N_0$ and $N_4$
satisfies the conditions of LP-check, Algorithm \ref{alg1} 
concludes that the derivation may
extend infinitely in $MT_{Q_0}$. It then
checks the term-size decrease property to see 
if $D_x$ has moded instances that would extend infinitely in $MF_{Q_0}$.
Clearly, $D_x$ has the term-size decrease property. So Algorithm \ref{alg1} 
skips $C_{p_2}$ at $N_4$ (the branch is cut by LP-check).
Consequently, Algorithm \ref{alg1} predicts terminating for $p({\cal I})$ by
returning an answer {\em predicted-terminating}.
This prediction is correct; see Example \ref{eg0-1}.      
\end{example}

\begin{example} 
Consider Figure \ref{fig2}. All the derivations
starting at $N_0$ and ending at $N_2$ satisfy the conditions of LP-check, so 
they are cut at $N_2$. Since the derivations in $MT_{Q_0^1}$ and $MT_{Q_0^3}$
have the term-size decrease property, 
Algorithm \ref{alg1} returns {\em predicted-terminating}
for $Q_0^1$ and $Q_0^3$. Since the derivation in 
$MT_{Q_0^2}$ does not have the term-size decrease property,
Algorithm \ref{alg1} returns {\em predicted-non-terminating} for $Q_0^2$. 
These predictions are all correct; see Example \ref{eg2}.  
\end{example} 
 
\begin{example}  
Consider the following logic program $P_3$:
\begin{tabbing} 
$\qquad\quad mult(s(X),Y,Z)\leftarrow mult(X,Y,U),add(U,Y,Z)$. \`$C_{m_1}$\\ 
$\qquad\quad mult(0,Y,0)$.       \`$C_{m_2}$ \\[.06in]
$\qquad\quad add(s(X),Y,s(Z))\leftarrow add(X,Y,Z)$.   \`$C_{a_1}$\\ 
$\qquad\quad add(0,Y,Y)$.   \`$C_{a_2}$
\end{tabbing}
$MQ(P_3)$ consists of fourteen moded queries, 
seven for predicate $mult$ and seven for predicate $add$. 
Applying Algorithm \ref{alg1} yields the following result: (1)
$P_3$ is {\em predicted-terminating} for all moded queries to 
$add$ except $add(V_1,{\cal I}_2,V_3)$ for which $P_3$ 
is {\em predicted-non-terminating},
and (2) $P_3$ is {\em predicted-terminating} for $mult({\cal I}_1,{\cal I}_2,V_3)$
and $mult({\cal I}_1,{\cal I}_2,$ ${\cal I}_3)$, but is {\em predicted-non-terminating}
for the remaining moded queries to $mult$. 
For illustration, we depict two moded generalized
SLDNF-trees for $mult({\cal I},V_2,V_3)$ and $mult({\cal I}_1,$ ${\cal I}_2,V_3)$,
as shown in Figures \ref{fig-eg3} (a) and (b), respectively.
In the two moded generalized SLDNF-trees, the prefix from $N_0$ down to $N_2$ satisfies 
the conditions of LP-check and has the term-size decrease property, so clause $C_{m_1}$
is skipped when expanding $N_2$. When the derivation is extended to $N_6$,
the conditions of LP-check are satisfied again, where $G_6$ is a loop goal of $G_5$ 
that is a loop goal of $G_4$. 
Since the derivation for $mult({\cal I},V_2,V_3)$ 
(Figure \ref{fig-eg3} (a)) does not have the term-size decrease property, Algorithm \ref{alg1}
returns an answer, {\em predicted-non-terminating}, for this moded query. 
The derivation for $mult({\cal I}_1,{\cal I}_2,V_3)$ 
(Figure \ref{fig-eg3} (b)) has the term-size decrease property, so clause 
$C_{a_1}$ is skipped when expanding $N_6$. For simplicity, we omitted all derivations
leading to a success leaf. Because all derivations satisfying 
the conditions of LP-check have the term-size decrease property, Algorithm \ref{alg1} ends with
an answer, {\em predicted-terminating}, for $mult({\cal I}_1,{\cal I}_2,V_3)$.
It is then immediately inferred by Theorem \ref{th-inc} 
that $P_3$ is {\em predicted-terminating} 
for $mult({\cal I}_1,{\cal I}_2,{\cal I}_3)$.
It is not difficult to verify that all these predictions are correct.
\begin{figure*}[ht]
\begin{tabbing}
$\qquad$
\setlength{\unitlength}{3552sp}%
\begingroup\makeatletter\ifx\SetFigFont\undefined%
\gdef\SetFigFont#1#2#3#4#5{%
  \reset@font\fontsize{#1}{#2pt}%
  \fontfamily{#3}\fontseries{#4}\fontshape{#5}%
  \selectfont}%
\fi\endgroup%
\begin{picture}(5025,3889)(1276,-3716)
\put(1801,-3661){\makebox(0,0)[lb]{\smash{{\SetFigFont{9}{10.8}{\rmdefault}{\mddefault}{\updefault}{\color[rgb]{0,0,0}(a)}%
}}}}
\put(5401,-3676){\makebox(0,0)[lb]{\smash{{\SetFigFont{9}{10.8}{\rmdefault}{\mddefault}{\updefault}{\color[rgb]{0,0,0}(b)}%
}}}}
\thinlines
{\color[rgb]{0,0,0}\put(1651, 14){\vector( 0,-1){300}}
}%
{\color[rgb]{0,0,0}\put(1651,-511){\vector( 0,-1){300}}
}%
{\color[rgb]{0,0,0}\put(1651,-2386){\vector( 0,-1){300}}
}%
{\color[rgb]{0,0,0}\put(1651,-2911){\vector( 0,-1){300}}
}%
{\color[rgb]{0,0,0}\put(1651,-1861){\vector( 0,-1){300}}
}%
{\color[rgb]{0,0,0}\put(1651,-1036){\vector( 0,-1){600}}
}%
{\color[rgb]{0,0,0}\put(5251, -1){\vector( 0,-1){300}}
}%
{\color[rgb]{0,0,0}\put(5251,-526){\vector( 0,-1){300}}
}%
{\color[rgb]{0,0,0}\put(5251,-2401){\vector( 0,-1){300}}
}%
{\color[rgb]{0,0,0}\put(5251,-2926){\vector( 0,-1){300}}
}%
{\color[rgb]{0,0,0}\put(5251,-1876){\vector( 0,-1){300}}
}%
{\color[rgb]{0,0,0}\put(5251,-1039){\vector( 0,-1){612}}
}%
\put(1276, 89){\makebox(0,0)[lb]{\smash{{\SetFigFont{8}{9.6}{\rmdefault}{\mddefault}{\updefault}{\color[rgb]{0,0,0}$N_0$:  $mult(\underline{I}, V_2, V_3)$}%
}}}}
\put(1276,-436){\makebox(0,0)[lb]{\smash{{\SetFigFont{8}{9.6}{\rmdefault}{\mddefault}{\updefault}{\color[rgb]{0,0,0}$N_1$:  $mult(\underline{X_1},V_2,U_1),add(U_1,V_2,V_3)$ }%
}}}}
\put(1276,-961){\makebox(0,0)[lb]{\smash{{\SetFigFont{8}{9.6}{\rmdefault}{\mddefault}{\updefault}{\color[rgb]{0,0,0}$N_2$:  $mult(\underline{X_2},V_2,U_2),add(U_2,V_2,U_1),$}%
}}}}
\put(1351,-136){\makebox(0,0)[lb]{\smash{{\SetFigFont{6}{7.2}{\rmdefault}{\mddefault}{\updefault}{\color[rgb]{0,0,0}$C_{m_1}$}%
}}}}
\put(1351,-661){\makebox(0,0)[lb]{\smash{{\SetFigFont{6}{7.2}{\rmdefault}{\mddefault}{\updefault}{\color[rgb]{0,0,0}$C_{m_1}$}%
}}}}
\put(1726,-136){\makebox(0,0)[lb]{\smash{{\SetFigFont{6}{7.2}{\rmdefault}{\mddefault}{\updefault}{\color[rgb]{0,0,0}$\theta_0 = \{\underline{I}/s(X_1),Y_1/V_2,Z_1/V_3\}$}%
}}}}
\put(1726,-661){\makebox(0,0)[lb]{\smash{{\SetFigFont{6}{7.2}{\rmdefault}{\mddefault}{\updefault}{\color[rgb]{0,0,0}$\theta_1 = \{\underline{X_1}/s(X_2),Y_2/V_2,Z_2/U_1\}$}%
}}}}
\put(1276,-2836){\makebox(0,0)[lb]{\smash{{\SetFigFont{8}{9.6}{\rmdefault}{\mddefault}{\updefault}{\color[rgb]{0,0,0}$N_5$:  $add(X_3,s(X_3),Z_3)$ }%
}}}}
\put(1276,-3361){\makebox(0,0)[lb]{\smash{{\SetFigFont{8}{9.6}{\rmdefault}{\mddefault}{\updefault}{\color[rgb]{0,0,0}$N_6$:  $add(X_4,s(s(X_4)),Z_4)$ }%
}}}}
\put(1351,-2536){\makebox(0,0)[lb]{\smash{{\SetFigFont{6}{7.2}{\rmdefault}{\mddefault}{\updefault}{\color[rgb]{0,0,0}$C_{a_1}$}%
}}}}
\put(1351,-3061){\makebox(0,0)[lb]{\smash{{\SetFigFont{6}{7.2}{\rmdefault}{\mddefault}{\updefault}{\color[rgb]{0,0,0}$C_{a_1}$}%
}}}}
\put(1726,-2536){\makebox(0,0)[lb]{\smash{{\SetFigFont{6}{7.2}{\rmdefault}{\mddefault}{\updefault}{\color[rgb]{0,0,0}$\theta_4 = \{V_2/s(X_3),V_3/s(Z_3)\}$}%
}}}}
\put(1726,-3061){\makebox(0,0)[lb]{\smash{{\SetFigFont{6}{7.2}{\rmdefault}{\mddefault}{\updefault}{\color[rgb]{0,0,0}$\theta_5 = \{X_3/s(X_4),Z_3/s(Z_4)\}$}%
}}}}
\put(1276,-1786){\makebox(0,0)[lb]{\smash{{\SetFigFont{8}{9.6}{\rmdefault}{\mddefault}{\updefault}{\color[rgb]{0,0,0}$N_3$:  $add(0,V_2,U_1),add(U_1,V_2,V_3)$ }%
}}}}
\put(1276,-2311){\makebox(0,0)[lb]{\smash{{\SetFigFont{8}{9.6}{\rmdefault}{\mddefault}{\updefault}{\color[rgb]{0,0,0}$N_4$:  $add(V_2,V_2,V_3)$ }%
}}}}
\put(1351,-1486){\makebox(0,0)[lb]{\smash{{\SetFigFont{6}{7.2}{\rmdefault}{\mddefault}{\updefault}{\color[rgb]{0,0,0}$C_{m_2}$}%
}}}}
\put(1351,-2011){\makebox(0,0)[lb]{\smash{{\SetFigFont{6}{7.2}{\rmdefault}{\mddefault}{\updefault}{\color[rgb]{0,0,0}$C_{a_2}$}%
}}}}
\put(1726,-1486){\makebox(0,0)[lb]{\smash{{\SetFigFont{6}{7.2}{\rmdefault}{\mddefault}{\updefault}{\color[rgb]{0,0,0}$\theta_2 = \{\underline{X_2}/0,Y_3/V_2,U_2/0\}$}%
}}}}
\put(1726,-2011){\makebox(0,0)[lb]{\smash{{\SetFigFont{6}{7.2}{\rmdefault}{\mddefault}{\updefault}{\color[rgb]{0,0,0}$\theta_3 = \{Y_4/V_2,U_1/V_2\}$}%
}}}}
\put(2776,-1186){\makebox(0,0)[lb]{\smash{{\SetFigFont{8}{9.6}{\rmdefault}{\mddefault}{\updefault}{\color[rgb]{0,0,0}$add(U_1,V_2,V_3)$ }%
}}}}
\put(4876, 74){\makebox(0,0)[lb]{\smash{{\SetFigFont{8}{9.6}{\rmdefault}{\mddefault}{\updefault}{\color[rgb]{0,0,0}$N_0$:  $mult(\underline{I_1}, \underline{I_2}, V_3)$}%
}}}}
\put(4876,-451){\makebox(0,0)[lb]{\smash{{\SetFigFont{8}{9.6}{\rmdefault}{\mddefault}{\updefault}{\color[rgb]{0,0,0}$N_1$:  $mult(\underline{X_1},\underline{I_2},U_1),add(U_1,\underline{I_2},V_3)$ }%
}}}}
\put(4876,-976){\makebox(0,0)[lb]{\smash{{\SetFigFont{8}{9.6}{\rmdefault}{\mddefault}{\updefault}{\color[rgb]{0,0,0}$N_2$:  $mult(\underline{X_2},\underline{I_2},U_2),add(U_2,\underline{I_2},U_1),$}%
}}}}
\put(4951,-151){\makebox(0,0)[lb]{\smash{{\SetFigFont{6}{7.2}{\rmdefault}{\mddefault}{\updefault}{\color[rgb]{0,0,0}$C_{m_1}$}%
}}}}
\put(4951,-676){\makebox(0,0)[lb]{\smash{{\SetFigFont{6}{7.2}{\rmdefault}{\mddefault}{\updefault}{\color[rgb]{0,0,0}$C_{m_1}$}%
}}}}
\put(5326,-151){\makebox(0,0)[lb]{\smash{{\SetFigFont{6}{7.2}{\rmdefault}{\mddefault}{\updefault}{\color[rgb]{0,0,0}$\theta_0 = \{\underline{I_1}/s(X_1),Y_1/\underline{I_2},Z_1/V_3\}$}%
}}}}
\put(5326,-676){\makebox(0,0)[lb]{\smash{{\SetFigFont{6}{7.2}{\rmdefault}{\mddefault}{\updefault}{\color[rgb]{0,0,0}$\theta_1 = \{\underline{X_1}/s(X_2),Y_2/\underline{I_2},Z_2/U_1\}$}%
}}}}
\put(4876,-3376){\makebox(0,0)[lb]{\smash{{\SetFigFont{8}{9.6}{\rmdefault}{\mddefault}{\updefault}{\color[rgb]{0,0,0}$N_6$:  $add(\underline{X_4},s(s(\underline{X_4})),Z_4)$ }%
}}}}
\put(4951,-2551){\makebox(0,0)[lb]{\smash{{\SetFigFont{6}{7.2}{\rmdefault}{\mddefault}{\updefault}{\color[rgb]{0,0,0}$C_{a_1}$}%
}}}}
\put(4951,-3076){\makebox(0,0)[lb]{\smash{{\SetFigFont{6}{7.2}{\rmdefault}{\mddefault}{\updefault}{\color[rgb]{0,0,0}$C_{a_1}$}%
}}}}
\put(5326,-2551){\makebox(0,0)[lb]{\smash{{\SetFigFont{6}{7.2}{\rmdefault}{\mddefault}{\updefault}{\color[rgb]{0,0,0}$\theta_4 = \{\underline{I_2}/s(X_3),V_3/s(Z_3)\}$}%
}}}}
\put(5326,-3076){\makebox(0,0)[lb]{\smash{{\SetFigFont{6}{7.2}{\rmdefault}{\mddefault}{\updefault}{\color[rgb]{0,0,0}$\theta_5 = \{\underline{X_3}/s(X_4),Z_3/s(Z_4)\}$}%
}}}}
\put(4876,-1801){\makebox(0,0)[lb]{\smash{{\SetFigFont{8}{9.6}{\rmdefault}{\mddefault}{\updefault}{\color[rgb]{0,0,0}$N_3$:  $add(0,\underline{I_2},U_1),add(U_1,\underline{I_2},V_3)$ }%
}}}}
\put(4876,-2326){\makebox(0,0)[lb]{\smash{{\SetFigFont{8}{9.6}{\rmdefault}{\mddefault}{\updefault}{\color[rgb]{0,0,0}$N_4$:  $add(\underline{I_2},\underline{I_2},V_3)$ }%
}}}}
\put(4951,-1501){\makebox(0,0)[lb]{\smash{{\SetFigFont{6}{7.2}{\rmdefault}{\mddefault}{\updefault}{\color[rgb]{0,0,0}$C_{m_2}$}%
}}}}
\put(4951,-2026){\makebox(0,0)[lb]{\smash{{\SetFigFont{6}{7.2}{\rmdefault}{\mddefault}{\updefault}{\color[rgb]{0,0,0}$C_{a_2}$}%
}}}}
\put(5326,-1501){\makebox(0,0)[lb]{\smash{{\SetFigFont{6}{7.2}{\rmdefault}{\mddefault}{\updefault}{\color[rgb]{0,0,0}$\theta_2 = \{\underline{X_2}/0,Y_3/\underline{I_2},U_2/0\}$}%
}}}}
\put(5326,-2026){\makebox(0,0)[lb]{\smash{{\SetFigFont{6}{7.2}{\rmdefault}{\mddefault}{\updefault}{\color[rgb]{0,0,0}$\theta_3 = \{Y_4/\underline{I_2},U_1/\underline{I_2}\}$}%
}}}}
\put(4876,-2851){\makebox(0,0)[lb]{\smash{{\SetFigFont{8}{9.6}{\rmdefault}{\mddefault}{\updefault}{\color[rgb]{0,0,0}$N_5$:  $add(\underline{X_3},s(\underline{X_3}),Z_3)$ }%
}}}}
\put(6301,-1186){\makebox(0,0)[lb]{\smash{{\SetFigFont{8}{9.6}{\rmdefault}{\mddefault}{\updefault}{\color[rgb]{0,0,0}$add(U_1,\underline{I_2},V_3)$}%
}}}}
\end{picture}%
\end{tabbing}
\caption{Two moded generalized SLDNF-trees of $P_3$ generated by Algorithm \ref{alg1}.}\label{fig-eg3} 
\end{figure*}  
\end{example} 

AProVE07 \cite{GST06}, NTI \cite{PM06,Pay06}, Polytool \cite{ND05,NDGS06} and TALP \cite{OCM00}
are four well-known state-of-the-art analyzers. 
NTI proves non-termination, while the others prove termination.
The Termination Competition 2007 
(http://www.lri.fr/\verb+~+marche/termination-competition/2007) 
reports their latest performance.
We borrow three representative logic programs
from the competition website to further demonstrate the effectiveness of our termination
prediction.

\begin{example}
\label{subset1} 
Consider the following logic program coming from 
the Termination Competition 2007 
with Problem id {\em LP/talp/apt - subset1} and 
difficulty rating $100\%$. 
AProVE07, NTI, Polytool and TALP all failed to prove/disprove its termination by
yielding an answer ``don't know'' in the competition.
\begin{tabbing} 
$\qquad$ \= $P_4: \quad$ \= $member1(X,[Y|Xs]) \leftarrow member1(X,Xs).$ \`$C_{m_1}$\\ 
\>\> $member1(X,[X|Xs]).$ \`$C_{m_2}$ \\[.06in] 
\>\> $subset1([X|Xs],Ys) :- member1(X,Ys), subset1(Xs,Ys).$ \`$C_{s_1}$\\ 
\>\> $subset1([],Ys).$ \`$C_{s_2}$ \\[.06in] 
\> Query Mode:  $\ subset1(o,i)$. 
\end{tabbing}
The query mode $subset1(o,i)$ means that the second argument
of any query must be a ground term, while the first one
can be an arbitrary term. Then, to prove the termination property
of $P_4$ with this query mode is to prove the termination for the moded query
$Q_0 = subset1(V, {\cal I})$. Applying Algorithm \ref{alg1}
generates a moded generalized SLDNF-tree
as shown in Figure \ref{fig-subset1}. The prefix from $N_0$ down to $N_3$ satisfies 
the conditions of LP-check and has the term-size decrease property, so clause $C_{m_1}$
is skipped when expanding $N_3$. When the derivation is extended to $N_{10}$,
the conditions of LP-check are satisfied again, where $G_{10}$ is a loop goal of $G_9$ 
that is a loop goal of $G_8$. Since the derivation has the term-size decrease property,
$N_{10}$ is expanded by $C_{m_2}$. 
\begin{figure*}[htb]
\begin{tabbing}
\setlength{\unitlength}{3276sp}%
\begingroup\makeatletter\ifx\SetFigFont\undefined%
\gdef\SetFigFont#1#2#3#4#5{%
  \reset@font\fontsize{#1}{#2pt}%
  \fontfamily{#3}\fontseries{#4}\fontshape{#5}%
  \selectfont}%
\fi\endgroup%
\begin{picture}(6521,6415)(751,-6092)
\thinlines
{\color[rgb]{0,0,0}\put(3376,-4036){\line( 0,-1){225}}
\put(3376,-4261){\line( 1, 0){3750}}
}%
{\color[rgb]{0,0,0}\put(3376,-3511){\line( 0,-1){225}}
\put(3376,-3736){\line( 1, 0){3750}}
}%
{\color[rgb]{0,0,0}\put(3376,-4561){\line( 0,-1){264}}
\put(3376,-4825){\line( 1, 0){2625}}
\put(6001,-4825){\line( 0,-1){261}}
}%
{\color[rgb]{0,0,0}\put(1126,-1936){\vector( 0,-1){300}}
}%
{\color[rgb]{0,0,0}\put(1126,-2461){\vector( 0,-1){300}}
}%
{\color[rgb]{0,0,0}\put(1126,-886){\vector( 0,-1){300}}
}%
{\color[rgb]{0,0,0}\put(1126,-1411){\vector( 0,-1){300}}
}%
{\color[rgb]{0,0,0}\put(1126,164){\vector( 0,-1){300}}
}%
{\color[rgb]{0,0,0}\put(1126,-361){\vector( 0,-1){300}}
}%
{\color[rgb]{0,0,0}\put(1126,-2986){\vector( 0,-1){300}}
}%
{\color[rgb]{0,0,0}\put(1126,-3511){\vector( 0,-1){300}}
}%
{\color[rgb]{0,0,0}\put(1126,-4036){\vector( 0,-1){300}}
}%
{\color[rgb]{0,0,0}\put(1126,-4561){\vector( 0,-1){300}}
}%
{\color[rgb]{0,0,0}\put(7126,-4261){\vector( 0,-1){300}}
}%
{\color[rgb]{0,0,0}\put(7126,-4786){\vector( 0,-1){300}}
}%
{\color[rgb]{0,0,0}\put(1126,-5086){\vector( 0,-1){300}}
}%
{\color[rgb]{0,0,0}\put(1126,-5611){\vector( 0,-1){300}}
}%
{\color[rgb]{0,0,0}\put(7126,-3736){\vector( 0,-1){232}}
}%
{\color[rgb]{0,0,0}\put(6001,-5086){\vector( 0,-1){300}}
}%
{\color[rgb]{0,0,0}\put(6001,-5611){\vector( 0,-1){300}}
}%
\put(751,239){\makebox(0,0)[lb]{\smash{{\SetFigFont{7}{8.4}{\rmdefault}{\mddefault}{\updefault}{\color[rgb]{0,0,0}$N_0$:  $subset1(V, \underline{I})$}%
}}}}
\put(826,-2086){\makebox(0,0)[lb]{\smash{{\SetFigFont{6}{7.2}{\rmdefault}{\mddefault}{\updefault}{\color[rgb]{0,0,0}$C_{s_1}$}%
}}}}
\put(826,-2611){\makebox(0,0)[lb]{\smash{{\SetFigFont{6}{7.2}{\rmdefault}{\mddefault}{\updefault}{\color[rgb]{0,0,0}$C_{m_1}$}%
}}}}
\put(826,-1036){\makebox(0,0)[lb]{\smash{{\SetFigFont{6}{7.2}{\rmdefault}{\mddefault}{\updefault}{\color[rgb]{0,0,0}$C_{m_1}$}%
}}}}
\put(826,-1561){\makebox(0,0)[lb]{\smash{{\SetFigFont{6}{7.2}{\rmdefault}{\mddefault}{\updefault}{\color[rgb]{0,0,0}$C_{m_2}$}%
}}}}
\put(1201,-1036){\makebox(0,0)[lb]{\smash{{\SetFigFont{6}{7.2}{\rmdefault}{\mddefault}{\updefault}{\color[rgb]{0,0,0}$\theta_2 = \{\underline{Xs_1}/[Y_2|Xs_2]\}$}%
}}}}
\put(1201,-1561){\makebox(0,0)[lb]{\smash{{\SetFigFont{6}{7.2}{\rmdefault}{\mddefault}{\updefault}{\color[rgb]{0,0,0}$\theta_3 = \{\underline{Xs_2}/[X|Xs_3]\}$}%
}}}}
\put(751,-286){\makebox(0,0)[lb]{\smash{{\SetFigFont{7}{8.4}{\rmdefault}{\mddefault}{\updefault}{\color[rgb]{0,0,0}$N_1$:  $member1(X,\underline{I}),subset1(Xs,\underline{I})$ }%
}}}}
\put(826, 14){\makebox(0,0)[lb]{\smash{{\SetFigFont{6}{7.2}{\rmdefault}{\mddefault}{\updefault}{\color[rgb]{0,0,0}$C_{s_1}$}%
}}}}
\put(826,-511){\makebox(0,0)[lb]{\smash{{\SetFigFont{6}{7.2}{\rmdefault}{\mddefault}{\updefault}{\color[rgb]{0,0,0}$C_{m_1}$}%
}}}}
\put(1201,-511){\makebox(0,0)[lb]{\smash{{\SetFigFont{6}{7.2}{\rmdefault}{\mddefault}{\updefault}{\color[rgb]{0,0,0}$\theta_1 = \{\underline{I}/[Y_1|Xs_1]\}$}%
}}}}
\put(751,-811){\makebox(0,0)[lb]{\smash{{\SetFigFont{7}{8.4}{\rmdefault}{\mddefault}{\updefault}{\color[rgb]{0,0,0}$N_2$:  $member1(X,\underline{Xs_1}),subset1(Xs,[\underline{Y_1}|\underline{Xs_1]})$ }%
}}}}
\put(751,-2386){\makebox(0,0)[lb]{\smash{{\SetFigFont{7}{8.4}{\rmdefault}{\mddefault}{\updefault}{\color[rgb]{0,0,0}$N_5$:  $member1(X_1,[\underline{Y_1}|[\underline{Y_2}|[\underline{X}|\underline{Xs_3}]]]),subset1(Xs_4,[\underline{Y_1}|[\underline{Y_2}|[\underline{X}|\underline{Xs_3}]]])$ }%
}}}}
\put(751,-2911){\makebox(0,0)[lb]{\smash{{\SetFigFont{7}{8.4}{\rmdefault}{\mddefault}{\updefault}{\color[rgb]{0,0,0}$N_6$:  $member1(X_1,[\underline{Y_2}|[\underline{X}|\underline{Xs_3}]]),subset1(Xs_4,[\underline{Y_1}|[\underline{Y_2}|[\underline{X}|\underline{Xs_3}]]])$ }%
}}}}
\put(751,-3961){\makebox(0,0)[lb]{\smash{{\SetFigFont{7}{8.4}{\rmdefault}{\mddefault}{\updefault}{\color[rgb]{0,0,0}$N_8$:  $member1(X_1,\underline{Xs_3}),subset1(Xs_4,[\underline{Y_1}|[\underline{Y_2}|[\underline{X}|\underline{Xs_3}]]])$ }%
}}}}
\put(751,-4486){\makebox(0,0)[lb]{\smash{{\SetFigFont{7}{8.4}{\rmdefault}{\mddefault}{\updefault}{\color[rgb]{0,0,0}$N_9$:  $member1(X_1,\underline{Xs_5}),subset1(Xs_4,[\underline{Y_1}|[\underline{Y_2}|[\underline{X}|[\underline{Y_3}|\underline{Xs_5}]]]])$ }%
}}}}
\put(751,-1861){\makebox(0,0)[lb]{\smash{{\SetFigFont{7}{8.4}{\rmdefault}{\mddefault}{\updefault}{\color[rgb]{0,0,0}$N_4$:  $subset1(Xs,[\underline{Y_1}|[\underline{Y_2}|[\underline{X}|\underline{Xs_3}]]])$ }%
}}}}
\put(751,-1336){\makebox(0,0)[lb]{\smash{{\SetFigFont{7}{8.4}{\rmdefault}{\mddefault}{\updefault}{\color[rgb]{0,0,0}$N_3$:  $member1(X,\underline{Xs_2}),subset1(Xs,[\underline{Y_1}|[\underline{Y_2}|\underline{Xs_2}]])$ }%
}}}}
\put(751,-3436){\makebox(0,0)[lb]{\smash{{\SetFigFont{7}{8.4}{\rmdefault}{\mddefault}{\updefault}{\color[rgb]{0,0,0}$N_7$:  $member1(X_1,[\underline{X}|\underline{Xs_3}]),subset1(Xs_4,[\underline{Y_1}|[\underline{Y_2}|[\underline{X}|\underline{Xs_3}]]])$ }%
}}}}
\put(1201, 14){\makebox(0,0)[lb]{\smash{{\SetFigFont{6}{7.2}{\rmdefault}{\mddefault}{\updefault}{\color[rgb]{0,0,0}$\theta_0 = \{V/[X|X_s]\}$}%
}}}}
\put(826,-3136){\makebox(0,0)[lb]{\smash{{\SetFigFont{6}{7.2}{\rmdefault}{\mddefault}{\updefault}{\color[rgb]{0,0,0}$C_{m_1}$}%
}}}}
\put(826,-3661){\makebox(0,0)[lb]{\smash{{\SetFigFont{6}{7.2}{\rmdefault}{\mddefault}{\updefault}{\color[rgb]{0,0,0}$C_{m_1}$}%
}}}}
\put(826,-4186){\makebox(0,0)[lb]{\smash{{\SetFigFont{6}{7.2}{\rmdefault}{\mddefault}{\updefault}{\color[rgb]{0,0,0}$C_{m_1}$}%
}}}}
\put(826,-4711){\makebox(0,0)[lb]{\smash{{\SetFigFont{6}{7.2}{\rmdefault}{\mddefault}{\updefault}{\color[rgb]{0,0,0}$C_{m_1}$}%
}}}}
\put(1201,-2086){\makebox(0,0)[lb]{\smash{{\SetFigFont{6}{7.2}{\rmdefault}{\mddefault}{\updefault}{\color[rgb]{0,0,0}$\theta_4 = \{Xs/[X_1|Xs_4]\}$}%
}}}}
\put(1201,-4186){\makebox(0,0)[lb]{\smash{{\SetFigFont{6}{7.2}{\rmdefault}{\mddefault}{\updefault}{\color[rgb]{0,0,0}$\theta_8 = \{\underline{Xs_3}/[Y_3|Xs_5]\}$}%
}}}}
\put(1201,-4711){\makebox(0,0)[lb]{\smash{{\SetFigFont{6}{7.2}{\rmdefault}{\mddefault}{\updefault}{\color[rgb]{0,0,0}$\theta_9 = \{\underline{Xs_5}/[Y_4|Xs_6]\}$}%
}}}}
\put(5626,-6061){\makebox(0,0)[lb]{\smash{{\SetFigFont{7}{8.4}{\rmdefault}{\mddefault}{\updefault}{\color[rgb]{0,0,0}$N_{14}$:  $\Box_t$}%
}}}}
\put(3076,-4711){\makebox(0,0)[lb]{\smash{{\SetFigFont{6}{7.2}{\rmdefault}{\mddefault}{\updefault}{\color[rgb]{0,0,0}$C_{m_2}$}%
}}}}
\put(6751,-5236){\makebox(0,0)[lb]{\smash{{\SetFigFont{7}{8.4}{\rmdefault}{\mddefault}{\updefault}{\color[rgb]{0,0,0}$N_{16}$:  $\Box_t$}%
}}}}
\put(6826,-4936){\makebox(0,0)[lb]{\smash{{\SetFigFont{6}{7.2}{\rmdefault}{\mddefault}{\updefault}{\color[rgb]{0,0,0}$C_{s_2}$}%
}}}}
\put(3451,-4711){\makebox(0,0)[lb]{\smash{{\SetFigFont{6}{7.2}{\rmdefault}{\mddefault}{\updefault}{\color[rgb]{0,0,0}$\theta_{9'} = \{\underline{Xs_5}/[X_1|Xs_6]\}$}%
}}}}
\put(3451,-4186){\makebox(0,0)[lb]{\smash{{\SetFigFont{6}{7.2}{\rmdefault}{\mddefault}{\updefault}{\color[rgb]{0,0,0}$\theta_{8'} = \{\underline{Xs_3}/[X_1|Xs_5]\}$}%
}}}}
\put(751,-6061){\makebox(0,0)[lb]{\smash{{\SetFigFont{7}{8.4}{\rmdefault}{\mddefault}{\updefault}{\color[rgb]{0,0,0}$N_{12}$:  $\Box_t$}%
}}}}
\put(751,-5536){\makebox(0,0)[lb]{\smash{{\SetFigFont{7}{8.4}{\rmdefault}{\mddefault}{\updefault}{\color[rgb]{0,0,0}$N_{11}$:  $subset1(Xs_4,[\underline{Y_1}|[\underline{Y_2}|[\underline{X}|[\underline{Y_3}|[\underline{Y_4}|[\underline{X_1}|\underline{Xs_7}]]]]]])$ }%
}}}}
\put(751,-5011){\makebox(0,0)[lb]{\smash{{\SetFigFont{7}{8.4}{\rmdefault}{\mddefault}{\updefault}{\color[rgb]{0,0,0}$N_{10}$:  $member1(X_1,\underline{Xs_6}),subset1(Xs_4,[\underline{Y_1}|[\underline{Y_2}|[\underline{X}|[\underline{Y_3}|[\underline{Y_4}|\underline{Xs_6}]]]]])$ }%
}}}}
\put(826,-5236){\makebox(0,0)[lb]{\smash{{\SetFigFont{6}{7.2}{\rmdefault}{\mddefault}{\updefault}{\color[rgb]{0,0,0}$C_{m_2}$}%
}}}}
\put(826,-5761){\makebox(0,0)[lb]{\smash{{\SetFigFont{6}{7.2}{\rmdefault}{\mddefault}{\updefault}{\color[rgb]{0,0,0}$C_{s_2}$}%
}}}}
\put(1201,-5236){\makebox(0,0)[lb]{\smash{{\SetFigFont{6}{7.2}{\rmdefault}{\mddefault}{\updefault}{\color[rgb]{0,0,0}$\theta_{10} = \{\underline{Xs_6}/[X_1|Xs_7]\}$}%
}}}}
\put(3076,-4186){\makebox(0,0)[lb]{\smash{{\SetFigFont{6}{7.2}{\rmdefault}{\mddefault}{\updefault}{\color[rgb]{0,0,0}$C_{m_2}$}%
}}}}
\put(3451,-3661){\makebox(0,0)[lb]{\smash{{\SetFigFont{6}{7.2}{\rmdefault}{\mddefault}{\updefault}{\color[rgb]{0,0,0}$\theta_{7'} = \{X_1/\underline{X}\}$}%
}}}}
\put(3076,-3661){\makebox(0,0)[lb]{\smash{{\SetFigFont{6}{7.2}{\rmdefault}{\mddefault}{\updefault}{\color[rgb]{0,0,0}$C_{m_2}$}%
}}}}
\put(5701,-5761){\makebox(0,0)[lb]{\smash{{\SetFigFont{6}{7.2}{\rmdefault}{\mddefault}{\updefault}{\color[rgb]{0,0,0}$C_{s_2}$}%
}}}}
\put(4576,-5536){\makebox(0,0)[lb]{\smash{{\SetFigFont{7}{8.4}{\rmdefault}{\mddefault}{\updefault}{\color[rgb]{0,0,0}$N_{13}$:  $subset1(Xs_4,[\underline{Y_1}|[\underline{Y_2}|[\underline{X}|[\underline{Y_3}|[\underline{X_1}|\underline{Xs_6}]]]]])$ }%
}}}}
\put(5026,-4711){\makebox(0,0)[lb]{\smash{{\SetFigFont{7}{8.4}{\rmdefault}{\mddefault}{\updefault}{\color[rgb]{0,0,0}$N_{15}$:  $subset1(Xs_4,[\underline{Y_1}|[\underline{Y_2}|[\underline{X}|[\underline{X_1}|\underline{Xs_5}]]]])$ }%
}}}}
\put(5326,-4111){\makebox(0,0)[lb]{\smash{{\SetFigFont{7}{8.4}{\rmdefault}{\mddefault}{\updefault}{\color[rgb]{0,0,0}$N_{17}$:  $subset1(Xs_4,[\underline{Y_1}|[\underline{Y_2}|[\underline{X}|\underline{Xs_3}]]])$ }%
}}}}
\end{picture}%
\end{tabbing}
\caption{The moded generalized SLDNF-tree of $P_4$ generated by Algorithm \ref{alg1}.}\label{fig-subset1} 
\end{figure*}  

At $N_{11}$ (resp. $N_{13}$ and $N_{15}$), 
the derivation satisfies the conditions of LP-check and has the term-size decrease property,
where $G_{11}$ (resp. $N_{13}$ and $N_{15}$) is a loop goal of $G_4$ 
that is a loop goal of $G_0$. Therefore, $N_{11}$ (resp. $N_{13}$ and $N_{15}$)
is expanded by $C_{s_2}$. When the derivation is extended to $N_{17}$, 
the conditions of LP-check are satisfied, where $G_{17}$ is a loop goal of $G_4$ 
that is a loop goal of $G_0$, but the term-size decrease condition is violated. 
Algorithm \ref{alg1} stops immediately with
an answer, {\em predicted-non-terminating}, for the query   
$Q_0$. It is easy to verify that this prediction is correct.
\end{example} 

\begin{example}
\label{incomplete}  
Consider another logic program in
the Termination Competition 2007  
with Problem id {\em LP/SGST06 - incomplete} and 
difficulty rating $75\%$. Polytool succeeded to prove its termination, 
while AProVE07, NTI and TALP failed.
\begin{tabbing} 
$\qquad$ \= $P_5: \quad$ \= $p(X) \leftarrow  q(f(Y)), p(Y).$ \`$C_{p_1}$\\ 
\>\> $p(g(X)) \leftarrow  p(X).$ \`$C_{p_2}$ \\ 
\>\> $q(g(Y)).$ \`$C_{q_1}$ \\[.06in] 
\> Query Mode:  $\ p(i)$. 
\end{tabbing}
To prove the termination property
of $P_5$ with this query mode is to prove the termination for the moded query
$Q_0 = p({\cal I})$. Applying Algorithm \ref{alg1}
generates a moded generalized SLDNF-tree
as shown in Figure \ref{fig-incomplete}. The prefix from $N_0$ down to $N_4$ satisfies 
the conditions of LP-check and has the term-size decrease property, so clause $C_{p_2}$
is skipped when expanding $N_4$. Algorithm \ref{alg1} yields
an answer {\em predicted-terminating} for the query   
$Q_0$. This prediction is correct.
\begin{figure*}[htb]
\begin{center}
\setlength{\unitlength}{3947sp}%
\begingroup\makeatletter\ifx\SetFigFont\undefined%
\gdef\SetFigFont#1#2#3#4#5{%
  \reset@font\fontsize{#1}{#2pt}%
  \fontfamily{#3}\fontseries{#4}\fontshape{#5}%
  \selectfont}%
\fi\endgroup%
\begin{picture}(2821,1482)(1801,-847)
\thinlines
{\color[rgb]{0,0,0}\put(2921,464){\vector(-2,-1){336}}
}%
{\color[rgb]{0,0,0}\put(3256,464){\vector( 2,-1){336}}
}%
{\color[rgb]{0,0,0}\put(3646, 89){\vector(-2,-1){336}}
}%
{\color[rgb]{0,0,0}\put(3856, 89){\vector( 2,-1){336}}
}%
{\color[rgb]{0,0,0}\put(4351,-286){\vector(-2,-1){336}}
}%
\put(2851,539){\makebox(0,0)[lb]{\smash{{\SetFigFont{9}{10.8}{\rmdefault}{\mddefault}{\updefault}{\color[rgb]{0,0,0}$N_0$:  $p(\underline{I})$}%
}}}}
\put(2476,464){\makebox(0,0)[lb]{\smash{{\SetFigFont{8}{9.6}{\rmdefault}{\mddefault}{\updefault}{\color[rgb]{0,0,0}$C_{p_1}$}%
}}}}
\put(2551,164){\makebox(0,0)[lb]{\smash{{\SetFigFont{9}{10.8}{\rmdefault}{\mddefault}{\updefault}{\color[rgb]{0,0,0}$\Box_f$ }%
}}}}
\put(3526,389){\makebox(0,0)[lb]{\smash{{\SetFigFont{8}{9.6}{\rmdefault}{\mddefault}{\updefault}{\color[rgb]{0,0,0}$\theta_2 = \{\underline{I}/g(X)\}$}%
}}}}
\put(3226,-211){\makebox(0,0)[lb]{\smash{{\SetFigFont{9}{10.8}{\rmdefault}{\mddefault}{\updefault}{\color[rgb]{0,0,0}$\Box_f$ }%
}}}}
\put(3901,-361){\makebox(0,0)[lb]{\smash{{\SetFigFont{8}{9.6}{\rmdefault}{\mddefault}{\updefault}{\color[rgb]{0,0,0}$C_{p_1}$}%
}}}}
\put(3976,-586){\makebox(0,0)[lb]{\smash{{\SetFigFont{9}{10.8}{\rmdefault}{\mddefault}{\updefault}{\color[rgb]{0,0,0}$\Box_f$ }%
}}}}
\put(4051,-211){\makebox(0,0)[lb]{\smash{{\SetFigFont{9}{10.8}{\rmdefault}{\mddefault}{\updefault}{\color[rgb]{0,0,0}$N_4$:  $p(\underline{X_1})$ }%
}}}}
\put(3076,314){\makebox(0,0)[lb]{\smash{{\SetFigFont{8}{9.6}{\rmdefault}{\mddefault}{\updefault}{\color[rgb]{0,0,0}$C_{p_2}$}%
}}}}
\put(3376,164){\makebox(0,0)[lb]{\smash{{\SetFigFont{9}{10.8}{\rmdefault}{\mddefault}{\updefault}{\color[rgb]{0,0,0}$N_2$:  $p(\underline{X})$ }%
}}}}
\put(3751,-61){\makebox(0,0)[lb]{\smash{{\SetFigFont{8}{9.6}{\rmdefault}{\mddefault}{\updefault}{\color[rgb]{0,0,0}$C_{p_2}$}%
}}}}
\put(4201, 14){\makebox(0,0)[lb]{\smash{{\SetFigFont{8}{9.6}{\rmdefault}{\mddefault}{\updefault}{\color[rgb]{0,0,0}$\theta_4 = \{\underline{X}/g(X_1)\}$}%
}}}}
\put(3376,-811){\makebox(0,0)[lb]{\smash{{\SetFigFont{9}{10.8}{\rmdefault}{\mddefault}{\updefault}{\color[rgb]{0,0,0}$N_5$:  $q(f(Y)),p(Y)$ }%
}}}}
\put(2476,-436){\makebox(0,0)[lb]{\smash{{\SetFigFont{9}{10.8}{\rmdefault}{\mddefault}{\updefault}{\color[rgb]{0,0,0}$N_3$:  $q(f(Y)),p(Y)$ }%
}}}}
\put(1801,-61){\makebox(0,0)[lb]{\smash{{\SetFigFont{9}{10.8}{\rmdefault}{\mddefault}{\updefault}{\color[rgb]{0,0,0}$N_1$:  $q(f(Y)),p(Y)$ }%
}}}}
\put(3226, 89){\makebox(0,0)[lb]{\smash{{\SetFigFont{8}{9.6}{\rmdefault}{\mddefault}{\updefault}{\color[rgb]{0,0,0}$C_{p_1}$}%
}}}}
\end{picture}%
\end{center}
\caption{The moded generalized SLDNF-tree of $P_5$ generated by Algorithm \ref{alg1}.}\label{fig-incomplete} 
\end{figure*}  
\end{example} 

\begin{example}
\label{incomplete2} 
Consider a third logic program from 
the Termination Competition 2007  
with Problem id {\em LP/SGST06 - incomplete2} and 
difficulty rating $75\%$. In contrast to Example \ref{incomplete},
for this program AProVE07 succeeded to prove its termination, 
while Polytool, NTI and TALP failed.
\begin{tabbing} 
$\qquad$ \= $P_6: \quad$ \= $f(X) \leftarrow  g(s(s(s(X)))).$ \`$C_{f_1}$\\ 
\>\> $f(s(X)) \leftarrow  f(X).$ \`$C_{f_2}$ \\ 
\>\> $g(s(s(s(s(X))))) \leftarrow f(X).$ \`$C_{g_1}$ \\[.06in] 
\> Query Mode:  $\ f(i)$. 
\end{tabbing}
To prove the termination property
of $P_6$ with this query mode is to prove the termination for the moded query
$Q_0 = f({\cal I})$. Applying Algorithm \ref{alg1}
generates a moded generalized SLDNF-tree
as shown in Figure \ref{fig-incomplete2}. 
$C_{f_1}$ and/or $C_{f_2}$ is skipped at $N_4, N_5, N_6, N_9, N_{10},$ $N_{11}, N_{13}, N_{18},$ 
$N_{19}, N_{20}, N_{22}, N_{23}, N_{25}$ and $N_{27}$,
due to the occurrence of the following prefixes which satisfy 
both the conditions of LP-check and the term-size decrease condition:
\begin{tabbing}
1.$\quad$ \= $N_0:f(\underline{I}) \Rightarrow_{C_{f_1}} ...\ N_2:f(\underline{X}) \Rightarrow_{C_{f_1}}  ...\   
N_4:f(\underline{X_1}) \Rightarrow_{C_{f_1}}$ \\
2. \> $N_0:f(\underline{I}) \Rightarrow_{C_{f_1}} ...\ N_2:f(\underline{X}) \Rightarrow_{C_{f_1}}  ...\   
N_5:f(\underline{X_2}) \Rightarrow_{C_{f_1}}$ \\
3.\> $N_0:f(\underline{I}) \Rightarrow_{C_{f_1}} ...\ N_2:f(\underline{X}) \Rightarrow_{C_{f_1}}  ...\   
N_6:f(\underline{X_3}) \Rightarrow_{C_{f_1}}$ \\
4.\> $N_0:f(\underline{I}) \Rightarrow_{C_{f_1}} ...\ 
N_4:f(\underline{X_1}) \Rightarrow_{C_{f_2}} N_5:f(\underline{X_2}) \Rightarrow_{C_{f_2}}     
N_6:f(\underline{X_3}) \Rightarrow_{C_{f_2}}$ \\
5.\> $N_0:f(\underline{I}) \Rightarrow_{C_{f_1}} ...\ N_7:f(\underline{X_1}) \Rightarrow_{C_{f_1}}  ...\   
N_9:f(\underline{X_2}) \Rightarrow_{C_{f_1}}$ \\
6.\> $N_0:f(\underline{I}) \Rightarrow_{C_{f_1}} ...\ N_7:f(\underline{X_1}) \Rightarrow_{C_{f_1}}  ...\   
N_{10}:f(\underline{X_3}) \Rightarrow_{C_{f_1}}$ \\
7.\> $N_0:f(\underline{I}) \Rightarrow_{C_{f_1}} ...\ N_2:f(\underline{X}) \Rightarrow_{C_{f_2}}  ...\   
N_9:f(\underline{X_2}) \Rightarrow_{C_{f_2}} N_{10}:f(\underline{X_3}) \Rightarrow_{C_{f_2}}$ \\
8.\> $N_0:f(\underline{I}) \Rightarrow_{C_{f_1}} ...\ N_{11}:f(\underline{X_2}) \Rightarrow_{C_{f_1}}  ...\   
N_{13}:f(\underline{X_3}) \Rightarrow_{C_{f_1}}$ \\
9.\> $N_0:f(\underline{I}) \Rightarrow_{C_{f_1}} ...\ N_2:f(\underline{X}) \Rightarrow_{C_{f_2}}   
N_7:f(\underline{X_1}) \Rightarrow_{C_{f_2}} ...\ N_{13}:f(\underline{X_3}) \Rightarrow_{C_{f_2}}$ \\
10.\> $N_0:f(\underline{I}) \Rightarrow_{C_{f_1}} ...\ N_2:f(\underline{X}) \Rightarrow_{C_{f_2}}   
N_7:f(\underline{X_1}) \Rightarrow_{C_{f_2}} N_{11}:f(\underline{X_2}) \Rightarrow_{C_{f_2}}$ \\
11.\> $N_0:f(\underline{I}) \Rightarrow_{C_{f_2}} N_{14}:f(\underline{X}) \Rightarrow_{C_{f_1}}  ...\   
N_{16}:f(\underline{X_1}) \Rightarrow_{C_{f_1}} ...\ N_{18}:f(\underline{X_2}) \Rightarrow_{C_{f_1}}$ \\
12.\> $N_0:f(\underline{I}) \Rightarrow_{C_{f_2}} N_{14}:f(\underline{X}) \Rightarrow_{C_{f_1}}  ...\   
N_{16}:f(\underline{X_1}) \Rightarrow_{C_{f_1}} ...\ N_{19}:f(\underline{X_3}) \Rightarrow_{C_{f_1}}$ \\
13.\> $N_0:f(\underline{I}) \Rightarrow_{C_{f_2}} ...\   
N_{18}:f(\underline{X_2}) \Rightarrow_{C_{f_2}} N_{19}:f(\underline{X_3}) \Rightarrow_{C_{f_2}}$ \\
14.\> $N_0:f(\underline{I}) \Rightarrow_{C_{f_2}} N_{14}:f(\underline{X}) \Rightarrow_{C_{f_1}}  ...\   
N_{20}:f(\underline{X_2}) \Rightarrow_{C_{f_1}} ...\ N_{22}:f(\underline{X_3}) \Rightarrow_{C_{f_1}}$ \\
15.\> $N_0:f(\underline{I}) \Rightarrow_{C_{f_2}} ...\ N_{16}:f(\underline{X_1}) \Rightarrow_{C_{f_2}}  ...\   
N_{22}:f(\underline{X_3}) \Rightarrow_{C_{f_2}}$ \\
16.\> $N_0:f(\underline{I}) \Rightarrow_{C_{f_2}} ...\ N_{16}:f(\underline{X_1}) \Rightarrow_{C_{f_2}}   
N_{20}:f(\underline{X_2}) \Rightarrow_{C_{f_2}}$ \\
17.\> $N_0:f(\underline{I}) \Rightarrow_{C_{f_2}} ...\ N_{23}:f(\underline{X_1}) \Rightarrow_{C_{f_1}}  ...\   
N_{25}:f(\underline{X_2}) \Rightarrow_{C_{f_1}} ...\ N_{27}:f(\underline{X_3}) \Rightarrow_{C_{f_1}}$ \\
18.\> $N_0:f(\underline{I}) \Rightarrow_{C_{f_2}} N_{14}:f(\underline{X}) \Rightarrow_{C_{f_2}}  ...\   
N_{27}:f(\underline{X_3}) \Rightarrow_{C_{f_2}}$ \\
19.\> $N_0:f(\underline{I}) \Rightarrow_{C_{f_2}} N_{14}:f(\underline{X}) \Rightarrow_{C_{f_2}}  ...\   
N_{25}:f(\underline{X_2}) \Rightarrow_{C_{f_2}}$ \\
20.\> $N_0:f(\underline{I}) \Rightarrow_{C_{f_2}} N_{14}:f(\underline{X}) \Rightarrow_{C_{f_2}}  ...\   
N_{23}:f(\underline{X_1}) \Rightarrow_{C_{f_2}}$ 
\end{tabbing}
Since there is no derivation satisfying 
the conditions of LP-check while violating the term-size decrease condition,
Algorithm \ref{alg1} ends with an answer {\em predicted-terminating} for the query   
$Q_0$. This again is a correct prediction.
\begin{figure*}[htb]
\begin{tabbing}
\input{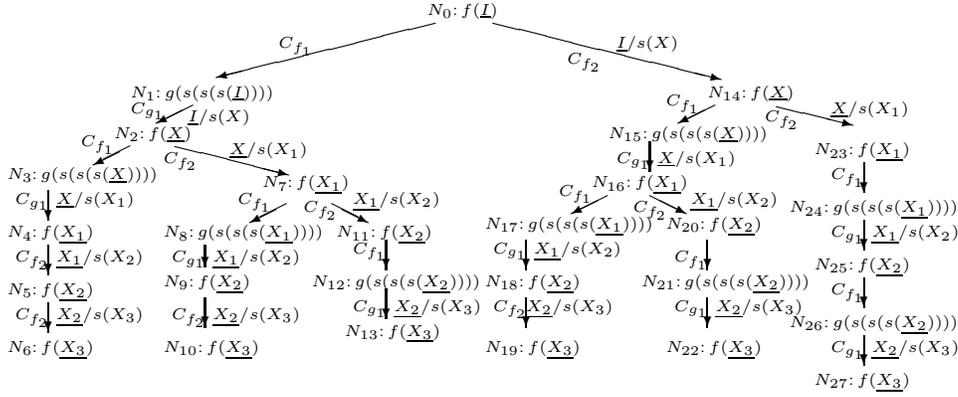}
\end{tabbing}
\caption{The moded generalized SLDNF-tree of $P_6$ generated by Algorithm \ref{alg1}.}\label{fig-incomplete2} 
\end{figure*}  
\end{example} 

Choosing $r=3$ for Algorithm \ref{alg1}, we are able to obtain correct predictions
for all benchmark programs of the Termination Competition 2007 (see Section \ref{implementation}).
However, we should remark that 
due to the undecidability of the termination problem, 
there exist cases that choosing $r=3$ will lead to an incorrect prediction. 
Consider the following carefully crafted logic program:
\begin{tabbing} 
$\qquad$ \= $P_7:$ $\quad$ \= $p(f(X),Y) \leftarrow  p(X,s(Y)).$ \`$C_{p_1}$\\
\> \>      $p(Z, \underbrace{s(s(...s}_{100 \ items}(0)...))) \leftarrow q.$ \`$C_{p_2}$\\
\>   \>    $q \leftarrow  q.$  \`$C_{q_1}$
\end{tabbing} 
$P_7$ does not terminate for a moded query $Q_0 = p({\cal I}, 0)$, as there is an infinite derivation 
\[N_0:p(\underline{I}, 0) \Rightarrow_{C_{p_1}}  ...\ N_{101}:q \Rightarrow_{C_{q_1}} N_{102}:q \Rightarrow_{C_{q_1}} ...\] 
(see Figure \ref{fig-eg5}) which satisfies conditions (i) and (ii) of Theorem \ref{th-main},
where for any $j \geq 101$, $G_{j+1}$ is a loop goal of $G_j$.
Note that for any repetition number $r$ with $3\leq r\leq 100$, the prefix ending at $N_{r-1}$ 
satisfies both the conditions of LP-check and the term-size decrease property,
where for any $j$ with $0\leq j < r-1$, $G_{j+1}$ is a loop goal of $G_j$. However,
for any $r> 100$, a prefix ending at $N_{100+r}$ 
will be encountered, which satisfies 
the conditions of LP-check but violates the term-size decrease condition,
where for any $j$ with $101\leq j < 100+r$, $G_{j+1}$ is a loop goal of $G_j$.
Therefore, Algorithm \ref{alg1} will return {\em predicted-terminating} for $Q_0$ 
unless $r$ is set above 100.
\begin{figure}[htb]
\begin{center}
\setlength{\unitlength}{3947sp}%
\begingroup\makeatletter\ifx\SetFigFont\undefined%
\gdef\SetFigFont#1#2#3#4#5{%
  \reset@font\fontsize{#1}{#2pt}%
  \fontfamily{#3}\fontseries{#4}\fontshape{#5}%
  \selectfont}%
\fi\endgroup%
\begin{picture}(1721,3591)(451,-3406)
\thicklines
{\color[rgb]{0,0,0}\multiput(1651,-1036)(0.00000,-60.00000){6}{\makebox(6.6667,10.0000){\SetFigFont{10}{12}{\rmdefault}{\mddefault}{\updefault}.}}
}%
\thinlines
{\color[rgb]{0,0,0}\put(1651, 14){\vector( 0,-1){300}}
}%
{\color[rgb]{0,0,0}\put(1651,-511){\vector( 0,-1){300}}
}%
{\color[rgb]{0,0,0}\put(1651,-2386){\vector( 0,-1){300}}
}%
{\color[rgb]{0,0,0}\put(1651,-2911){\vector( 0,-1){300}}
}%
{\color[rgb]{0,0,0}\put(1651,-1681){\vector( 0,-1){480}}
}%
{\color[rgb]{0,0,0}\put(1351,-1681){\vector(-3,-2){720}}
}%
\put(1276, 89){\makebox(0,0)[lb]{\smash{{\SetFigFont{9}{10.8}{\rmdefault}{\mddefault}{\updefault}{\color[rgb]{0,0,0}$N_0$:  $p(\underline{I}, 0)$}%
}}}}
\put(1276,-1486){\makebox(0,0)[lb]{\smash{{\SetFigFont{9}{10.8}{\rmdefault}{\mddefault}{\updefault}{\color[rgb]{0,0,0}$N_{100}$:  $p(\underline{X_{100}}, \underbrace{s(s(\ ...\ s}_{100 \ items}(0)\ ...\ )))$ }%
}}}}
\put(1276,-436){\makebox(0,0)[lb]{\smash{{\SetFigFont{9}{10.8}{\rmdefault}{\mddefault}{\updefault}{\color[rgb]{0,0,0}$N_1$:  $p(\underline{X_1},s(0))$ }%
}}}}
\put(1276,-961){\makebox(0,0)[lb]{\smash{{\SetFigFont{9}{10.8}{\rmdefault}{\mddefault}{\updefault}{\color[rgb]{0,0,0}$N_2$:  $p(\underline{X_2},s(s(0)))$ }%
}}}}
\put(1351,-136){\makebox(0,0)[lb]{\smash{{\SetFigFont{7}{8.4}{\rmdefault}{\mddefault}{\updefault}{\color[rgb]{0,0,0}$C_{p_1}$}%
}}}}
\put(1351,-661){\makebox(0,0)[lb]{\smash{{\SetFigFont{7}{8.4}{\rmdefault}{\mddefault}{\updefault}{\color[rgb]{0,0,0}$C_{p_1}$}%
}}}}
\put(1726,-136){\makebox(0,0)[lb]{\smash{{\SetFigFont{7}{8.4}{\rmdefault}{\mddefault}{\updefault}{\color[rgb]{0,0,0}$\theta_0 = \{\underline{I}/f(X_1),Y_1/0\}$}%
}}}}
\put(1726,-661){\makebox(0,0)[lb]{\smash{{\SetFigFont{7}{8.4}{\rmdefault}{\mddefault}{\updefault}{\color[rgb]{0,0,0}$\theta_1 = \{\underline{X_1}/f(X_2),Y_2/s(0)\}$}%
}}}}
\put(1351,-1186){\makebox(0,0)[lb]{\smash{{\SetFigFont{7}{8.4}{\rmdefault}{\mddefault}{\updefault}{\color[rgb]{0,0,0}$C_{p_1}$}%
}}}}
\put(1276,-2836){\makebox(0,0)[lb]{\smash{{\SetFigFont{9}{10.8}{\rmdefault}{\mddefault}{\updefault}{\color[rgb]{0,0,0}$N_{102}$:  $q$ }%
}}}}
\put(1276,-2311){\makebox(0,0)[lb]{\smash{{\SetFigFont{9}{10.8}{\rmdefault}{\mddefault}{\updefault}{\color[rgb]{0,0,0}$N_{101}$:  $q$ }%
}}}}
\put(1351,-2011){\makebox(0,0)[lb]{\smash{{\SetFigFont{7}{8.4}{\rmdefault}{\mddefault}{\updefault}{\color[rgb]{0,0,0}$C_{p_2}$}%
}}}}
\put(1726,-2011){\makebox(0,0)[lb]{\smash{{\SetFigFont{7}{8.4}{\rmdefault}{\mddefault}{\updefault}{\color[rgb]{0,0,0}$\theta_{100} = \{Z_1/\underline{X_{100}}\}$}%
}}}}
\put(1726,-2536){\makebox(0,0)[lb]{\smash{{\SetFigFont{7}{8.4}{\rmdefault}{\mddefault}{\updefault}{\color[rgb]{0,0,0}$C_{q_1}$}%
}}}}
\put(1726,-3061){\makebox(0,0)[lb]{\smash{{\SetFigFont{7}{8.4}{\rmdefault}{\mddefault}{\updefault}{\color[rgb]{0,0,0}$C_{q_1}$}%
}}}}
\put(1576,-3361){\makebox(0,0)[lb]{\smash{{\SetFigFont{10}{12.0}{\rmdefault}{\mddefault}{\updefault}{\color[rgb]{0,0,0}$\infty$ }%
}}}}
\put(751,-1861){\makebox(0,0)[lb]{\smash{{\SetFigFont{7}{8.4}{\rmdefault}{\mddefault}{\updefault}{\color[rgb]{0,0,0}$C_{p_1}$}%
}}}}
\put(451,-2311){\makebox(0,0)[lb]{\smash{{\SetFigFont{10}{12.0}{\rmdefault}{\mddefault}{\updefault}{\color[rgb]{0,0,0}$\infty$ }%
}}}}
\end{picture}%
\end{center}
\caption{The moded generalized SLDNF-tree of $P_7$ 
with a moded query $p({\cal I}, 0)$.}\label{fig-eg5} 
\end{figure}

The following result shows that choosing a sufficiently large 
repetition number guarantees the correct prediction 
for non-terminating programs.

\begin{theorem}
\label{prec-alg}
Let $P$ be a logic program and $Q$ be a query such that $P$ is non-terminating for $Q$.
There always exists a number $R$ such that Algorithm \ref{alg1} with any repetition number $r\geq R$ 
produces the answer {\em predicted-non-terminating}.
\end{theorem}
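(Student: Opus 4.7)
The plan is to combine Corollary~\ref{cor-main} (which supplies an infinite witness derivation in $MT_{Q}$) with the contrapositive of Theorem~\ref{th-term-dec} (which forces the term-size decrease property to fail at some finite depth) and then argue that the termination-prediction algorithm cannot help but observe that failure.

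First I would invoke the assumed non-termination: by Corollary~\ref{cor-main}, $MT_{Q}$ contains an infinite derivation $D$ of form (\ref{eq2}) satisfying conditions (i) and (ii) of Theorem~\ref{th-main}. Condition~(ii) says that $D$ contains no infinite substitution chain of form (\ref{ins-chain}) for any input variable, so the contrapositive of Theorem~\ref{th-term-dec} yields some $R_{0}\geq 2$ for which $D$ has \emph{no} prefix of form (\ref{eq3}) with $R_{0}$ loop goals and the term-size decrease property. I would set $R := R_{0}$.

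The key intermediate step is a monotonicity claim: for every $r\geq R$, no prefix of $D$ of form (\ref{eq3}) with $r$ loop goals has the term-size decrease property. If some $N_{g_{1}},\ldots,N_{g_{r}}$ were such a prefix, its truncation $N_{g_{1}},\ldots,N_{g_{R}}$ would still be a valid prefix of form (\ref{eq3}) --- conditions (a) and (b) of LP-check are inherited, the ancestor/loop-goal chain and the common clause $C_{k}$ are unchanged --- and the substitution witnesses for term-size decrease at positions $1,\ldots,R-1$ carry over verbatim from the longer prefix. This contradicts the choice of $R$.

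Finally I would analyze a run of Algorithm~\ref{alg1} with any $r\geq R$ on $Q$. As the algorithm constructs $MT_{Q}$, there are only two ways its exploration of $D$ can end: either the algorithm eventually returns \emph{predicted-non-terminating} from some prefix (the desired outcome), or it prunes $D$ at a prefix satisfying LP-check \emph{with} term-size decrease. Since LP-check is complete (Theorem~\ref{check-comp}), the infinite derivation $D$ must be cut at some prefix of form (\ref{eq3}); but by the monotonicity step above, every such prefix along $D$ fails term-size decrease, so Algorithm~\ref{alg1} is forced into the \emph{predicted-non-terminating} branch rather than the cut branch. The main obstacle I expect is pinning down this last step cleanly: one must check that cuts performed in unrelated branches cannot indirectly prune $D$ (they cannot, since the cut ``skip $C_{k}$ at $N_{g_{r}}$'' is local to the prefix it is triggered on) and that no premature LP-check trigger along $D$ with fewer than $r$ loop goals can short-circuit the argument (it cannot, by the repetition requirement built into Definition~\ref{mq-check}).
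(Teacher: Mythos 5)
Your proposal is correct and is essentially the paper's own proof run in contrapositive form: both arguments rest on exactly the same two pillars, namely Corollary~\ref{cor-main} supplying an infinite derivation in $MT_Q$ satisfying conditions (i) and (ii) of Theorem~\ref{th-main}, and Theorem~\ref{th-term-dec} showing that the term-size decrease property cannot persist for all repetition numbers along such a derivation. The paper phrases this as a proof by contradiction (assuming \emph{predicted-terminating} answers for arbitrarily large $r$) and leaves implicit the two points you make explicit --- the truncation monotonicity of prefixes of form~(\ref{eq3}) and the run analysis showing that cuts triggered elsewhere cannot prune $D$, so the algorithm must actually witness a prefix of $D$ violating term-size decrease --- so your version is a somewhat more careful rendering of the same argument rather than a different route.
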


\begin{proof}
Let us assume the contrary. That is, we assume 
that for any number $N$, there exists a larger number $r$
such that Algorithm \ref{alg1} for $P$ with query $Q$ and 
repetition number $r$ produces the answer {\em predicted-terminating}.
This means that for all $r\geq 2$
the prefix of form (\ref{eq3}) of each infinite branch $D$
in the moded generalized SLDNF-tree $MT_Q$  
satisfies the term-size decrease property. According 
to Theorem \ref{th-term-dec}, $D$ has an infinite chain of 
substitutions of form (\ref{ins-chain})
for some input variable $I$ at $Q$. 
This means that $D$ does not satisfy condition (ii) of Theorem \ref{th-main}. 
However, since $P$ is non-terminating for $Q$, 
by Corollary \ref{cor-main} $MT_Q$  
has at least one infinite branch of form (\ref{eq2}) 
satisfying conditions (i) and (ii) of Theorem \ref{th-main}. 
We then have a contradiction
and thus conclude the proof. 
\end{proof}

The same result applies for any concrete query $Q$. That is,
there always exists a number $R$ such that 
Algorithm \ref{alg1} with any $r\geq R$ 
produces the answer {\em terminating} or 
{\em predicted-terminating} when $P$ is terminating for $Q$. 
The proof for this is simple.
When $P$ is terminating for a concrete query $Q$, the (moded) generalized SLDNF-tree for $Q$ is finite.
Let $R$ be the number of nodes of the longest branch in the tree. For any $r\geq R$,
Algorithm \ref{alg1} will produce the answer 
{\em terminating} or {\em predicted-terminating},
since no branch will be cut by LP-check. 

However, whether the above claim holds for any moded query $Q$
when $P$ is terminating for $Q$ remains an interesting 
open problem.

\section{Experimental Results}
\label{implementation}
We have evaluated our termination prediction technique on a benchmark of 301 Prolog programs. 
In this section, we first describe the benchmark and our experimental results using a straightforward 
implementation of Algorithm \ref{alg1}. Then, we define a pruning 
technique to reduce the size of moded generalized SLDNF-trees generated for our prediction.
Finally, we make a comparison between the state-of-the-art 
termination and non-termination analyzers and our termination prediction tool.

Our benchmark consists of 301 programs with moded queries from the Termination 
Competition 2007 
(http://www.lri.fr/\verb+~+marche/termination-competition/2007). Only 23 programs of the competition
are omitted because they contain non-logical operations such as arithmetics (for 
most of these programs neither termination nor non-termination 
could be shown by any of the tools in the competition). The 
benchmark contains 244 terminating programs and 57 non-terminating ones.
The most accurate termination analyzer of the competition, AProVE \cite{GST06}, 
proves termination of 238 benchmark programs. The non-termination 
analyzer NTI \cite{PM06,Pay06} proves non-termination of 42 programs. Because the prediction 
does not produce a termination or non-termination proof,
our goal is to outperform the analyzers of the competition
by providing a higher number of correct predictions.

We implemented our tool, TPoLP: Termination Prediction of Logic Programs,
in SWI-prolog (http://www.swi-prolog.org). TPoLP is freely available from 
http://www.cs.kuleuven .be/\verb!~!dean. The moded 
generalized SLDNF-tree is generated following 
Algorithm \ref{alg1}. 
It is initialized with the moded query and extended
until all branches are cut or a timeout occurs. To improve the 
efficiency of the analysis, a number of optimalizations were implemented, 
such as constant time access to the nodes and the arcs of the derivations. 
The experiments have been performed 
using SWI-Prolog 5.6.40 (http://www.swi-prolog.org), on an Intel Core2 Duo 2,33GHz, 2 Gb RAM.

Table \ref{table:nopruning} gives an overview of the predictions with repetition
numbers $r=2$, $r=3$, and $r=4$. As we mentioned earlier,
$r=2$ does not suffice because 
some of the predictions are wrong and we want high reliable predictions. When 
$r$ is set above two, all predictions made for the benchmark are correct.
This shows that in practice, there is no need to increase the repetition number any further.
\begin{table}[htb]
\begin{tabular}{|r||c|c|c|}  \cline{1-4}
               &	$\quad r = 2	\quad$ & 	$\quad r = 3	\quad$ &$\quad r = 4\quad$ \\ \cline{1-4}
\cline{1-4}
Correct predictions	&	 291 	&	 271 	&	 234  \\
Wrong predictions	&	 7 		&	 0 		&	 0  \\
Out of time/memory	&	 3 		&	 30 	&	 67  \\
Average time (Sec)	&	 1.7 	&	 24.9 	&	 59.3  \\ 
\cline{1-4}
\end{tabular}
\caption{{\em Prediction with different repetition numbers.}}
\label{table:nopruning}
\end{table}

When we increase the repetition number, the cost of prediction increases as well.
Table \ref{table:nopruning} shows that for $r=3$,
about 10\% of the programs break the time limit of four minutes, 
and for $r=4$, about 20\% break the limit.

The component of the algorithm taking most of the time differs from program to program.
When a lot of branches are cut by LP-check, constructing the LP cuts is usually
the bottleneck. For programs with a low amount of LP cuts, most of the time is spent 
on constructing the SLDNF-derivations. Some of the derivations count more than a million 
nodes. To overcome such performance issues, we implemented the following 
pruning technique on loop goals.

\begin{definition} [Pruning variants]
\label{prune-1}
Let $G_2$ be a loop goal of $G_1$ for which the selected subgoals are
variants. Then, all clauses that 
have already been applied at $G_2$ are skipped at $G_1$ during backtracking.
\end{definition}

The idea of this pruning is simple. For loop goals with variant
selected subgoals, applying the non-looping clauses to them will generate
the same derivations below them with the same termination properties.
Therefore, the derivations already generated below $G_2$
need not be regenerated at $G_1$ during backtracking.

For the sake of efficiency, in our implementation we determine variants 
by checking that they have the same symbol string.

Consider Example \ref{incomplete2} again. When the above pruning mechanism
is applied, Algorithm \ref{alg1} will simplify the moded generalized SLDNF-tree
of Figure \ref{fig-incomplete2} into Figure \ref{figure:pruning}.
The pruning takes place at $N_2$ and $N_0$, where $G_4$ is a loop goal of $G_2$ 
that is a loop goal of $G_0$ and their selected subgoals are variants.
\begin{figure*}[ht]
\setlength{\unitlength}{2723sp}%
\begingroup\makeatletter\ifx\SetFigFont\undefined%
\gdef\SetFigFont#1#2#3#4#5{%
  \reset@font\fontsize{#1}{#2pt}%
  \fontfamily{#3}\fontseries{#4}\fontshape{#5}%
  \selectfont}%
\fi\endgroup%
\begin{picture}(6838,3190)(901,-3017)
\thinlines
{\color[rgb]{0,0,0}\put(2021,-1111){\vector(-2,-1){336}}
}%
{\color[rgb]{0,0,0}\put(2431,-1111){\vector( 4,-1){1037.647}}
}%
{\color[rgb]{0,0,0}\put(1276,-1516){\vector( 0,-1){270}}
}%
{\color[rgb]{0,0,0}\put(1276,-2011){\vector( 0,-1){270}}
}%
{\color[rgb]{0,0,0}\put(1276,-2536){\vector( 0,-1){270}}
}%
{\color[rgb]{0,0,0}\put(5326, 14){\vector( 4,-1){2078.823}}
}%
{\color[rgb]{0,0,0}\put(4801, 14){\vector(-4,-1){1994.118}}
}%
{\color[rgb]{0,0,0}\put(2596,-736){\vector(-2,-1){336}}
}%
\put(6826,-511){\makebox(0,0)[lb]{\smash{{\SetFigFont{14}{16.8}{\rmdefault}{\mddefault}{\updefault}{\color[rgb]{0,0,0}X}%
}}}}
\put(3076,-1411){\makebox(0,0)[lb]{\smash{{\SetFigFont{14}{16.8}{\rmdefault}{\mddefault}{\updefault}{\color[rgb]{0,0,0}X}%
}}}}
\put(1576,-1111){\makebox(0,0)[lb]{\smash{{\SetFigFont{6}{7.2}{\rmdefault}{\mddefault}{\updefault}{\color[rgb]{0,0,0}$C_{f_1}$}%
}}}}
\put(901,-1411){\makebox(0,0)[lb]{\smash{{\SetFigFont{6}{7.2}{\rmdefault}{\mddefault}{\updefault}{\color[rgb]{0,0,0}$N_3$:  $g(s(s(s(\underline{X}))))$}%
}}}}
\put(2326,-1261){\makebox(0,0)[lb]{\smash{{\SetFigFont{6}{7.2}{\rmdefault}{\mddefault}{\updefault}{\color[rgb]{0,0,0}$C_{f_2}$}%
}}}}
\put(1876,-1036){\makebox(0,0)[lb]{\smash{{\SetFigFont{6}{7.2}{\rmdefault}{\mddefault}{\updefault}{\color[rgb]{0,0,0}$N_2$:  $f(\underline{X})$}%
}}}}
\put(1351,-1636){\makebox(0,0)[lb]{\smash{{\SetFigFont{6}{7.2}{\rmdefault}{\mddefault}{\updefault}{\color[rgb]{0,0,0}$\underline{X}/s(X_1)$}%
}}}}
\put(976,-1636){\makebox(0,0)[lb]{\smash{{\SetFigFont{6}{7.2}{\rmdefault}{\mddefault}{\updefault}{\color[rgb]{0,0,0}$C_{g_1}$}%
}}}}
\put(901,-2986){\makebox(0,0)[lb]{\smash{{\SetFigFont{6}{7.2}{\rmdefault}{\mddefault}{\updefault}{\color[rgb]{0,0,0}$N_6$:  $f(\underline{X_3})$}%
}}}}
\put(901,-2461){\makebox(0,0)[lb]{\smash{{\SetFigFont{6}{7.2}{\rmdefault}{\mddefault}{\updefault}{\color[rgb]{0,0,0}$N_5$:  $f(\underline{X_2})$}%
}}}}
\put(1351,-2686){\makebox(0,0)[lb]{\smash{{\SetFigFont{6}{7.2}{\rmdefault}{\mddefault}{\updefault}{\color[rgb]{0,0,0}$\underline{X_2}/s(X_3)$}%
}}}}
\put(976,-2686){\makebox(0,0)[lb]{\smash{{\SetFigFont{6}{7.2}{\rmdefault}{\mddefault}{\updefault}{\color[rgb]{0,0,0}$C_{f_2}$}%
}}}}
\put(976,-2161){\makebox(0,0)[lb]{\smash{{\SetFigFont{6}{7.2}{\rmdefault}{\mddefault}{\updefault}{\color[rgb]{0,0,0}$C_{f_2}$}%
}}}}
\put(1351,-2161){\makebox(0,0)[lb]{\smash{{\SetFigFont{6}{7.2}{\rmdefault}{\mddefault}{\updefault}{\color[rgb]{0,0,0}$\underline{X_1}/s(X_2)$}%
}}}}
\put(901,-1936){\makebox(0,0)[lb]{\smash{{\SetFigFont{6}{7.2}{\rmdefault}{\mddefault}{\updefault}{\color[rgb]{0,0,0}$N_4$:  $f(\underline{X_1})$}%
}}}}
\put(3376,-211){\makebox(0,0)[lb]{\smash{{\SetFigFont{6}{7.2}{\rmdefault}{\mddefault}{\updefault}{\color[rgb]{0,0,0}$C_{f_1}$}%
}}}}
\put(4726, 89){\makebox(0,0)[lb]{\smash{{\SetFigFont{6}{7.2}{\rmdefault}{\mddefault}{\updefault}{\color[rgb]{0,0,0}$N_0$:  $f(\underline{I})$}%
}}}}
\put(2026,-661){\makebox(0,0)[lb]{\smash{{\SetFigFont{6}{7.2}{\rmdefault}{\mddefault}{\updefault}{\color[rgb]{0,0,0}$N_1$:  $g(s(s(s(\underline{I}))))$}%
}}}}
\put(2551,-886){\makebox(0,0)[lb]{\smash{{\SetFigFont{6}{7.2}{\rmdefault}{\mddefault}{\updefault}{\color[rgb]{0,0,0}$\underline{I}/s(X)$}%
}}}}
\put(6001,-361){\makebox(0,0)[lb]{\smash{{\SetFigFont{6}{7.2}{\rmdefault}{\mddefault}{\updefault}{\color[rgb]{0,0,0}$C_{f_2}$}%
}}}}
\put(2026,-811){\makebox(0,0)[lb]{\smash{{\SetFigFont{6}{7.2}{\rmdefault}{\mddefault}{\updefault}{\color[rgb]{0,0,0}$C_{g_1}$}%
}}}}
\put(7201,-736){\makebox(0,0)[lb]{\smash{{\SetFigFont{8}{9.6}{\rmdefault}{\mddefault}{\updefault}{\color[rgb]{0,0,0}Pruned}%
}}}}
\put(3376,-1561){\makebox(0,0)[lb]{\smash{{\SetFigFont{8}{9.6}{\rmdefault}{\mddefault}{\updefault}{\color[rgb]{0,0,0}Pruned}%
}}}}
\end{picture}%
\caption{Figure \ref{fig-incomplete2} is simplified with pruning.}
\label{figure:pruning}
\end{figure*}

A stronger version of the above pruning mechanism can be obtained 
by removing the condition in Definition \ref{prune-1}:
for which the selected subgoals are variants. That is,
we do not require the selected subgoals of loop goals to be variants.
We call this version {\em Pruning loop goals}.

Table \ref{table:pruning} gives an overview of our predictions with $r=3$ 
as the repetition number in the cases of no pruning, pruning variants, 
and pruning loop goals. The table shows that pruning is a good 
tradeoff between the accuracy and the efficiency of the prediction.
When applying the variants pruning mechanism the size of the derivations drops
considerably, while all predictions for the benchmark are still correct.
Due to the pruning, more than 98\% of the predictions
finish within the time limit.
Applying the loop goals pruning mechanism leads to a greater reduction
in the size of derivations. However,
in this case we sacrifice accuracy: three non-terminating programs
are predicted to be terminating. 
\begin{table}[htb]
 \begin{tabular}{|r||c|c|c|}
\cline{1-4}
			&	No pruning	&	Pruning variants	& 	Pruning loop goals	\\
\cline{1-4}
Correct predictions 	&	271		&		296		&		297		\\
Wrong predictions	&	0		&		0		&		3		\\
Out of time/memory	&	30		&		5		&		1		\\	
Average time (Sec)	&	24.9		&		4.4		&		0.05		\\	
\cline{1-4}
\end{tabular}
\caption{{\em The effect of pruning.}}
\label{table:pruning}
\end{table}

Table \ref{table:comparison} gives a comparison between our predictions
(with $r=3$ and the variants pruning mechanism) and the proving results of the
state-of-the-art termination and non-termination analyzers.
Note that our tool, TPoLP, is the only
tool which analyzes both for 
termination and non-termination of logic programs. The results are very encouraging.
We correctly predict the termination property of all 
benchmark programs except for five programs which broke the time limit.
It is also worth noticing that for all programs of the benchmark,
either an existing analyzer finds a termination or non-termination proof,
or a correct prediction is made by our tool. This shows that
our prediction tool can be a very useful addition to any termination
or non-termination analyzer.

\begin{table}[htb]
 \begin{tabular}{|r||c|c|c|c|c|}
\cline{1-6}
	& \multirow{2}{*}{TPoLP prediction} &	\multicolumn{4}{c|} {$\ $Termination/non-termination proof $\ $}	\\ \cline{3-6}
       &   &     AProVE 	& NTI 	& Polytool  & TALP  \\
\cline{1-6}
Answer {\em Terminating} (244)  	&	239		&	238	&	0	&	206		&	164	\\
Answer {\em Non-terminating}	(57) &	57		&	0	&	42	&	0		&	0	\\
\cline{1-6}
\end{tabular}
\caption{{\em Comparison between TPoLP and the existing analyzers.}}
\label{table:comparison}
\end{table}

\section{Related Work}
\label{sec5}
Most existing approaches to the termination problem are {\em norm-} or {\em level 
mapping-based} in the sense that they perform termination analysis by building from 
the source code of a logic program some well-founded 
termination conditions/constraints in terms of 
norms (i.e. term sizes of atoms of clauses), level mappings,
interargument size relations and/or instantiation dependencies, which when solved, 
yield a termination proof (see, e.g., \citeN{DD93} for a survey and more recent papers 
\cite{Apt93,BCRE02,BCGGV05,DDV99,GC05,LS97,Mar96-1,MN01}). 
Another main stream is {\em transformational} approaches,
which transform a logic program into a term rewriting
system (TRS) and then analyze the termination property of
the resulting TRS instead 
\cite{AM93,AZ96,GST06,Mar96,OCM00,RKS98,SGST06,vanR97}. 
All of these approaches are used for a termination proof;
i.e., they compute sufficient termination conditions which once 
satisfied, lead to a positive conclusion {\em terminating}.
Recently, \citeN{PM06} and \citeN{Pay06} propose an approach to 
computing sufficient non-termination conditions
which when satisfied, lead to a negative conclusion {\em non-terminating}. 
A majority of these termination/non-termination proof approaches
apply only to positive logic programs.

Our approach presented in this paper differs significantly
from existing termination analysis approaches.
First, we do not make a termination proof, nor do we 
make a non-termination proof. Instead, we make a termination 
prediction (see Figure \ref{framework}) $-$ a heuristic approach to attacking the 
undecidable termination problem.
Second, we do not rely on static norms or level 
mappings, nor do we transform a logic program to a term rewriting system.
Instead, we focus on detecting infinite SLDNF-derivations
with the understanding that a logic program is terminating
for a query if and only if there is no infinite SLDNF-derivation
with the query. We have established a necessary and sufficient
characterization of infinite
SLDNF-derivations with arbitrary (concrete or moded) queries,
introduced a new loop checking mechanism,
and developed an algorithm that predicts termination of 
general logic programs with arbitrary queries by identifying
potential infinite SLDNF-derivations.
Since the algorithm implements the necessary and sufficient
conditions (the characterization) of
an infinite SLDNF-derivation, its prediction is very effective.
Our experimental results show that except for five programs
which break the time limit, our prediction is $100\%$
correct for all 296 benchmark programs of the Termination Competition 2007,
of which eighteen programs cannot be proved by any of the existing
state-of-the-art analyzers like 
AProVE07 \cite{GST06}, NTI \cite{PM06,Pay06}, Polytool \cite{ND05,NDGS06} and TALP \cite{OCM00}. 
  
Our termination prediction approach uses a loop checking mechanism (a loop check)
to implement a characterization of infinite SLDNF-derivations. 
Well-known loop checks include VA-check \cite{BAK91,VG87}, 
OS-check \cite{BDM92,MD96,Sa93}, and VAF-checks \cite{shen97,shen001}.
All apply to positive logic programs. In particular,
VA-check applies to function-free logic programs, where an 
infinite derivation is characterized by a sequence of selected {\em variant subgoals}. 
OS-check identifies an infinite derivation with a sequence of selected subgoals 
with the same predicate symbol {\em whose sizes do not decrease}.
VAF-checks take a sequence of selected {\em expanded variant subgoals} 
as major characteristics of an infinite derivation. Expanded variant subgoals are variant
subgoals except that some terms may grow bigger. In this paper, 
a new loop check mechanism, LP-check, is introduced in which 
an infinite derivation is identified with a sequence of {\em loop goals}.
Most importantly, enhancing
LP-check with the term-size decrease property
leads to the first loop check for moded queries.

\section{Conclusion and Future Work}
\label{sec6}
We have presented a heuristic framework
for attacking the undecidable termination problem of
logic programs, as an alternative to current termination/non-termination
proof approaches. We introduced an idea of termination prediction,
established a necessary and sufficient
characterization of infinite
SLDNF-derivations with arbitrary (concrete or moded) queries,
built a new loop checking mechanism,
and developed an algorithm that predicts termination of 
general logic programs with arbitrary queries.
We have implemented a termination prediction tool, TPoLP, and obtained 
quite satisfactory experimental results. Except for five programs
which break the experiment time limit, our prediction is $100\%$
correct for all 296 benchmark programs of the Termination Competition 2007. 

Our prediction approach can be used standalone, e.g., it may
be incorporated into Prolog as a termination debugging tool;
or it is used along with some termination/non-termination proof tools
(see the framework in Figure \ref{framework}). 

Limitations of the current prediction approach include that it cannot handle
floundering queries and programs with non-logical operators.
To avoid floundering, 
we assume that no negative subgoals containing either ordinary or input variables are
selected at any node of a moded generalized SLDNF-tree
(violation of the assumption can easily be checked
in the course of constructing generalized SLDNF-trees).
This assumption seems able to be relaxed by allowing input variables 
to occur in selected negative subgoals. This makes us able to predict termination of
programs like
\begin{tabbing} 
$\qquad$ \= $P:$ $\quad$ \= $p(X)\leftarrow \neg q(X)$. \\
\> \> $q(a)\leftarrow q(a)$.     
\end{tabbing} 
which is non-terminating for the moded query $p({\cal I})$.

Our future work includes further improvement of the prediction efficiency
of TPoLP. As shown in Table \ref{table:pruning},
there are five benchmark programs breaking our experiment time limit. 
We are also considering extensions of the proposed 
termination prediction to typed queries \cite{BCGGV05}
and to logic programs with tabling \cite{chen96,shen004,VDK001}.

\section{Acknowledgments}
We would like to thank the anonymous referees for their constructive comments and 
suggestions that helped us improve this work. 
Yi-Dong Shen is supported in part by 
NSFC grants 60673103,
60721061 and 60833001, and by the National High-tech R\&D Program (863 Program).
Dean Voets is supported by the Flemish Fund for Scientific Research
- FWO-project G0561-08.

\bibliographystyle{acmtrans}
\bibliography{tplp4}

\end{document}